\newcounter{dummy} 
\newtheorem{theorem}[dummy]{Theorem}
\newtheorem{proposition}[dummy]{Proposition}
\newtheorem{lemma}[dummy]{Lemma}
\newtheorem{corollary}[dummy]{Corollary}
\newtheorem*{conjecture}{Conjecture}
\theoremstyle{definition}
\newtheorem{definition}[dummy]{Definition}
\theoremstyle{remark}
\pgfplotsset{
    /pgfplots/ybar legend/.style={
        /pgfplots/legend image code/.code={%
        \draw[##1,/tikz/.cd,bar width=6pt,yshift=-0.15em,bar shift=0pt]
        plot coordinates {(0cm,6pt) };}
    }}
\newcommand{\rank}{\mathsf{rank}}
\newcommand{\height}{\mathsf{height}}
\newcommand{\sk}{\mathsf{sk}}
\newcommand{\att}{\mathsf{att}}
\newcommand{\lbl}{\mathsf{lab}}
\newcommand{\ext}{\mathsf{ext}}
\newcommand{\N}{\mathbb{N}}
\newcommand{\HGR}{\mathsf{HGR}}
\newcommand{\lhs}{\mathsf{lhs}}
\newcommand{\rhs}{\mathsf{rhs}}
\newcommand{\sav}{\mathsf{con}}
\newcommand{\refs}{\mathsf{ref}}
\newcommand{\FPequiv}{\cong_{\text{FP}}}
\newcommand{\FP}{\ensuremath{{\text{FP}}}\xspace}
\newcommand{\NP}{\ensuremath{{\mathsf{NP}}}\xspace}
\newcommand{\Occ}{{\mathsf{Occ}}}
\newcommand{\neighbor}{N}
\newcommand{\inNeighbor}{\ensuremath{N^+}}
\newcommand{\outNeighbor}{\ensuremath{N^-}}
\newcommand{\incident}{E}
\newcommand{\HG}{\mathsf{handle}}
\newcommand{\ktree}{$k^2$-tree\xspace}
\newcommand{\ord}{\ensuremath{\omega}\xspace}
\newcommand{\graphrepair}{gRePair\xspace}
\newcommand{\SLOrder}{\leq_\mathsf{{NT}}}
\newcommand{\val}{\mathsf{val}}
\newcommand{\A}{\mathcal{A}}
\newcommand{\PSPACE}{\ensuremath{{\mathsf{PSPACE}}}\xspace}
\newcommand{\EXPTIME}{\ensuremath{{\mathsf{EXPTIME}}}\xspace}
\newcommand{\NL}{\ensuremath{{\mathsf{NL}}}\xspace}
\newcommand{\symb}{\mathsf{symb}}
\newcommand{\dist}{\mathsf{dist}}
\newcommand{\grid}{\mathsf{grid}}
\newcommand{\tf}{\mathsf{tf}}
\newcommand{\ls}{\mathsf{ls}}
\newcommand{\idx}{\mathsf{idx}}
\newcommand{\trans}{\mathsf{trans}}
\newcommand{\sgraph}{\mathsf{s\text{-}graph}}
\newcommand{\tgraph}{\mathsf{t\text{-}graph}}
\newcommand{\nr}{\rho}
\newcommand{\NTSet}{E^{\text{NT}}}
\newcommand{\sib}{\ensuremath{\mathsf{sib}}}
\newcommand{\dt}{\ensuremath{\mathsf{dt}}}
\newcommand{\head}{\ensuremath{\mathsf{head}}}
\renewcommand{\P}{\ensuremath{{\mathsf{P}}}\xspace}
\newcommand{\gRepRun}{\ensuremath{\nu}}
\newcommand{\tblFontSize}{\footnotesize}
\newcommand{\graphFontSize}{\footnotesize}
\newcolumntype{R}{>{\raggedleft\arraybackslash}X}
\definecolor{color1}{HTML}{D7191C}
\definecolor{color2}{HTML}{FDAE61}
\definecolor{color3}{HTML}{ABDDA4}
\definecolor{color4}{HTML}{2B83BA}
\definecolor{l1}{HTML}{E30004}
\definecolor{l2}{HTML}{E9830C}
\definecolor{l3}{HTML}{F0D90D}
\tikzset{
    ncbar angle/.initial=90,
    ncbar/.style={
        to path=(\tikztostart)
        -- ($(\tikztostart)!#1!\pgfkeysvalueof{/tikz/ncbar angle}:(\tikztotarget)$)
        -- ($(\tikztotarget)!($(\tikztostart)!#1!\pgfkeysvalueof{/tikz/ncbar angle}:(\tikztotarget)$)!\pgfkeysvalueof{/tikz/ncbar angle}:(\tikztostart)$)
        -- (\tikztotarget)
    },
    ncbar/.default=0.5cm,
}
\tikzset{square left brace/.style={ncbar=0.2cm}}
\tikzset{square right brace/.style={ncbar=-0.2cm}}
\title{Grammar-Based Graph Compression}
\author{Sebastian Maneth \\ University of Edinburgh \\ smaneth@inf.ed.ac.uk \and Fabian Peternek \\
University of Edinburgh \\ f.peternek@ed.ac.uk}
\date{}
\begin{document}
\maketitle

\begin{abstract}
We present a new graph compressor 
that works by recursively detecting repeated substructures 
and representing them through grammar rules.
We show that 
for a large number of graphs
the compressor obtains smaller representations than other approaches. 
Specific queries such as reachability between two nodes or regular path queries 
can be evaluated in linear time (or quadratic times, respectively), over the grammar, thus
allowing speed-ups proportional to the compression ratio.
\end{abstract}

\section{Introduction}

Graph databases were investigated already some 30~years ago as described by 
Wood~\cite{Wood12_querySurvey}.
Today, with linked data on the web and social network data, there has been a
resurgence of graph databases and graph processing systems. 
Compression is an important technique of dealing with large graph data:
it saves storage space and data transfer time~\cite{DBLP:conf/icdt/Fan12}.
Grammar-based compression is a technique by which even query evaluation time can be saved.
The idea is to compute a small context-free grammar
generating a given object, e.g., a string, tree, or graph. 
Specific queries, e.g. queries performed by a finite-state automaton can be evaluated
in linear time (and one pass) over such a grammar, thus providing speed-ups 
proportional to the compression ratio.
Grammar-based compression has been known for strings and trees (see,
e.g.,~\cite{Lohrey12_stringSLPsurvey,DBLP:conf/dlt/Lohrey15}).
This paper investigates grammar-based graph compression. 
Our main contributions are:
\begin{enumerate}
\item[(1)] we generalize to graphs the RePair compression scheme,
\item[(2)] we experimentally evaluate an implementation of the compression scheme, and
\item[(3)] we present new algorithms for query evaluation over compressed graphs.
\end{enumerate}

Let us discuss these contributions in more detail.
It is well-known that finding a smallest context-free grammar for a given string
is NP-complete~\cite{Charikar05_smallest}. Various approximation algorithms 
have been considered, see~\cite{Charikar05_smallest}. 
A particularly effective and simple such algorithm is 
RePair by Larsson and Moffat~\cite{Larsson00_rePair}.
The idea is to repeatedly replace a most-frequent digram in a string
by a new nonterminal, and to introduce a corresponding rule for the nonterminal.
A digram in a string consists of two consecutive symbols.
For instance, the string
\[
abcabcab
\]
has three occurrences of the digram $ab$,
two occurrences of the digram $bc$, 
and two occurrences of the digram $ca$.
Thus, RePair can replace the digram $ab$ by a new nonterminal, say $A$.
The resulting string $AcAcA$ has two occurrences of the digram $Ac$
which may be replaced by the new nonterminal $B$ to obtain this grammar:
\[
\begin{array}{lcl}
S&\to& BBA\\
B&\to& Ac\\
A&\to& ab
\end{array}
\]
Note that the size of the resulting grammar, i.e., the sum of lengths of the right-hand sides
is $7$, i.e., is by one smaller than the length of the original string.
RePair has been generalized to ranked, ordered trees by 
Lohrey, Mennicke, and Maneth~\cite{Lohrey13_treeRePair}.
For such trees, a digram consists of two nodes and an edge, i.e., a node and its $k$-th child
for some $k$. 
\begin{figure}[!t]
    \centering
    \subfloat[]{\begin{tikzpicture}
    \begin{scope}[every node/.style={circle, draw, inner sep=0pt, minimum size=7pt, node
        distance=50pt}]
        \node[] (So4) {};
        \node[below=of So4] (Su4) {};
        \node[below left=.8 and .5 of So4] (in41) {};
        \node[below=.75 of So4] (in42) {};
        \node[below right=.8 and .5 of So4] (in43) {};
    \end{scope}

    \begin{scope}
        \draw (So4) -- node[above left=-2pt] {\normalsize $a$} (in41);
        \draw (in41) -- node[below left=-2pt] {\normalsize $b$} (Su4);
        \draw (So4) -- node[left=-2pt] {\normalsize $a$} (in42);
        \draw (in42) -- node[left=-2pt] {\normalsize $b$} (Su4);
        \draw (So4) -- node[above right=-2pt] {\normalsize $a$} (in43);
        \draw (in43) -- node[below right=-2pt] {\normalsize $b$} (Su4);
    \end{scope}
    \node[inner sep=0pt, minimum size=7pt, node distance=50pt] (X) {};

    \node[right=.5 of in43, align=center, rectangle] (dr) {digram\\replacement};
    \draw[-stealth'] (dr.west) -- (dr.east);
    
    \begin{scope}[every node/.style={circle, draw, inner sep=0pt, minimum size=7pt, node
        distance=50pt}]
        \node[right=4.25 of So4] (So1) {};
        \node[below=of So1] (Su1) {};
    \end{scope}

    \begin{scope}[every path/.style={->}]
        \draw (So1) -- node[right=-2pt] {\normalsize $A$} (Su1);
        \draw (So1) .. controls +(-.6,0) and +(-.6,0) .. node[right=-2pt] {\normalsize $A$} (Su1);
        \draw (So1) .. controls +(.6,0) and +(.6,0) .. node[right=-2pt](X) {\normalsize $A$} (Su1);
    \end{scope}
\end{tikzpicture}\label{fig:intro_replacement_example}}\qquad
    \subfloat[]{\begin{tikzpicture}
    \node (S) {$S = $};
    \begin{scope}[every node/.style={circle, draw, inner sep=0pt, minimum size=7pt, node
        distance=25pt}]
        \node[above right=.15 and .5 of S] (So) {};
        \node[below=of So] (Su) {};
    \end{scope}

    \begin{scope}[every path/.style={->}]
        \draw (So) -- node[right=-2pt] {\normalsize $A$} (Su);
        \draw (So) .. controls +(-.6,0) and +(-.6,0) .. node[right=-2pt] {\normalsize $A$} (Su);
        \draw (So) .. controls +(.6,0) and +(.6,0) .. node[right=-2pt] {\normalsize $A$} (Su);
    \end{scope}

    \node[below=of S] (A) {$A \rightarrow$};
    \begin{scope}[every node/.style={circle, draw, inner sep=0pt, minimum size=7pt, node
        distance=20pt}]
        \node[right=.1 of A, fill=black] (A1) {};
        \node[right=of A1] (A2) {};
        \node[right=of A2, fill=black] (A3) {};
    \end{scope}
    \begin{scope}
        \draw (A1) -- node[above] {\normalsize $a$} (A2);
        \draw (A2) -- node[above] {\normalsize $b$} (A3);
    \end{scope}
    \node[below=.05 of A1] (source) {\scriptsize 1};
    \node[below=-.01 of A3] (target) {\scriptsize 2};
\end{tikzpicture}\label{fig:intro_grammar_example}}
    \caption{A digram replacement step~\protect\subref{fig:intro_replacement_example} and the
    grammar resulting from this replacement~\protect\subref{fig:intro_grammar_example}.}
    \label{fig:intro_example}
\end{figure}
Consider the edge-labeled graph on the left of Figure~\ref{fig:intro_replacement_example}.
Our idea is to define a digram as two edges which have least one node in common. 
The graph contains three occurrences of the digram consisting of an $a$- and
a $b$-edge. The graph also contains three occurrences of the digram
consisting of two $a$-edges (and the same for $b$), but these occurrences are overlapping so that
there is at most one non-overlapping occurrence of that digram.
If we replace each occurrence of the $a/b$-digram by a nonterminal edge labeled $A$, then we obtain
the graph shown on the right of Figure~\ref{fig:intro_replacement_example}.
The complete graph grammar for this graph is shown in Figure~\ref{fig:intro_grammar_example}.
The size of the original graph (i.e., the sum of numbers of edges and nodes) is
$11$, while the size of the grammar is $10$. 

\begin{figure}[!t]
    \centering
    \subfloat[]{\begin{tikzpicture}
    \begin{scope}[every node/.style={circle, draw, inner sep=0pt, minimum size=12pt, node
        distance=15pt}]
        \node (c) {\normalsize 1};
        \node[left=of c] (l) {};
        \node[left=of l] (l2) {};
        \node[above=of l2] (l1) {};
        \node[above=of c] (a) {};
        \node[left=of a] (a1) {};
        \node[right=of a] (a2) {};
        \node[right=of c] (r) {};
        \node[right=of r] (r1) {};
        \node[below=of r1] (r2) {};
        \node[below=of c] (b) {};
        \node[left=of b] (b1) {};
        \node[right=of b] (b2) {};
    \end{scope}

    \begin{pgfonlayer}{background}
        \begin{scope}[every path/.style={rounded corners, opacity=.5}]
            \draw[fill=black!20] ($ (a.center) + (0,9pt) $) -- +(32pt,0) --
            +(64pt,-32pt) -- +(64pt, -45pt) -- +(50pt, -45pt) -- +(23pt,-18pt) --
            +(-9pt,-18pt) -- +(-9pt,0) --  ($ (a.center) + (0,9pt) $);
            \draw[fill=black!20, rotate=180] ($ (b.center) + (0,9pt) $) -- +(32pt,0) --
            +(64pt,-32pt) -- +(64pt, -45pt) -- +(50pt, -45pt) -- +(23pt,-18pt) --
            +(-9pt,-18pt) -- +(-9pt,0) --  ($ (b.center) + (0,9pt) $);
        \end{scope}
        \begin{scope}[every node/.style={draw, rectangle, minimum height=18pt, minimum width=72pt, rounded corners, opacity=.5}]
            \node[fill=color1] at (c.center) {};
            \node[rotate=90, fill=color2] at (c.center) {};
        \end{scope}
    \end{pgfonlayer}

    \draw (c) -- (l);
    \draw (l) -- (l1);
    \draw (l) -- (l2);
    \draw (c) -- (a);
    \draw (a) -- (a1);
    \draw (a) -- (a2);
    \draw (c) -- (r);
    \draw (r) -- (r1);
    \draw (r) -- (r2);
    \draw (c) -- (b);
    \draw (b) -- (b1);
    \draw (b) -- (b2);
    \draw (a2) -- (r1);
    \draw (b1) -- (l2);
\end{tikzpicture}\label{fig:intro_counting_example_center}}
\qquad
    \subfloat[]{\begin{tikzpicture}
    \begin{scope}[every node/.style={circle, draw, inner sep=0pt, minimum size=12pt, node distance=15pt}]
        \node (c) {};
        \node[left=of c] (l) {\normalsize 3};
        \node[left=of l] (l2) {};
        \node[above=of l2] (l1) {};
        \node[above=of c] (a) {\normalsize 2};
        \node[left=of a] (a1) {};
        \node[right=of a] (a2) {};
        \node[right=of c] (r) {\normalsize 1};
        \node[right=of r] (r1) {};
        \node[below=of r1] (r2) {};
        \node[below=of c] (b) {\normalsize 4};
        \node[left=of b] (b1) {};
        \node[right=of b] (b2) {};
    \end{scope}

    \begin{pgfonlayer}{background}
        \begin{scope}[every node/.style={draw, rectangle, minimum height=18pt, minimum width=72pt, rounded corners, opacity=.5}]
            \node[fill=color1] at (r.center) {};
            \node[fill=color3] at (l.center) {};
        \end{scope}
        \begin{scope}[every path/.style={rounded corners, opacity=.5}]
            \draw[fill=color2, rotate=90] ($ (a.center) + (0,9pt) $) -- +(9pt,0) --
            +(9pt,-45pt) -- +(-9pt,-45pt) --
            +(-9pt,-18pt) -- +(-36pt,-18pt) -- +(-36pt,0) --  ($ (a.center) + (0,9pt) $);
            \draw[fill=color4, rotate=270] ($ (b.center) + (0,9pt) $) -- +(9pt,0) --
            +(9pt,-45pt) -- +(-9pt,-45pt) --
            +(-9pt,-18pt) -- +(-36pt,-18pt) -- +(-36pt,0) -- ($ (b.center) + (0,9pt) $);
        \end{scope}
    \end{pgfonlayer}

    \draw (c) -- (l);
    \draw (l) -- (l1);
    \draw (l) -- (l2);
    \draw (c) -- (a);
    \draw (a) -- (a1);
    \draw (a) -- (a2);
    \draw (c) -- (r);
    \draw (r) -- (r1);
    \draw (r) -- (r2);
    \draw (c) -- (b);
    \draw (b) -- (b1);
    \draw (b) -- (b2);
    \draw (a2) -- (r1);
    \draw (b1) -- (l2);
\end{tikzpicture}\label{fig:intro_counting_example_opt}}
    \caption{Two different traversals for counting digram occurrences.}
    \label{fig:intro_counting_example}
\end{figure}
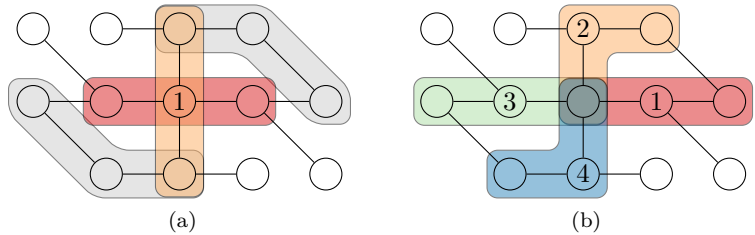
One of the major challenges in implementing RePair for graphs is that
we are unable to determine a most-frequent (non-overlapping) digram in linear time.
For strings and trees this is achieved by a trivial greedy left-to-right and
bottom-up counting procedure, respectively. Let us consider an example to see why 
in a graph 
a
greedy counting procedure does not find a maximal set of non-overlapping occurrences of a digram.
We follow a specific node order and greedily count occurrences of digrams with the current node as center node.
If we start with the node numbered ``1'' in Figure~\ref{fig:intro_counting_example_center}, 
then we count exactly two
non-overlapping occurrences of a digram (note that the gray shaded areas are occurrences of a \emph{different} digram).
If however, we follow the order of nodes shown in Figure~\ref{fig:intro_counting_example_opt}, 
then we find four occurrences of the same digram. 
The most efficient way we are aware of for finding a most-frequent (non-overlapping) digram
in a graph has quadric time complexity. This is prohibitively expensive for large graphs.
We therefore resort to a greedy counting principle which can be performed in linear time. 
We experiment with different node orders. Interestingly, a node order based on
the ``similarity'' of nodes inspired by the numbering from the Weisfeiler-Lehman isomorphism
test~\cite{WeisfeilerLehman68} achieves the best results in our experiments.

We run experiments with a prototype implementation of our RePair for graphs compression
algorithm. Note that typically the algorithm ends up with a large graph that does
not contain repeated digrams. Thus, we need an efficient way
to represent this rest graph. We use the $k^2$-trees of
Brisaboa, Ladra, and Navarro~\cite{DBLP:journals/is/BrisaboaLN14}.
We compare our compressor against $k^2$-trees, 
the list-merge compressor (LM) of Grabowski and Bieniecki~\cite{DBLP:journals/dam/GrabowskiB14},
and the compressor of Hern{\'a}ndez and Navarro~\cite{DBLP:journals/kais/HernandezN14}.
The latter first applies a generalization of the dense substructure removal (DSR)
of Buehrer and Chellapilla~\cite{Buehrer08_scalableWebGraphCompression} and then uses $k^2$-trees.
We find that over network graphs (which have no edge labels), 
the combination of our new RePair algorithm
with DSR gives the best results for all graphs but two (where LM obtains slightly better compression).
On RDF graphs (which contain edge labels) we compare our compressor with
$k^2$-trees. Here our method consistently obtains better compression, sometimes by a factor of
several hundreds.

We finally investigate query evaluation over grammar-compressed graphs.
For strings and trees one basic technique is to run a finite-state automaton
in one pass over the grammar. 
Unfortunately, for graphs there is no well accepted notion of a 
finite-state automaton. Instead, counting monadic second-order logic (CMSO) 
is considered as the graph counterpart to regular languages.
It follows from well-known results
that evaluating a fixed CMSO formula over a grammar-compressed graph can be carried
out in (data complexity) time $O(\eta|G|)$ where $\eta$ is an upper bound on the time needed to
evaluate the formula over a right-hand side of the grammar~\cite{DBLP:journals/tcs/CourcelleM93}. 
Since this may be too expensive for large graphs,
we investigate new algorithms for particular queries.
Given a graph grammar $G$, 
(a) reachability queries (for nodes $s,t$ determine if there is a path from $s$ to $t$) 
can be evaluated in time $O(|G|)$, and 
(b) regular path queries (i.e., determine if there exists two nodes with a 
labeled path between them matching a regular expression $\alpha$) 
can be evaluated in time $O(|G||\alpha|)$.

This paper is based on a preliminary paper that was presented at ICDE 2016 (see\cite{DBLP:conf/icde/ManethP16}).
We have added several new results and new material with respect to
the ICDE 2016 paper. In particular have we greatly extended the Related Work section.
Section~\ref{ss2:StringTreeRelation} on tree- and string-graphs
is entirely new; the main result of that section (Theorem~\ref{thm:treeGeneratingToTree} and
Corollary~\ref{cor:stringGeneratingToString})
essentially shows that SL-HR grammars do
not allow stronger tree or string compression as do the existing
well-known straight-line tree and string grammars.
Section~\ref{ss2:maxRank} is entirely new; it shows that the maximal rank parameter
can heavily influence the compression behavior or our compressor.
Section~\ref{sss:formalism_choice} is entirely new; it discusses the choice of our grammar
formalism and compares it to other formalisms such as node replacement graph
grammars. The experimental section has been extended by new
experiments over synthetic graphs, which allow to show the influence of the individual
parameters of the compressor. The section on Query Evaluation has been
largely extended; for instance, an algorithm to traverse a grammar-represented
graph and a new section on regular path queries has been added.
%

\section{Related Work}\label{sse:related}
Our grammar formalism is known as context-free 
hyperedge replacement (HR) grammars,
see~\cite{Engelfriet:1997:CGG:267871.267874,DBLP:conf/gg/DrewesKH97}.
An approximation algorithm for finding a small HR grammar that generates a given graph is considered already
by Peshkin~\cite{Peshkin07_graphSequitour}. It is based on the SEQUITUR compression
scheme~\cite{DBLP:journals/jair/Nevill-ManningW97}. However, experiments are only presented
for rather small protein graphs, and we have not been able to obtain their implementation. As far
as we know, no other compressor for straight-line graph grammars has been considered. 
Claude and Navarro~\cite{Claude10_compactWebGraphRep} apply string RePair on 
the adjacency list of a graph. This works well, but is outperformed by newer compression schemes
such as $k^2$-trees~\cite{DBLP:journals/is/BrisaboaLN14}.
More database oriented work is found for semi-structured data. For example the
XMill-compressor~\cite{DBLP:conf/sigmod/LiefkeS00} groups XML-data such that a subsequent use of
general-purpose compression (e.g. gzip) is more effective. Schema information can improve its
effectiveness, but is not required. Deriving schema information from existing data can be seen as a
form of lossy compression. DataGuides~\cite{DBLP:conf/vldb/GoldmanW97} are a way of doing just that
for XML data. As XML documents can be represented as trees, methods to compress trees are applicable.
Grammar-based tree compression can be seen as a precursor to the work presented here. One of the
first such algorithms was BPLEX~\cite{DBLP:journals/is/BusattoLM08}. The results of BPLEX were later
improved by applying the RePair compression scheme to trees~\cite{Lohrey13_treeRePair}, which is
what we are proposing to do for graphs.
\subsection{Succinct Graph Representations}
Several compression approaches have been developed particularly for web graphs. In a web graph,
nodes represent pages (i.e., URLs) and edges represent links from one page to another. Web graphs
have two properties which are useful for compression:
\begin{description}
    \item[Locality:] most links lead to pages within the same host (i.e., the URLs have the same
        prefix) and
    \item[Similarity:] pages on the same host often share the same links.
\end{description}
Due to these properties, ordering the nodes lexicographically by their URL provides an order in
which similar nodes are close to each other. The WebGraph framework~\cite{DBLP:conf/www/BoldiV04} by
Boldi and Vigna is originally based on this order, but was later improved with a different
order~\cite{Boldi09_permutingWebGraphs}. It represents the \emph{adjacency list} of a graph using
several layers of encodings, while retaining the ability to answer out-neighborhood queries. An
out-neighborhood (in-neighborhood) query applied to a node $u$ retrieves all nodes $v$ such that
there is an edge from $u$ to $v$ (from $v$ to $u$). As not every graph is a web graph, a
lexicographical order of the node-names is not always possible or useful. Apostolico and Drovandi
therefore propose to use a BFS-order~\cite{Apostolico09_graphCompressionBFS} combined with another
encoding. A different approach is proposed by Grabowski and
Bieniecki~\cite{DBLP:journals/dam/GrabowskiB14}, where contiguous blocks of the adjacency list are
merged into a single ordered list, and a list of flags which are used to recover the original lists.
They then encode the ordered list and use the deflate-compressor to compress both lists. To our
knowledge, their method is the current state-of-the-art in compression/query trade-off, when only
out-neighborhood queries are considered.

The methods above have in common that they encode the adjacency list of a graph and natively only
support out-neighborhood queries. The $k^2$-trees of Brisaboa et
al.~\cite{DBLP:journals/is/BrisaboaLN14} on the other hand compress the \emph{adjacency matrix} of
the graph. They do this by
recursively partitioning it into $k^2$ many squares. If one of these includes only 0-values, then
it is represented by a 0-leaf in the tree, and otherwise the square is partitioned
further. This Quadtree-like representation is well known (see,
e.g.,~\cite{DBLP:conf/synasc/Simecek09}), but their succinct binary encoding is a clever new
approach. The method provides access to both, in- and out-neighborhood queries, and can be applied
to any binary relation. We use $k^2$-trees to represent the start graph of our grammars. The
\ktree-method was combined by Hern\'andez and Navarro~\cite{DBLP:journals/kais/HernandezN14} with
dense substructure removal, originally proposed by Buehrer and
Chellapilla~\cite{Buehrer08_scalableWebGraphCompression}. A \emph{dense substructure} is defined by
two sets of nodes $U,S$ such that they induce a complete bipartite graph. Note that $U$ and $S$ need
not be disjoint. The edges in these bicliques are replaced by a single ``virtual node''. To our
knowledge, the method of~\cite{DBLP:journals/kais/HernandezN14} is the current state-of-the-art
in compression/query trade-off, when in- and out-neighborhood queries are considered.
\subsection{RDF Graph Compression}\label{sss:rdfGraphCompressionRelated}
The Resource Description Framework (RDF) is a fairly recent specification, first standardized by the
w3c in February 2004. The current version is
RDF1.1\footnote{\url{https://www.w3.org/TR/2014/REC-rdf11-concepts-20140225/}} from February 2014.
It is used to represent linked data and semantic information. Its relationship to graphs is
similar to XML's relationship to trees, in that graphs are a natural representation of the structure
defined by RDF. Roughly speaking, RDF data is represented as a set of triples $(s,p,o)$, connecting
a subject $s$ with an object $o$ by a predicate $p$. Notably, the domains for these can overlap to
some degree (for example, values used as predicates in some triples may be subjects in others). Such
a set of triples can be represented by having nodes for the subjects and objects, and edges for the
predicates. Thus $(s,p,o)$ becomes an edge from $s$ to $o$ labeled $p$. Note that it is not
necessary to model overlapping domains as edges pointing to other edges. Instead, there may be a
node label that also appears as an edge label. As RDF graphs encode
semantic information, the concrete values for subjects, predicates, and objects can be long strings
(for example URIs). It is a common practice (see e.g.~\cite{DBLP:conf/www/FernandezGM10,
DBLP:conf/sac/Martinez-PrietoFC12, DBLP:journals/kais/Alvarez-GarciaB15}) to map the possible values
to integers using a dictionary and to represent the graph using triples of integers. This also leads
to different approaches regarding RDF graph compression: one is to compress the dictionary further
(see, e.g.,~\cite{DBLP:conf/sac/Martinez-PrietoFC12,DBLP:journals/concurrency/UrbaniMDSB13}),
another to compress the underlying graph structure.

Of the methods mentioned in the previous section, only the \ktree has been applied to RDF graph
compression~\cite{DBLP:journals/kais/Alvarez-GarciaB15}. This is done by encoding a separate
adjacency matrix for every predicate (i.e., edge label) in the graph. The aforementioned splitting
into dictionary and triples was first proposed by Fern\'andez et
al.~\cite{DBLP:conf/semweb/FernandezMG10} as a datastructure called HDT (Header-Dictionary-Triples).
Together with an encoding of the triples grouping them by subject as an adjacency list, this yields
an in-memory representation of RDF data. They also investigate compressing this data structure using
conventional general-purpose compression methods (gzip, bzip2, and ppmdi), beating these universal
compressors used on the plain original file. Another approach of Swacha and
Grabowski~\cite{DBLP:conf/slate/SwachaG15} combines techniques to compress graph and dictionary
achieving a succinct representation of the RDF graph, which is subsequently compressed by general
purpose compressors.
\subsection{Queries with Compressed Input}
Often compressed input induces an exponential blow-up in complexity when deciding queries on the
represented data. Such ``upgrading-theorems'' have been shown for example for the representation of
graphs as boolean circuits, which can be exponentially smaller than an explicit representation, but
in turn make queries exponentially harder to answer (see
e.g.~\cite{DBLP:journals/iandc/GalperinW83,DBLP:journals/iandc/PapadimitriouY86}).
Fortunately, upgrading-theorems do not hold for grammar-compressed graphs as already shown
in~\cite{DBLP:journals/jcss/LengauerW92}. Indeed, it is possible to obtain a \emph{speed-up}
proportional to the compression ratio for certain queries. Lengauer and
Wanke~\cite{DBLP:journals/siamcomp/LengauerW88} show that connectivity of graphs specified using a
hierarchical graph definition (which is essentially the same as the hyperedge replacement mechanism
we use in this paper) can be decided in linear time with respect to the size of the compressed
representation. They also show this for biconnectivity and (undirected) reachability. For strong
connectivity however, they only get a quadratic upper bound. It should be noted that there is still
an increase in complexity here, but a smaller one. Lengauer and
Wagner~\cite{DBLP:journals/jcss/LengauerW92} study several problems and their complexity on explicit
versus hierarchical representations. They show, for example, that reachability is complete for $\P$
on hierarchical graphs, but the problem is \NL-complete for explicit representations. The latter
class is widely believed to be a proper subset of the former, but both classes only include problems
which can be solved in polynomial time, so there is no exponential blow-up in computation time. As
the hierarchical representation can be exponentially smaller than the represented graph, this makes
a speed-up possible.
This smaller increase in complexity is also stated by Marathe et
al.~\cite{DBLP:journals/tcs/MaratheRHR97,DBLP:journals/njc/MaratheHR94,DBLP:journals/siamcomp/MaratheHSR98},
who show that some \NP-complete problems on graphs (e.g., 3-colorization, Max Cut, and Vertex Cover)
become \PSPACE-hard on hierarchically defined graphs, which again does not imply an exponential
blow-up in computation time (assuming widely-held assumptions in complexity theory). They also
present polynomial approximation algorithms for some of these problems with hierarchical input. Note
that we cannot conclude that hierarchical versions of classical problems always lead to a speed-up:
in the same paper they show that the $\P$-complete threshold network flow problem becomes
$\PSPACE$-complete for hierarchical input graphs.

\section{Preliminaries}\label{sse:preliminaries}
For $n \geq 0$ we denote by $[n]$ the set $\{1,\ldots,n\}$. A \emph{ranked alphabet} consists of an
alphabet $\Sigma$ together with a mapping $\rank: \Sigma \rightarrow \N$ that assigns
a rank to every symbol in $\Sigma$. For the rest of the paper, we assume that $\Sigma$ is fixed, and
of the form $[\sigma]$ for some integer $\sigma \in \N$.

As we do some comparisons to grammar-based string and tree compression, let us recall some
definitions. We denote the empty string by $\varepsilon$, and for two strings $w,v$ the
concatenation of $w$ and $v$ by $w\cdot v$, or $wv$ if that is unambiguous. Let $\Delta$ be an
\emph{alphabet} (that is, a finite set of symbols). The size of a string $w = a_1a_2\cdots a_k$ ($a_i
\in \Delta$ for $i \in [k]$) is defined as $|w| = k$ with $|\varepsilon| = 0$, and we write $a_i \in
w$ to express that $a_i$ is part of the string $w$. We recursively define trees over
a ranked alphabet $\Sigma$ as the smallest set $T$, such that for $k \geq 0$, if $t_1,\ldots,t_k \in
T$, then $\sigma(t_1,\ldots,t_k) \in T$, provided that $\sigma \in \Sigma$ and $\rank(\sigma) = k$.
The nodes of a tree $t$ can be addressed by their \emph{Dewey addresses} $D(t)$. For a tree $t =
\sigma(t_1,\ldots,t_k)$, $D(t)$ is recursively defined as $D(t) = \{\varepsilon\} \cup \bigcup_{i \in
[k]} i \cdot D(t_i)$. Thus $\varepsilon$ addresses the root node, and $u \cdot i$ the $i$-th child
of $u$. For $u \in D(t)$ we denote by $t[u] \in \Sigma$ the symbol at $u$. The size of $t$ is $|t| =
|D(t)|-1$, i.e., the number of edges in $t$.

A \emph{hypergraph over $\Sigma$} (or simply \emph{graph}) is a tuple $g = (V, E, \att, \lbl, \ext)$
where $V = \{1,\ldots,n\}$, $n \geq 1$ is the set of nodes, $E$ is a finite set of edges, $\att: E
\rightarrow V^+$ is the attachment mapping, $\lbl: E \rightarrow \Sigma$ is the label mapping, and
$\ext \in V^*$ is a string of \emph{external nodes}. We define the rank of an edge as
$\rank(e) = |\att(e)|$ and require that $\rank(e) = \rank(\lbl(e)) \geq 1$ for every edge $e$ in
$E$. The latter requirement (and the attachment mapping mapping to $V^+$) mean that we do not allow
edges that are not attached to any nodes. We
add the following two restrictions on hypergraphs:
\begin{enumerate}[label=(C\arabic{*})]
    \item for all edges $e \in E: \att(e)$ contains no node twice, and\label{condition1}
    \item $\ext$ contains no node twice.\label{condition2}
\end{enumerate}
An edge is \emph{simple}, if its rank equals two. A hypergraph $g$ is simple, if all its edges are
simple, and for any distinct edges $e_1,e_2$ of $g$ it holds that $\att(e_1) \neq \att(e_2)$ or
$\lbl(e_1) \neq \lbl(e_2)$, i.e., $g$ has no ``multi-edges''.  For a hypergraph $g = (V,E,\att,
\lbl, \ext)$ we use $V_g, E_g, \att_g, \lbl_g$, and $\ext_g$ to refer to its components. We may omit
the subscript if the hypergraph is clear from context. The rank of a hypergraph $g$ is defined as
$\rank(g) = |\ext_g|$. Nodes that are not external are called \emph{internal}. We define the
\emph{node size} of $g$ as $|g|_V = |V|$, the \emph{edge size} as 
\[|g|_E = |\{e \in E_g \mid \rank(e) \leq 2\}| + \sum_{e \in E_g, \rank(e) > 2} \rank(e),\]
and the \emph{size} as $|g| = |g|_V + |g|_E$. We denote the set of all hypergraphs over $\Sigma$
by $\HGR(\Sigma)$. Note that our size definition differs slightly from the one given
in~\cite{DBLP:journals/siamcomp/LengauerW88}, because they are encoding hyperedges using bipartite
graphs. Thus, an edge of rank $n$ in our model has size $n$ (or 1 if $n \leq 2$), whereas they would
calculate a size of $n$ edges and one node. This makes our sizes slightly smaller. For two nodes
$s,t \in V$ we say that there is a \emph{path $p$} from $s$ to $t$, if there exists a sequence of edges
$p = (e_1,\ldots,e_n) \in E^n$ for some $n \in \N$, such that $\att(e_1) = s\alpha$ ($\alpha \in V^*$),
$\att(e_n) = v\beta$ with $v \in V, \beta \in V^+$ and $t \in \beta$, and for every $i \in [n-1]$ if
$\att(e_i) = uv_1\cdots v_k$ then there exists a $j \in [k]$ such that $\att(e_{i+1}) = v_j\alpha$
for some $\alpha \in V^*$. For an edge $e$ let $\head(e)$ be the first node $e$ is attached to. For
a path $p = (e_1,\ldots,e_n)$ we refer to the nodes $(\head(e_2),\ldots,\head(e_n))$ as the
\emph{nodes on the path $p$}. We call $p$ internal, if all the nodes on $p$ are internal.

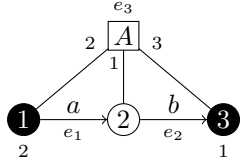
\begin{figure}[!t]
    \centering
    \begin{tikzpicture}
    \begin{scope}[every node/.style={circle, draw, inner sep=0pt, minimum size=12pt, node
        distance=25pt}]
        \node[fill=black] (1) {\normalsize \color{white}$1$};
        \node[right=of 1] (2) {\normalsize $2$};
        \node[right=of 2,fill=black] (3) {\normalsize \color{white} $3$};
    \end{scope}
    \node[below=0pt of 3] (x1) {\scriptsize 1};
    \node[below=0pt of 1] (x2) {\scriptsize 2};
    \node[above=20pt of 2, rectangle, draw, inner sep=2pt, outer sep=0pt] (A) {\normalsize $A$};
    \node[above=-.2pt of A] {\scriptsize $e_3$};

    \draw[->] (1) -- node[above=-1pt] {\normalsize $a$} node[below] {\scriptsize $e_1$} (2);
    \draw[->] (2) -- node[above=-1pt] {\normalsize $b$} node[below] {\scriptsize $e_2$} (3);

    \draw (A) -- node[above left=-2pt, pos=.1] {\scriptsize $2$} (1);
    \draw (A) -- node[left=-2pt, pos=.25] {\scriptsize $1$} (2);
    \draw (A) -- node[above right=-2pt, pos=.1] {\scriptsize $3$} (3);
\end{tikzpicture}
    \caption{Example of a drawing of a hypergraph with three nodes (two of which external) and three
    edges, two of which are simple.}
    \label{fig:hypergraph_example}
\end{figure}
An example of a hypergraph can be seen in Figure~\ref{fig:hypergraph_example}. Formally, the pictured
graph has $V = \{1,2,3\}$, $E = \{e_1,e_2,e_3\}$, $\att = \{e_1 \mapsto 1\cdot2, e_2 \mapsto 2\cdot3,
e_3 \mapsto 2\cdot1\cdot3\}$, $\lbl = \{e_1 \mapsto a, e_2 \mapsto b, e_3 \mapsto A\}$, and $\ext =
3\cdot1$. Note that external nodes are filled black and have indices below them indicating their
position within $\ext$. Similarly, the hyperedge of rank $> 2$ has indices indicating the order of
the attached nodes (simple edges are drawn directed, from their first to second attachment node). In
the following we sometimes omit either of these indices. In these cases, we either use colors to
indicate this order, or the specific order is irrelevant for the example.
%
%
\begin{definition}
    A \emph{hyperedge replacement grammar (HR grammar) over $\Sigma$} is a tuple $G = (N, P, S)$,
    where $N \subseteq \N$ is a ranked alphabet of \emph{nonterminals} with $N \cap
    \Sigma=\emptyset$, $P \subset N \times \HGR(\Sigma \cup N)$ is the set of \emph{rules} such that
    $\rank(A) = \rank(g)$ for every $(A,g) \in P$, and $S \in \HGR(\Sigma\cup N)$ is the \emph{start
    graph}.
\end{definition}
The \emph{size} of $G$ is defined as $|G| = \sum_{(A,g) \in P}|g|$, and similarly the edge and node
sizes $|G|_E = \sum_{(A,g) \in P}|g|_E$ and $|G|_V = \sum_{(A,g) \in P}|g|_V$. The rank of an
HR grammar $G$ is defined by $\rank(G) = \max\{\rank(A) \mid A \in N\}$.  We often write $p:A
\rightarrow g$ for a rule $p = (A,g)$ and call $A$ the left-hand side $\lhs(p)$ and $g$ the
right-hand side $\rhs(p)$ of $p$. We call symbols in $\Sigma$ \emph{terminals}. Consequently an edge
is called \emph{terminal} if it is labeled by a terminal and \emph{nonterminal} otherwise.

To define a derivation relation for the grammar $G = (N,P,S)$, we first introduce some notation. Let
$g$ be a hypergraph, $\nr: V_g \rightarrow \N$ a bijective function, and $\nr^*: V_g^* \rightarrow
\N^*$ the natural extension of $\nr$ to strings. We call $\nr$ a \emph{node renaming on $g$}. We define
$g[\nr] = (\{\nr(v) \mid v \in V_g\}, E_g, \att^\nr, \lbl_g, \nr^*(\ext_g))$, where $\att^\nr(e) =
\nr^*(\att_g(e))$ for all $e \in E_g$. For a hypergraph $g$ and an edge $e \in E_g$ we denote by
$g[-e]$ the hypergraph obtained from $g$ by removing the edge $e$. For two hypergraphs $g,h$ we
denote by $g \cup h$ the \emph{union} of the two hypergraphs defined by $(V_g \cup V_h, E_g \uplus
E_h, \att_g \uplus \att_h, \lbl_g \uplus \lbl_h, \ext_g)$. Note that this union
\begin{enumerate}
    \item merges nodes that exist in both $g$ and $h$,
    \item creates disjoint copies of $E_h$, $\att_h$, and $\lbl_h$, and
    \item uses the external nodes of $g$ only.
\end{enumerate}
Now, let $g,h$ be hypergraphs and $e \in E_g$ such that $\rank(e) = \rank(h)$. Let $\nr$ be a node
renaming on $h$ such that $\ext_{h[\nr]} = \att_g(e)$, and $\nr(v) \notin V_g$ for every
internal node $v$ of $h$. The \emph{replacement of $e$ by $h$ in $g$} is defined as $g[e/h] = g[-e]
\cup h[\nr]$. 

For a grammar $G = (N, P, S)$ we define its \emph{derivation relation} $\Rightarrow_G \subseteq
\HGR(\Sigma\cup N) \times \HGR(\Sigma\cup N)$ as follows. For $g,h \in \HGR(\Sigma\cup N)$, $g
\Rightarrow_G h$ if and only if there is a nonterminal edge $e$ in $g$ such that $h =
g[e/\rhs(\lbl(e))]$. For $n \geq 1$ we write $g \Rightarrow^n_G h$, if there is a sequence $g
\Rightarrow_G g_1 \Rightarrow_G \cdots \Rightarrow_G g_{n-1} \Rightarrow_G h$ of $n$ derivation
steps to derive $h$ from $g$, and extend this to $g \Rightarrow^* h$, if $g \Rightarrow^n h$ for
some $n \in \N$. The language of a grammar $G$ is defined as $L(G) = \{g \in \HGR(\Sigma) \mid S
\Rightarrow^* g\}$. We omit the subscript $G$ where the grammar is clear from context. For an
example of a derivation, see Figure~\ref{fig:intro_derivation_example}, which shows the full
derivation of the grammar given in Figure~\ref{fig:intro_grammar_example}. Note that the nodes in
the start graph initially have IDs $1$ and $2$, whereas the nodes that originate from the internal
node of the right-hand side of $A$ get variables $x,y$, and $z$. Any pairwise different values for
these variables that are different from $1$ and $2$ yield a correct derivation of this grammar.

\begin{figure}[!t]
    \centering
    \begin{tikzpicture}
    \begin{scope}[every node/.style={circle, draw, inner sep=0pt, minimum size=12pt, node
        distance=50pt}]
        \node (So1) {1};
        \node[below=of So1] (Su1) {2};
    \end{scope}

    \begin{scope}[every path/.style={->}]
        \draw (So1) -- node[right=-2pt] {\normalsize $A$} (Su1);
        \draw (So1) .. controls +(-.6,0) and +(-.6,0) .. node[right=-2pt] {\normalsize $A$} (Su1);
        \draw (So1) .. controls +(.6,0) and +(.6,0) .. node[right=-2pt](X) {\normalsize $A$} (Su1);
    \end{scope}

    \node[right=.1 of X] (RA1) {$\Rightarrow$};

    \begin{scope}[every node/.style={circle, draw, inner sep=0pt, minimum size=12pt, node
        distance=50pt}]
        \node[right=2.5 of So1] (So2) {1};
        \node[below=of So2] (Su2) {2};
        \node[below left=.8 and .5 of So2] (in21) {$x$};
    \end{scope}

    \begin{scope}[every path/.style={->}]
        \draw (So2) -- node[right=-2pt] {\normalsize $A$} (Su2);
        \draw (So2) .. controls +(.6,0) and +(.6,0) .. node[right=-2pt] (X2) {\normalsize $A$} (Su2);
    \end{scope}
        \draw (So2) -- node[above left=-2pt] {\normalsize $a$} (in21);
        \draw (in21) -- node[below left=-2pt] {\normalsize $b$} (Su2);

    \node[right=.1 of X2] (RA2) {$\Rightarrow$};

    \begin{scope}[every node/.style={circle, draw, inner sep=0pt, minimum size=12pt, node
        distance=50pt}]
        \node[right=2.5 of So2] (So3) {1};
        \node[below=of So3] (Su3) {2};
        \node[below left=.8 and .5 of So3] (in31) {$x$};
        \node[below=.75 of So3] (in32) {$y$};
    \end{scope}

    \begin{scope}[every path/.style={->}]
        \draw (So3) .. controls +(.5,0) and +(.5,0) .. node[right=-2pt] (X3) {\normalsize $A$} (Su3);
    \end{scope}
        \draw (So3) -- node[above left=-2pt] {\normalsize $a$} (in31);
        \draw (in31) -- node[below left=-2pt] {\normalsize $b$} (Su3);
        \draw (So3) -- node[left=-2pt] {\normalsize $a$} (in32);
        \draw (in32) -- node[left=-2pt] {\normalsize $b$} (Su3);

    \node[right=.1 of X3] (RA3) {$\Rightarrow$};

    \begin{scope}[every node/.style={circle, draw, inner sep=0pt, minimum size=12pt, node
        distance=50pt}]
        \node[right=2.5 of So3] (So4) {1};
        \node[below=of So4] (Su4) {2};
        \node[below left=.8 and .5 of So4] (in41) {$x$};
        \node[below=.75 of So4] (in42) {$y$};
        \node[below right=.8 and .5 of So4] (in43) {$z$};
    \end{scope}

    \begin{scope}
        \draw (So4) -- node[above left=-2pt] {\normalsize $a$} (in41);
        \draw (in41) -- node[below left=-2pt] {\normalsize $b$} (Su4);
        \draw (So4) -- node[left=-2pt] {\normalsize $a$} (in42);
        \draw (in42) -- node[left=-2pt] {\normalsize $b$} (Su4);
        \draw (So4) -- node[above right=-2pt] {\normalsize $a$} (in43);
        \draw (in43) -- node[below right=-2pt] {\normalsize $b$} (Su4);
    \end{scope}
\end{tikzpicture}
    \caption{Derivation steps of the grammar in Figure~\ref{fig:intro_grammar_example}}
    \label{fig:intro_derivation_example}
\end{figure}
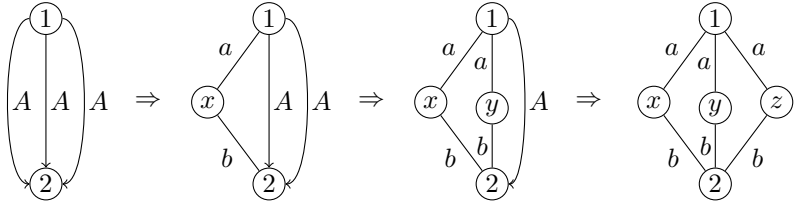

\begin{definition}
    An HR grammar $G$ is called \emph{straight-line} (SL-HR grammar) if 
    \begin{enumerate}
        \item the relation $\SLOrder = \{(A_1, A_2)\mid \exists g: (A_1,g)\in P_G, \exists e \in
            E_g:\lbl_g(e) = A_2\}$ is acyclic,
        \item for every $A \in N$ there exists exactly one rule $p \in P_G$ with
            $\lhs(p) = A$, and
        \item $G$ has no useless (unreachable) rules, i.e., if $S_G \Rightarrow g_1 \Rightarrow
        \cdots \Rightarrow g_n \in \HGR(\Sigma)$ and $A \in N_G$, then $\exists i \exists e \in
        V_{g_i}: \lbl_{g_i}(e) = A$.
    \end{enumerate}
    \label{def:SLHR}
\end{definition}
Note that $L(G) \neq \emptyset$ and $L(G)$ contains (due to the element of choice in the node
renaming, possibly infinitely many) only isomorphic graphs. As the
right-hand side for a nonterminal is unique in SL-HR grammars, we denote the right-hand side $g$ of
$p = (A,g)$ by $\rhs(A)$. By convention, whenever we state something over all right-hand
sides of a grammar, this includes the startgraph (i.e., $S$ is assumed to be implicitly in $N$ and
$\rhs(S) = S$). The height of an SL-HR grammar $\height(G)$ is the height of $\SLOrder$. 
Note that in terms of the expressive power of HR grammars for defining graph languages, the
conditions~\ref{condition1} and~\ref{condition2} do not have an effect, see~\cite[Chapter 1, Theorem
4.6]{DBLP:books/sp/Habel92}. In terms of compression power however, condition~\ref{condition1} is
harmless, but~\ref{condition2} is not. See Section~\ref{sss:formalism_choice} for more details.
In the following we often say \emph{grammar} instead of \emph{SL-HR grammar}.
\subsection{Creating Graphs with Unique Node IDs}\label{sss:nodeIDs}
%
Let $G$ be an SL-HR grammar. Instead of considering all the (possibly infinitely many) isomorphic graphs in
the set $L(G)$, we would like to fix one particular graph $g$ of $L(G)$.
First of all, we fix a particular node renaming during a derivation step. Let $g$ be a graph with a
nonterminal edge $e$ labeled $A$, and let $h = \rhs(A)$. Further, let $|V_g| = n$ and let 
$v_1,\ldots,v_m$ be the internal nodes of $h$ such that $v_i < v_j$ for $i < j$. In the definition
of $g[e/h] = g[-e] \cup h^\nr$, we now require $\nr(v_i) = n+i$ for $i \in [m]$.

This alone is not enough. We also need to define in which order the nonterminal edges are derived,
to make sure the full derivation ends up with a unique graph. To do so, let first $g$ be a
graph, and $\NTSet_g = \{e \in E_g \mid e$ is nonterminal$\}$ be the set of nonterminal edges in
$g$. We define the \emph{sibling-tuple} $\sib(\NTSet_g)$ of $\NTSet_g$ as the tuple
$(e_1,\ldots,e_k)$ such that $\bigcup_{i \in [k]} e_i = \NTSet_g$, and if $i < j$ (for $i,j\in [k]$,
i.e., $e_i$ comes before $e_j$ in the tuple), then
\begin{itemize}
    \item $\att_g(e_i) <_{\text{lex}} \att_g(e_j)$ (here $<_{\text{lex}}$ is the lexicographical
        order), or
    \item $\att_g(e_i) = \att_g(e_j)$ and $\lbl_g(e_i) \leq \lbl_g(e_j)$.
\end{itemize}
Note that the order for edges with $\att_g(e_i) = \att_g(e_j)$ and $\lbl_g(e_i) = \lbl_g(e_j)$ is
arbitrary, as it does not matter. Intuitively, the sibling-tuple is an ordered sequence of siblings
within the derivation tree of $G$. Using this order we define the \emph{derivation-tree} $\dt(G)$
for an SL-HR grammar $G = (N,P,S)$ recursively in the following way: let $e$ be a nonterminal edge
labeled $A$, let $g = \rhs(A)$, and let $\sib(\NTSet_g) = (e_1,\ldots,e_m)$ for $m\geq 0$. Then
$\dt(e) = e(\dt(e_1),\ldots,\dt(e_m))$. To define the root of the tree, let $\sib(\NTSet_S) =
(e_1,\ldots,e_m)$, and $\dt(G) = S(\dt(e_1),\ldots,\dt(e_m))$. Figure~\ref{fig:dtExample} shows an
example of a derivation tree. The order in which the nonterminal edges of $G$ are derived is now
given by a pre-order traversal (i.e., a depth-first traversal, which always visits the leftmost
unvisited node next) of $\dt(G)$. This yields a unique hypergraph out of the many isomorphic options
in $L(G)$. We denote this hypergraph by $\val(G)$. We denote the hypergraph derived by a single edge
$e$ in this way by $\val(e)$. Finally we denote by $\val(A)$ the hypergraph derived from a graph
with $\rank(A)$ many nodes attached to a single $A$-edge.

For an example consider first again the derivation shown in
Figure~\ref{fig:intro_derivation_example}. The only allowed choice of node-IDs for $\val(G)$ here
would be $x=3$, $y=4$, and $z=5$. Furthermore, Figure~\ref{fig:valGExample} shows the graph
resulting from the derivation of the grammar given in Figure~\ref{fig:dtExample}. Note how in both
cases the order in which the $A$-edges between nodes $1$ and $2$ are derived does not matter: if we
were to switch the node-IDs created by them ($4$ and $5$ in Figure~\ref{fig:valGExample}) we would
still have the same graph, not just an isomorphic one.
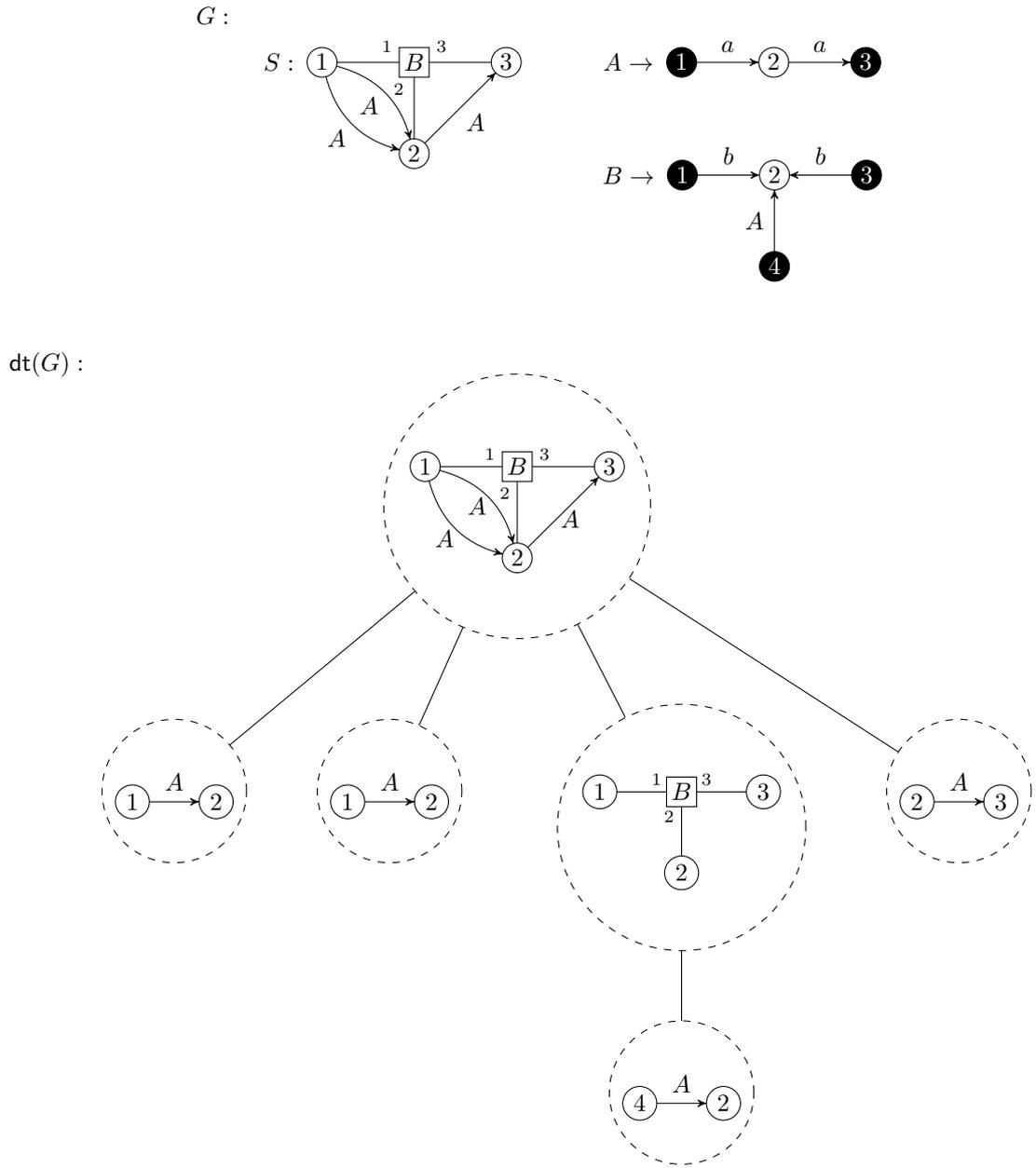
\begin{figure}[!t]
    \centering
    \begin{tikzpicture}
    \node[rectangle, inner sep=5pt] (grammar) {
        \begin{tikzpicture}
            \node (S) {$S:$};
            \node[right=4 of S] (A) {$A \rightarrow$};
            \node[below=of A] (B) {$B \rightarrow$};

            \begin{scope}[every node/.style={circle, draw, inner sep=0pt, outer sep=0pt, minimum size=12pt,
                node distance=25pt}]
                \node[right=0pt of S] (s1) {$1$};
                \node[right=of s1, rectangle] (sB) {$B$};
                \node[below=of sB] (s2) {$2$};
                \node[right=of sB] (s3) {$3$};
            \end{scope}
            \draw (sB) --node[above=-1pt,pos=.2] {\scriptsize 1} (s1);
            \draw (sB) --node[left=-1pt,pos=.2] {\scriptsize 2} (s2);
            \draw (sB) --node[above=-1pt,pos=.2] {\scriptsize 3} (s3);
            \begin{scope}[every path/.style={-stealth'}]
                \draw (s1) edge[bend left] node[below left=-4pt] {$A$} (s2);
                \path (s1) edge[bend right] node[below left=-4pt] {$A$} (s2);
                \draw (s2) --node[below right=-4pt] {$A$} (s3);
            \end{scope}

            \begin{scope}[every node/.style={circle, draw, inner sep=0pt, outer sep=0pt, minimum size=12pt,
                node distance=25pt}]
                \node[right=0pt of A, fill=black] (a1) {\color{white} $1$};
                \node[right=of a1] (a2) {$2$};
                \node[right=of a2, fill=black] (a3) {\color{white} $3$};
            \end{scope}
            \begin{scope}[every path/.style={-stealth'}]
                \draw (a1) --node[above=-1pt] {$a$} (a2);
                \draw (a2) --node[above=-1pt] {$a$} (a3);
            \end{scope}

            \begin{scope}[every node/.style={circle, draw, inner sep=0pt, outer sep=0pt, minimum size=12pt,
                node distance=25pt}]
                \node[right=0pt of B, fill=black] (b1) {\color{white} $1$};
                \node[right=of b1] (b2) {$2$};
                \node[right=of b2, fill=black] (b3) {\color{white} $3$};
                \node[below=of b2, fill=black] (b4) {\color{white} $4$};
            \end{scope}
            \begin{scope}[every path/.style={-stealth'}]
                \draw (b1) --node[above=-1pt] {$b$} (b2);
                \draw (b3) --node[above=-1pt] {$b$} (b2);
                \draw (b4) --node[left=-1pt] {$A$} (b2);
            \end{scope}
        \end{tikzpicture}
    };
    \node[left=0pt of grammar.north west] (G) {$G:$};

    \node[below=of grammar] (dt) {
    \begin{tikzpicture}
        \node[dashed, circle, draw, minimum size=12pt, inner sep=2pt, outer sep=0pt] (tS) {
        \begin{tikzpicture}[solid]
            \begin{scope}[every node/.style={circle, draw, inner sep=0pt, outer sep=0pt, minimum size=12pt,
                node distance=25pt}]
                \node[right=0pt of S] (ts1) {$1$};
                \node[right=of s1, rectangle] (tsB) {$B$};
                \node[below=of sB] (ts2) {$2$};
                \node[right=of sB] (ts3) {$3$};
            \end{scope}
            \draw (tsB) --node[above=-1pt,pos=.2] {\scriptsize 1} (ts1);
            \draw (tsB) --node[left=-1pt,pos=.2] {\scriptsize 2} (ts2);
            \draw (tsB) --node[above=-1pt,pos=.2] {\scriptsize 3} (ts3);
            \begin{scope}[every path/.style={-stealth'}]
                \draw (ts1) edge[bend left] node[below left=-2pt] {$A$} (ts2);
                \path (ts1) edge[bend right] node[below left=-2pt] {$A$} (ts2);
                \draw (ts2) --node[below right=-4pt] {$A$} (ts3);
            \end{scope}
        \end{tikzpicture}
        };
        \node[dashed, below left=2 and -.25 of tS, circle, draw, minimum size=12pt, inner sep=2pt, outer sep=0pt] (t2) {
            \begin{tikzpicture}[solid]
                \begin{scope}[every node/.style={circle, draw, inner sep=2pt, outer sep=0pt,
                    node distance=20pt, minimum size=12pt}]
                    \node (t21) {1};
                    \node[right=of t21] (t22) {2};
                \end{scope}
                \draw[-stealth'] (t21) --node[above] {$A$} (t22);
            \end{tikzpicture}
        };
        \node[dashed, below right=2 and -.25 of tS, circle, draw, minimum size=12pt, inner sep=2pt, outer sep=0pt] (t3) {
            \begin{tikzpicture}[solid]
                \begin{scope}[every node/.style={circle, draw, inner sep=2pt, outer sep=0pt,
                    node distance=20pt, minimum size=12pt}]
                    \node (t31) {1};
                    \node[rectangle, right=of t31] (t3B) {$B$};
                    \node[below=of t3B] (t32) {2};
                    \node[right=of t3B] (t33) {3};
                \end{scope}
                \draw (t3B) --node[above=-1pt,pos=.2] {\scriptsize $1$} (t31);
                \draw (t3B) --node[left=-1pt,pos=.2] {\scriptsize $2$} (t32);
                \draw (t3B) --node[above=-1pt,pos=.2] {\scriptsize $3$} (t33);
            \end{tikzpicture}
        };
        \node[dashed, left=of t2, circle, draw, minimum size=12pt, inner sep=2pt, outer sep=0pt] (t1) {
            \begin{tikzpicture}[solid]
                \begin{scope}[every node/.style={circle, draw, inner sep=2pt, outer sep=0pt,
                    node distance=20pt, minimum size=12pt}]
                    \node (t11) {1};
                    \node[right=of t11] (t12) {2};
                \end{scope}
                \draw[-stealth'] (t11) --node[above] {$A$} (t12);
            \end{tikzpicture}
        };
        \node[dashed, right=6 of t2, circle, draw, minimum size=12pt, inner sep=2pt, outer sep=0pt] (t4) {
            \begin{tikzpicture}[solid]
                \begin{scope}[every node/.style={circle, draw, inner sep=2pt, outer sep=0pt,
                    node distance=20pt, minimum size=12pt}]
                    \node (t42) {2};
                    \node[right=of t42] (t43) {3};
                \end{scope}
                \draw[-stealth'] (t42) --node[above] {$A$} (t43);
            \end{tikzpicture}
        };
        \node[dashed, below=of t3, circle, draw, minimum size=12pt, inner sep=2pt, outer sep=0pt] (t5) {
            \begin{tikzpicture}[solid]
                \begin{scope}[every node/.style={circle, draw, inner sep=2pt, outer sep=0pt,
                    node distance=20pt, minimum size=12pt}]
                    \node (t54) {4};
                    \node[right=of t54] (t52) {2};
                \end{scope}
                \draw[-stealth'] (t54) --node[above] {$A$} (t52);
            \end{tikzpicture}
        };
        \draw (tS) -- (t1);
        \draw (tS) -- (t2);
        \draw (tS) -- (t3);
        \draw (tS) -- (t4);
        \draw (t3) -- (t5);
    \end{tikzpicture}
    };
    \node[left=0pt of dt.north west] {$\dt(G):$};
\end{tikzpicture}
    \caption{A grammar and its derivation tree.}
    \label{fig:dtExample}
\end{figure}
\begin{figure}[!t]
    \centering
    \begin{tikzpicture}
    \begin{scope}[every node/.style={circle, draw, inner sep=0pt, outer sep=0pt, minimum size=12pt,
        node distance=25pt}]
        \node (s1) {$1$};
        \node[below=of s1] (A4) {$4$};
        \node[below=of A4] (A5) {$5$};
        \node[right=of A4] (s2) {$2$};
        \node[above=of s2] (A7) {$7$};
        \node[above=of A7] (B6) {$6$};
        \node[right=of A7] (s3) {$3$};
        \node[right=of s2] (A8) {$8$};
    \end{scope}
    \begin{scope}[every path/.style={-stealth'}]
        \draw (s1) edge[bend right=75] node[left=-2pt] {$a$} (A5);
        \draw (s1) edge node[left=-2pt] {$a$} (A4);
        \draw (A4) edge node[above=-2pt] {$a$} (s2);
        \draw (A5) edge node[above left=-4pt] {$a$} (s2);
        \draw (s2) --node[left=-2pt] {$a$} (A7);
        \draw (A7) --node[left=-2pt] {$a$} (B6);
        \draw (s1) --node[above left=-4pt] {$b$} (B6);
        \draw (s3) --node[above right=-4pt] {$b$} (B6);
        \draw (s2) --node[above=-2pt] {$a$} (A8);
        \draw (A8) --node[left=-2pt] {$a$} (s3);
    \end{scope}
\end{tikzpicture}
    \caption{The graph $\val(G)$ for the grammar given in Figure~\ref{fig:dtExample}.}
    \label{fig:valGExample}
\end{figure}
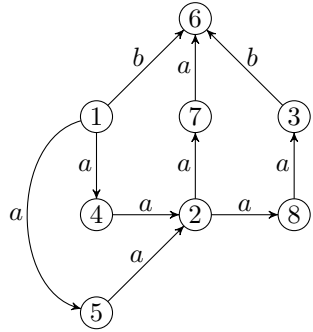
\subsection{String- and Tree-Graphs}\label{sss:stringTreeDef}
Strings and trees can be conveniently encoded as hypergraphs. As the latter structure is
more complex, this affects the size. We make use of these encodings, when discussing the
relation of graph compression to previously proposed string- and tree-compression in
Section~\ref{ss2:StringTreeRelation}. Let $w = a_1\cdots a_n$ be a string. We define the hypergraph
$\sgraph(w)$ representing the string $w$ as $(\{0,1,\ldots,n\}, E_w, \att_w, \lbl_w, \ext_w)$, where
$E_w = \{e_1,\ldots,e_n\}$. For an edge $e_i \in E_w$ we let $\att_w(e_i) = i-1\cdot i$ and
$\lbl(e_i) = w_i$. Finally the external nodes are $\ext_w = 0\cdot n$. The size of $\sgraph(w)$ is
$2n+1 = 2|w|+1$. Note that the external nodes indicate beginning and end of the string.
Figure~\ref{fig:stringgraph_example} shows $\sgraph(w)$ for $w = abca$.

\begin{figure}[!t]
    \centering
    \subfloat[]{\begin{tikzpicture}
    \node (string) {String: \textit{abca}};
    \begin{scope}[every node/.style={circle, draw, inner sep=0pt, node distance=20pt,
        minimum size=7pt}]
        \node[draw=none, right=.25 of string] {$\Rightarrow$};
        \node[right=1 of string, fill=black] (1) {};
        \node[right=of 1] (2) {};
        \node[right=of 2] (3) {};
        \node[right=of 3] (4) {};
        \node[right=of 4,fill=black] (5) {};
    \end{scope}
    \node[below=0pt of 1] {\scriptsize 1};
    \node[below=0pt of 5] {\scriptsize 2};

    \begin{scope}[every path/.style={-stealth'}]
        \draw (1) -- node[above] {$a$} (2);
        \draw (2) -- node[above] {$b$} (3);
        \draw (3) -- node[above] {$c$} (4);
        \draw (4) -- node[above] {$a$} (5);
    \end{scope}
\end{tikzpicture}\label{fig:stringgraph_example}}\qquad
    \subfloat[]{\begin{tikzpicture}
    \node (tree) {Tree:};
    \begin{scope}[every node/.style={inner sep=0pt, node distance=20pt, minimum size=7pt}]
        \node[right=1 of tree] (f) {$f$};
        \node[below left=of f] (a1) {$a$};
        \node[below right=of f] (g) {$g$};
        \node[below left=of g] (a2) {$a$};
        \node[below right=of g] (b) {$b$};
    \end{scope}
    \draw (f) -- (a1);
    \draw (f) -- (g);
    \draw (g) -- (a2);
    \draw (g) -- (b);
    
    \begin{scope}[every node/.style={draw, inner sep=0pt, node distance=20pt,
        minimum size=7pt}]
        \node[circle, above right=1 and 4 of f,fill=black] (1) {};
        \node[rectangle, below=of 1,minimum size=12pt] (tf) {$f$};
        \node[circle, below left=of tf] (2) {};
        \node[circle, below right=of tf] (3) {};
        \node[rectangle, below=of 2,minimum size=12pt] (ta1) {$a$};
        \node[rectangle, below=of 3,minimum size=12pt] (tg) {$g$};
        \node[circle, below left=of tg] (4) {};
        \node[circle, below right=of tg] (5) {};
        \node[rectangle, below=of 4,minimum size=12pt] (ta2) {$a$};
        \node[rectangle, below=of 5,minimum size=12pt] (tb) {$b$};
    \end{scope}
    \node[right=of g] {$\Rightarrow$};

    \draw (tf) --node[pos=.2,left=-1pt] {\scriptsize $1$} (1);
    \draw (tf) --node[pos=.2,above=-1pt] {\scriptsize $2$} (2);
    \draw (tf) --node[pos=.2,above=-1pt] {\scriptsize $3$} (3);
    \draw (tg) --node[pos=.2,left=-1pt] {\scriptsize $1$}  (3);
    \draw (tg) --node[pos=.2,above=-1pt] {\scriptsize $2$} (4);
    \draw (tg) --node[pos=.2,above=-1pt] {\scriptsize $3$} (5);
    \draw (2) -- (ta1);
    \draw (4) -- (ta2);
    \draw (5) -- (tb);
\end{tikzpicture}\label{fig:treegraph_example}}
    \caption{Example of a string-graph~\protect\subref{fig:stringgraph_example}, and a
    tree-graph~\protect\subref{fig:treegraph_example}.}
    \label{fig:ex_stringTreeGraph}
\end{figure}
The encoding for trees is a little less straight-forward, because the node labels are moved
into the edges, and the children of a node in a tree are ordered. We convert a node with $k$
children into a hyperedge of rank
$k+1$ (as it is connected to the parent as well). The order of the hyperedge retains the order of
the children and the parent is always located at the first node attached to a hyperedge. This is a
well-known encoding, see e.g.,~\cite{DBLP:journals/acta/EngelfrietH92}. Let $t$ be a tree, and let
$<_t$ be the lexicographical order on $D(t)$. For an address $i \in D(t)$ we denote by
$\mathsf{pos}(i)$ the position of $i$ within the order $<_t$, such that $\mathsf{pos}(\varepsilon) =
0$. We define the hypergraph $\tgraph(t)$ representing the tree $t$ as $(\{0,1,\ldots,|D(t)|\},
\{e_i \mid i \in D(t)\}, \att, \lbl, 0)$, where for $i \in D(t)$ and $k = \rank(t[i])$,
\begin{itemize}
    \item $\att(e_i) = (\mathsf{pos}(i))\cdot(\mathsf{pos}(i)+1) \cdots (\mathsf{pos}(i)+k)$, and
    \item $\lbl(e_i) = t[i]$.
\end{itemize}
The size of $\tgraph(t)$ is between $2|t|+2$ (for a monadic tree $t$) and $3|t|+2$ (for a tree where
every inner node has rank greater than $1$). Figure~\ref{fig:treegraph_example} shows $\tgraph(t)$
for $t = f(a, g(a,b))$.

\section{GraphRePair}
We present our generalization to graphs of the RePair compression scheme.
\subsection{RePair Compression Scheme for Strings and Trees}\label{sss:StringTreeRePair}
Let us first explain the classical RePair compressor for strings and trees. The RePair compression
scheme is an approximation algorithm for a smallest context-free grammar for a given string. The
smallest grammar problem is the problem of deciding, given a string $w$ and a number $k \in \N$,
whether there exists a straight-line grammar $G$ of size at most $k$ generating $w$. It
was shown by Charikar et al.~\cite{Charikar05_smallest} that the smallest grammar problem is
\NP-complete. Hence, already for string-graphs it follows that finding a smallest grammar is
\NP-hard. In a string, a digram consists of two consecutive symbols. The idea of Larsson and
Moffat's RePair Compression scheme~\cite{Larsson00_rePair} is to repeatedly replace all
(non-overlapping) occurrences of a most frequent digram by a new nonterminal. The process ends when
no repeating digrams occur. Consider as an example the string 
\[w = \mathit{abcabcabc}.\] 
It contains occurrences of the digrams \textit{ab} (3 times), \textit{bc} (3 times), and \textit{ca}
(2 times). If \textit{ab} is replaced by $A$ then we obtain \textit{AcAcAc}. If now
\textit{Ac} is replaced by $B$, then we obtain this grammar
\[
    \begin{array}{lcl}
        S &\rightarrow& \mathit{BBB} \\
        B &\rightarrow& \mathit{Ac} \\
        A &\rightarrow& \mathit{ab}
    \end{array}
\]
Note that the original string $w$ has size $|w| = 9$, whereas the grammar has size $7$ (the size of
a grammar is the sum of the sizes of its right-hand sides). Note that overlapping occurrences only
exist for digrams of the form $aa$.

To compute in linear time such a grammar from a given string requires a set of carefully designed
data structures. The input string is represented as a doubly linked list. Additionally a list of
active digrams (digrams that occur at least twice) is maintained. Every entry in the list of active
digrams points to an entry in a priority queue $Q$ of length $\sqrt{n}$ (with $n$ being the length
of the input string) containing doubly linked lists. The list with priority $i$ in $Q$ contains
every digram that occurs $i$ times, the last list contains every digram occurring $\sqrt{n}$ or more
times. The list items also contain pointers to the first occurrence of the respective digram. This
queue is used to find the most frequent digram in constant time. Larsson and
Moffat~\cite{Larsson00_rePair} prove that $\sqrt{n}$-length guarantees a linear runtime for the
complete algorithm. Basically, in a string of length $n$ the most occurrences a digram can have is
$n/2$ and in that case no other digram can occur as often in the string. There may, in general, be
more than one digram occurring more than $\sqrt{n}$ times, but the last list can not have more than
$\sqrt{n}$ entries, and the frequency of the most frequent digram is monotonely decreasing. This
overall leads to a linear running time. All these data structures are updated whenever an occurrence
is removed. Consider removing one occurrence of \textit{ab} in the example above: when doing so, one
occurrence of \textit{bc} and possibly \textit{ac} need to be removed from the list. On the other
hand, new occurrences of \textit{Ac} and possibly \textit{cA} are created.

An even smaller grammar than the one above can be obtained through \emph{pruning}, which removes
nonterminals that are referenced only once, i.e., pruning would remove the nonterminal $A$ in the
grammar above, so that the $B$-rule becomes $B \rightarrow \mathit{abc}$. This reduces the size of
the grammar from $7$ to $6$. Note that pruning can never increase the size of the grammar (but may
not decrease it).

RePair was generalized to trees by Lohrey et al.~\cite{Lohrey13_treeRePair}. Here a digram consists
of two nodes, and the $i$-edge between them, with $i$ meaning that the second node is the $i$-th
child of the first node, denoted by $(a,i,b)$. Note that overlapping
occurrences can only happen for digrams of the form $(a,i,a)$. A
digram $(a,i,b)$ has, in a binary tree, at most three ``dangling edges''. Dangling edges in
context-free tree grammars are represented by \emph{parameters} of the form $y_1,\ldots,y_k$. The
number $k$ is the \emph{rank} of the rule (digram). E.g. the $A$-rule in
Figure~\ref{fig:treeRePair_example} represents a digram of rank 1, while the $B$-rule represents a
digram of rank 3. The rank of a grammar is defined as the maximal rank of a rule occurring in the
grammar. Parameters have a similar role to external nodes in HR grammars, as they indicate
how the right-hand side is connected with the tree during a derivation. Similarly to string grammars, the size
of a tree grammar is the sum of sizes of its right-hand sides. Note that neither of the two
replacements shown in Figure~\ref{fig:treeRePair_example} make the resulting grammar smaller and
would thus not be considered \emph{contributing}: the digram represented by the $B$-rule only occurs
once, therefore replacing it makes the grammar larger. Replacing the digram represented by the
$A$-rule reduces the size of the tree by $2$, but the right-hand side of the rule has size two as
well, thus leading to a grammar that has the same size as the original graph. Were the digram to
occur once more however, the replacement would become contributing.
%
The rank of a grammar also has an impact on further
algorithms run on it, see e.g.~\cite{Lohrey06_SLPTreeComplexity,DBLP:journals/jcss/LohreyMS12}.
Keeping the rank small is thus desirable. Therefore, TreeRePair has a user-defined ``maxRank''
parameter.
\subsection{RePair on Graphs}
The first step in generalizing RePair to graphs, is defining the notion of a digram in a graph. For
simple graphs two options come up naturally: two neighboring nodes and the edge between them, or two
edges with a common node. The first option is not viable, as it does not allow to compress some very
basic graphs: consider a cycle $C_n$ consisting of $n$ consecutive edges (and $n$ nodes).
Replacing a digram of the first kind does not remove nodes nor edges, and only relabels edges by a
nonterminal. Thus no compression is achieved. We therefore use the latter option.
\begin{definition}\label{def:digram}
    A \emph{digram} over $\Sigma$ is a hypergraph $d \in \HGR(\Sigma)$, with $E_d = \{e_1, e_2\}$ such
    that 
    \begin{enumerate}
        \item for all $v \in V_d$, $v \in \att_d(e_1)$ or $v \in \att_d(e_2)$,
        \item there exists a $v \in V_d$ such that $v \in \att_d(e_1)$ and $v \in \att_d(e_2)$, and
        \item $\ext_d \neq \varepsilon$.
    \end{enumerate}
\end{definition}
\begin{figure}[!t]
    \centering
    \begin{tikzpicture}
    \begin{scope}[every node/.style={circle, draw, inner sep=0pt, node distance=20pt,
        minimum size=7pt}]
        \node[fill=black] (1c) {};
        \node[fill=black, left=of 1c] (11) {};
        \node[fill=black, right=of 1c] (12) {};
    \draw (11) -- (1c) -- (12);
        \node[below=.623 of 1c, fill=black] (1cb) {};
        \node[fill=black, left=of 1cb] (11b) {};
        \node[fill=black, right=of 1cb] (12b) {};
    \draw (11b) -- (1cb) -- (12b);

        \node[below=.623 of 1cb] (2c) {};
        \node[left=of 2c, fill=black] (21) {};
        \node[right=of 2c, fill=black] (22) {};
    \draw (21) -- (2c) -- (22);

        \node[below=.623 of 2c] (4c) {};
        \node[left=of 4c, fill=black] (41) {};
        \node[right=of 4c] (42) {};
    \draw (41) -- (4c) -- (42);

        \node[right=3 of 12, fill=black] (3c) {};
        \node[left=of 3c, fill=black] (31) {};
        \node[right=of 3c] (32) {};
    \draw (31) -- (3c) -- (32);
        \node[below=.623 of 3c, fill=black] (3cb) {};
        \node[left=of 3cb, fill=black] (31b) {};
        \node[right=of 3cb] (32b) {};
    \draw (31b) -- (3cb) -- (32b);

        \node[below=.623 of 31b] (51) {};
        \node[right=of 51, fill=black] (5c) {};
        \node[right=of 5c] (52) {};
    \draw (51) -- (5c) -- (52);

        \node[below=.623 of 51] (61) {};
        \node[right=of 61, fill=black] (6c) {};
    \draw (61) -- (6c);
    \path (6c) edge[loop right, out=320, in=50, min distance=8mm, looseness=10] (6c);

        \node[right=3 of 32, fill=black] (7c) {};
        \node[left=of 7c, fill=black] (71) {};
    \draw (71) -- (7c);
    \path (7c) edge[loop right, out=320, in=50, min distance=8mm, looseness=10] (7c);

        \node[below=.623 of 71, fill=black] (71b) {};
        \node[right=of 71b, fill=black] (7cb) {};
    \draw (71b) -- (7cb);
    \path (7cb) edge[loop right, out=320, in=50, min distance=8mm, looseness=10] (7cb);

        \node[below=.623 of 71b, fill=black] (81) {};
        \node[right=of 81] (8c) {};

        \node[below=.623 of 8c, fill=black] (9c) {};
        \node[draw=none, left=of 9c] (91) {};
    \end{scope}
    \draw (81) -- (8c);
    \path (8c) edge[loop right, out=320, in=50, min distance=8mm, looseness=10] (8c);

    \path (9c) edge[loop right, out=320, in=50, min distance=8mm, looseness=10] (9c);
    \path (9c) edge[loop left, out=220, in=130, min distance=8mm, looseness=10] (9c);

    \begin{scope}[every node/.style={inner sep=0pt, node distance=20pt, minimum size=7pt}]
        \node[below=-0pt of 1c] {\scriptsize 2};
        \node[below=-0pt of 11] {\scriptsize 1};
        \node[below=-0pt of 12] {\scriptsize 3};
        \node[below=-0pt of 1cb] {\scriptsize 1};
        \node[below=-0pt of 11b] {\scriptsize 2};
        \node[below=-0pt of 12b] {\scriptsize 3};
        \node[below=-0pt of 21] {\scriptsize 1};
        \node[below=-0pt of 22] {\scriptsize 2};
        \node[below=-0pt of 41] {\scriptsize 1};
        \node[below=-0pt of 3c] {\scriptsize 2};
        \node[below=-0pt of 31] {\scriptsize 1};
        \node[below=-0pt of 3cb] {\scriptsize 1};
        \node[below=-0pt of 31b] {\scriptsize 2};
        \node[below=-0pt of 5c] {\scriptsize 1};
        \node[below=-0pt of 6c] {\scriptsize 1};
        \node[below=-0pt of 71] {\scriptsize 1};
        \node[below=-0pt of 7c] {\scriptsize 2};
        \node[below=-0pt of 71b] {\scriptsize 2};
        \node[below=-0pt of 7cb] {\scriptsize 1};
        \node[below=-0pt of 81] {\scriptsize 1};
        \node[below=-0pt of 9c] {\scriptsize 1};
    \end{scope}

    \draw ($ (11.west) + (-.125,.125) $) to [square right brace] ($ (11b.west) + (-.125,-.125) $);
    \draw ($ (31.west) + (-.125,.125) $) to [square right brace] ($ (31b.west) + (-.125,-.125) $);
    \draw ($ (71.west) + (-.125,.125) $) to [square right brace] ($ (71b.west) + (-.125,-.125) $);

    \begin{scope}[every node/.style={inner sep=0pt, node distance=20pt, minimum size=7pt}]
        \node[below=.25 of 11] (x1) {};
        \node[left=12pt of x1] (a) {\textit{a})};
        \node[left=12pt of 21] (b) {\textit{b})};
        \node[left=12pt of 41] (c) {\textit{c})};

        \node[below=.25 of 31] (x2) {};
        \node[left=12pt of x2] (d) {\textit{d})};
        \node[left=.40 of 51] (e) {\textit{e})};
        \node[left=.40 of 61] (f) {\textit{f})};

        \node[below=.25 of 71] (x3) {};
        \node[left=12pt of x3] (g) {\textit{g})};
        \node[left=.40 of 81] (h) {\textit{h})};
        \node[left=.40 of 91] (i) {\textit{i})};
    \end{scope}

\end{tikzpicture}
\vspace*{-4mm}
    \caption{Every possible digram using two unlabeled, undirected edges.}
    \label{fig:all_digrams}
\end{figure}
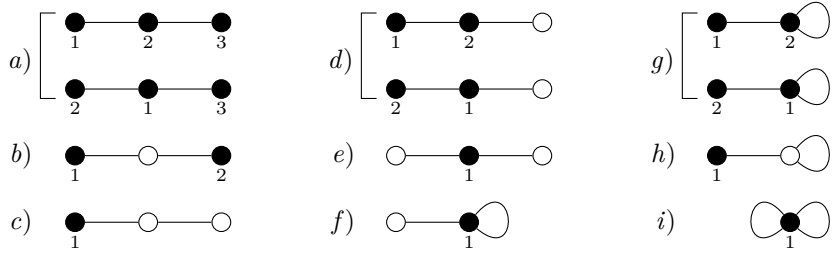
Every possible digram over undirected, unlabeled edges is shown in Figure~\ref{fig:all_digrams}.
Note that we grouped some of the digrams. This is because, one can represent one by the other by using a
different order in the $\att$-relation of the nonterminal edge. Another example for digrams are the
right-hand sides of the two $A$-rules in Figure~\ref{fig:size_example}. Note that both grammars in
the figure generate the graph on the left. However, they differ in size: the grammar in the middle
has size 12, while the grammar on the right has size 9 (recall that simple edges have size~1).

As mentioned before, RePair replaces a digram that has the largest number of non-overlapping occurrences.
\begin{definition}\label{def:occurrence}
    Let $g,d \in \HGR(\Sigma)$ such that $d$ is a digram. Let $e_1^d,e_2^d$ be the two
    edges of $d$. Let $o = \{e_1,e_2\} \subseteq E_g$ and let $V_o$ be the set of
    nodes incident with edges in $o$. Then $o$ is an \emph{occurrence} of $d$ in 
    $g$, if there is a bijection $\psi: V_o \rightarrow V_d$ such that for $i \in \{1,2\}$ and $v
    \in V_o$
    \begin{enumerate}
        \item $\psi(v) \in \att_d(e_i^d)$ if and only if $v \in \att_g(e_i)$,
        \item $\lbl_d(e_i^d) = \lbl_g(e_i)$, and
        \item $\psi(v) \in \ext_d$ if and only if $v \in \att_g(e)$ for some $e \in E_g\setminus
            \{o\}$.
    \end{enumerate}
\end{definition}
The first two conditions of this definition ensure that the two edges of an occurrence induce a
graph isomorphic to $d$. The third condition requires that every external node of $d$ is mapped to a
node in $g$ that is incident with at least one other edge. Thus, the edges marked in the graph on
the left of Figure~\ref{fig:size_example} constitute an occurrence of the digram $(b)$ of
Figure~\ref{fig:all_digrams}, which is the same as the one of the $A$-rule of the right grammar, but
not an occurrence of the digram of the $A$-rule in the middle grammar (digram $(a)$ of
Figure~\ref{fig:all_digrams}). We call the nodes in $V_o$ that are mapped to external nodes of $d$,
\emph{attachment nodes} of $o$, and the ones mapped to internal nodes, \emph{removal nodes} of $o$.
Two occurrences $o_1,o_2$ of the same digram $d$ are called \emph{overlapping} if $o_1 \cap o_2 \neq
\emptyset$. Otherwise they are \emph{non-overlapping}. If there are at least two non-overlapping
occurrences of $d$ in a graph $g$, we call $d$ an \emph{active digram}.

\begin{algorithm}
    \begin{algorithmic}[1]
        \Require{Graph $g = (V,E,\att,\lbl,\ext)$}
        \Ensure{Grammar $G$ with $\val(G) \cong g$ and $|G'| \leq |g|$}
        \State $N = P = \emptyset$
        \State $S \gets g$
        \State $L(d) \gets$ list of non-overlapping occurrences for every digram $d$ appearing in $g$\label{item:traverse}
        \While{$|L(d)| > 1$ for at least one digram $d$}
        \State Select a most frequent digram $\mathit{mfd}$\label{item:repeat}
        \State $A \gets$ fresh nonterminal of rank $\rank(\mathit{mfd})$
        \State In the graph $S$, replace every occurrence $o$ in $L(\mathit{mfd})$ by a new $A$-edge
        \State $N \gets N \cup \{A\}$
        \State $P \gets P \cup \{A \rightarrow \mathit{mfd}\}$
        \State Update the occurrence lists\label{item:update}
        \EndWhile
        \State \textbf{return} $G = (N, P, S)$
    \end{algorithmic}
    \caption{Main replacement-loop of the \graphrepair algorithm}
    \label{alg:repair}
\end{algorithm}
Let $A$ be a symbol of rank $k$ and $d$ a digram of rank $k$. The \emph{replacement of an occurrence
$o$ of $d$ in $g$ by $A$} is the graph obtained from $g$ by removing the edges in $o$ from $g$,
removing the removal nodes of $o$, and adding an edge labeled $A$ that is attached to the attachment nodes
of $o$, in such a way that applying the rule $A \rightarrow d$ yields the original graph.
Consider again Figure~\ref{fig:size_example}: the start graph of the right grammar is the 
replacement of the shaded occurrence of $d$ in the left graph by $A$
(where $d$ is $\rhs(A)$ of the grammar on the right).
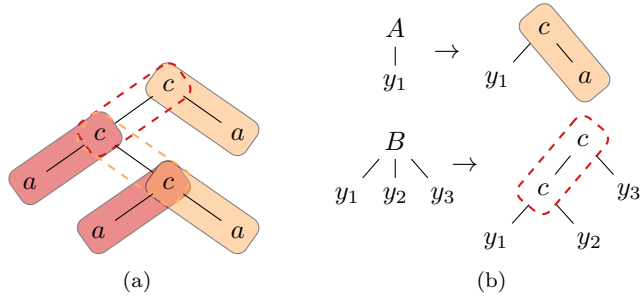
\begin{figure}[!t]
    \centering
    \subfloat[]{\begin{tikzpicture}
    \node (1) {$c$};
    \node[below right=.25 and .5 of 1] (2) {$a$};
    \node[below left=.25 and .5 of 1] (3) {$c$};
    \node[below left=.25 and .5 of 3] (4) {$a$};
    \node[below right=.25 and .5 of 3] (5) {$c$};
    \node[below left=.25 and .5 of 5] (6) {$a$};
    \node[below right=.25 and .5 of 5] (7) {$a$};

    \begin{scope}[every node/.style={coordinate}]
        \draw (1) --node (c1) {} (2);
        \draw (1) --node (c2) {} (3);
        \draw (3) --node (c3) {} (4);
        \draw (3) --node (c4) {} (5);
        \draw (5) --node (c5) {} (6);
        \draw (5) --node (c6) {} (7);
    \end{scope}

    \begin{pgfonlayer}{background}
        \begin{scope}[every node/.style={draw, rectangle, rounded corners, minimum width=4.5em, minimum height=1.5em, opacity=.5}]
            \node[rotate=145, fill=color2] at (c1) {};
            \node[rotate=35, fill=color1] at (c3) {};
            \node[rotate=35, fill=color1] at (c5) {};
            \node[rotate=145, fill=color2] at (c6) {};
        \end{scope}
        \begin{scope}[every node/.style={draw, rectangle, rounded corners, minimum width=4.5em, minimum height=1.5em}]
            \node[rotate=35, thick, dashed, color=color1] at (c2) {};
            \node[rotate=145, thick, dashed, color=color2] at (c4) {};
        \end{scope}
    \end{pgfonlayer}

\end{tikzpicture}\label{fig:treeRePair_example_tree}}\qquad
    \subfloat[]{\begin{tikzpicture}
    \node (A) {$A$};
    \node[below=.25 of A] (par) {$y_1$};
    \node[below right=-0.1 and .125 of A] (to) {$\rightarrow$};
    \node[right=1.5 of A] (c) {$c$};
    \node[below left=.25 and .125 of c] (y) {$y_1$};
    \node[below right=.25 and .125 of c] (a) {$a$};

    \draw (A) -- (par);
    \draw (c) -- (y);
    \draw (c) --node[coordinate] (x1) {} (a);
    
    \node[below=1 of A] (B) {$B$};
    \node[below left=.25 and .075 of B] (par1) {$y_1$};
    \node[below=.25 of B] (par2) {$y_2$};
    \node[below right=.25 and .075 of B] (par3) {$y_3$};
    \node[below right=-0.1 and .35 of B] (to1) {$\rightarrow$};

    \node[right=2 of B] (c1) {$c$};
    \node[below left=.25 and .125 of c1] (c2) {$c$};
    \node[below right=.25 and .125 of c1] (y1) {$y_3$};
    \node[below left=.25 and .125 of c2] (y2) {$y_1$};
    \node[below right=.25 and .125 of c2] (y3) {$y_2$};

    \draw (B) -- (par1);
    \draw (B) -- (par2);
    \draw (B) -- (par3);
    \draw (c1) --node[draw, thick, rectangle, rounded corners, color1, rotate=50, minimum
    height=1.5em, minimum width=4em, dashed] {} (c2);
    \draw (c1) -- (y1);
    \draw (c2) -- (y2);
    \draw (c2) -- (y3);

    \begin{pgfonlayer}{background}
        \begin{scope}[every node/.style={draw, rectangle, rounded corners, minimum height=1.5em,
            minimum width=4em, opacity=.5}]
            \node[fill=color2, rotate=130] at (x1) {};
        \end{scope}
    \end{pgfonlayer}
\end{tikzpicture}\label{fig:treeRePair_example_productions}}
    \caption{Different digrams TreeRePair considers in a tree and their replacement rules.}
    \label{fig:treeRePair_example}
\end{figure}
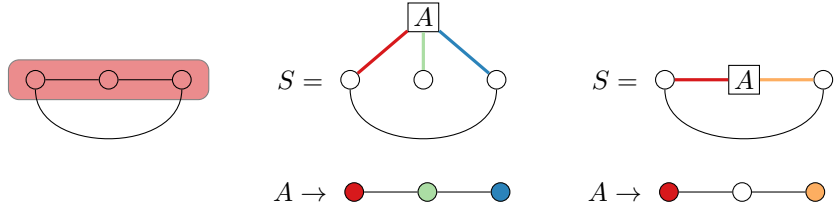
\begin{figure*}[!t]
    \centering
    \begin{tikzpicture}
    \begin{scope}[every node/.style={circle, draw, inner sep=0pt, minimum size=7pt, node
        distance=20pt}]
        \node (o1) {};
        \node[right=of o1] (o2) {};
        \node[right=of o2] (o3) {};
    \end{scope}
    \begin{pgfonlayer}{background}
        \node[draw, rectangle, rounded corners, minimum width=7.5em, minimum height=1.5em, fill=color1,
        opacity=.5] at (o2.center) {};
    \end{pgfonlayer}
    \draw (o1) -- (o2);
    \draw (o2) -- (o3);
    \draw (o3) .. controls +(0,-1) and +(0,-1) .. (o1);

    \node[right=of o3] (sS) {$S = $};
    \node[below=1.0 of sS] (sA) {$A \rightarrow$};
    \begin{scope}[every node/.style={circle, draw, inner sep=0pt, minimum size=7pt, node
        distance=20pt}]
        \node[right=.1 of sS] (s1) {};
        \node[right=of s1] (s2) {};
        \node[right=of s2] (s3) {};
        \node[above=.5 of s2, rectangle,inner sep=2pt] (A) {\normalsize $A$};
        \node[right=.1 of sA, fill=color1] (sA1) {};
        \node[right=of sA1, fill=color3] (sA2) {};
        \node[right=of sA2, fill=color4] (sA3) {};
    \end{scope}
    \draw[very thick, color1] (A) -- (s1);
    \draw[very thick, color3] (A) -- (s2);
    \draw[very thick, color4] (A) -- (s3);
    \draw (s3) .. controls +(0,-1) and +(0,-1) .. (s1);
    \draw (sA1) -- (sA2);
    \draw (sA2) -- (sA3);

    \node[right=of s3] (gS) {$S = $};
    \node[below=1.0 of gS] (gA) {$A \rightarrow$};
    \begin{scope}[every node/.style={circle, draw, inner sep=0pt, minimum size=7pt, node
        distance=20pt}]
        \node[right=.1 of gS] (g1) {};
        \node[right=of g1, rectangle,inner sep=2pt] (g2) {\normalsize $A$};
        \node[right=of g2] (g3) {};
        \node[right=.1 of gA, fill=color1] (gA1) {};
        \node[right=of gA1] (gA2) {};
        \node[right=of gA2, fill=color2] (gA3) {};
    \end{scope}
    \draw[very thick, color1] (g1) -- (g2);
    \draw[very thick, color2] (g2) -- (g3);
    \draw (g3) .. controls +(0,-1) and +(0,-1) .. (g1);
    \draw (gA1) -- (gA2);
    \draw (gA2) -- (gA3);
\end{tikzpicture}
    \caption{Two ways of replacing a pair of edges by a nonterminal. Only the one on the right is
    considered by \graphrepair in this case.}
    \label{fig:size_example}
\end{figure*}
\subsection{The Algorithm}
Given a graph $g$, \graphrepair first performs the steps given in Algorithm~\ref{alg:repair}.
As an additional step after the loop finishes, 
we connect the disconnected components of the graph by
virtual edges and run the algorithm again before pruning (and in a final step remove the
virtual edges from the grammar). This improves the compression on graphs
with disconnected components. We now provide more details on the steps in lines~\ref{item:traverse}
and~\ref{item:update}, and on the pruning step.
\subsubsection{Counting Occurrences (Step~\ref{item:traverse})}
We aim to find a set of non-overlapping occurrences of a digram that occurs in $g$, that is of
maximal size. As stated in the Introduction, this can be solved in $O(|E|^4)$ time. This is done by
reducing it to maximum matching: Let $g$ be a graph with $n$ nodes and $m$ edges, and $d$ a digram. The first
step is to compute a set $O$ containing all occurrences of $d$ in $g$, including overlapping ones.
This already takes $O(m^2)$ time, as $|O| \in O(m^2)$. To see this, consider a ``star''; a graph with
one central node connected to $m$ other nodes, which in turn have no neighbors except the central
node. There are $m(m-1)/2$ pairs of edges and thus as many overlapping occurrences of
the same digram. We encode this set of occurrences into a graph $g_O$ such that every
occurrence $\{e_1,e_2\}$ in $O$ is represented by an edge from $e_1$ to $e_2$ in $g_O$. Thus, $g_O$
potentially has a node for every edge in $g$, and an edge for every occurrence in $O$. It can be
shown that a maximum matching (i.e., a maximal cardinality set of edges $X$ such that no two edges
$e_1,e_2 \in X$ have a common node) on $g_O$ corresponds to a maximal non-overlapping
subset of $O$. Computing a maximum matching in graphs can be done in $O(|V|^2|E|)$ by using,
e.g.~the Blossom-algorithm~\cite{blossom}. As $g_O$ has $O(m)$ nodes, and $|O|$ edges, the total
running time is in $O(m^2|O|) = O(m^4)$.

Doing the above for one digram is already prohibitively expensive. Thus we
approximate. Let $\ord$ be an order on the nodes of $g$. We traverse the nodes of $g$ in this
order, and at every node iterate through occurrences centered around this node, as detailed in the
next section below.
The node order \ord heavily influences the compression behavior. Consider the graph in
Figure~\ref{fig:counting_example}. We want to find the non-overlapping occurrences of the first
digram in Figure~\ref{fig:all_digrams}. Note that all three nodes are external, that is, we are
looking for three nodes $u,v,w$ such that $u$ has an edge to $v$, $v$ has an edge to $w$, and all
three nodes also have edges to other nodes (different from $u,v,w$).
Figure~\ref{fig:counting_example_center} shows
the non-overlapping occurrences found if we start in the central node of the graph. Using the DFS-type order
starting at a different node
given by the numbers in Figure~\ref{fig:counting_example_dfs}, three occurrences are determined.
Using the ``jumping'' order in Figure~\ref{fig:counting_example_opt}, a maximum set of four
non-overlapping occurrences is found. 
Note that for strings and trees, maximum sets of non-overlapping occurrences can be
obtained in linear time using left-to-right and post order, respectively, and assigning occurrences
in a greedy way.
%
\begin{figure*}[!t]
    \centering
    \subfloat[]{\begin{tikzpicture}
    \begin{scope}[every node/.style={circle, draw, inner sep=0pt, minimum size=12pt, node
        distance=15pt}]
        \node (c) {\normalsize 1};
        \node[left=of c] (l) {};
        \node[left=of l] (l2) {};
        \node[above=of l2] (l1) {};
        \node[above=of c] (a) {};
        \node[left=of a] (a1) {};
        \node[right=of a] (a2) {};
        \node[right=of c] (r) {};
        \node[right=of r] (r1) {};
        \node[below=of r1] (r2) {};
        \node[below=of c] (b) {};
        \node[left=of b] (b1) {};
        \node[right=of b] (b2) {};
    \end{scope}

    \begin{pgfonlayer}{background}
        \begin{scope}[every node/.style={draw, rectangle, minimum height=18pt, minimum width=72pt, rounded corners, opacity=.5}]
            \node[fill=color1] at (c.center) {};
            \node[rotate=90, fill=color2] at (c.center) {};
        \end{scope}
    \end{pgfonlayer}

    \draw (c) -- (l);
    \draw (l) -- (l1);
    \draw (l) -- (l2);
    \draw (c) -- (a);
    \draw (a) -- (a1);
    \draw (a) -- (a2);
    \draw (c) -- (r);
    \draw (r) -- (r1);
    \draw (r) -- (r2);
    \draw (c) -- (b);
    \draw (b) -- (b1);
    \draw (b) -- (b2);
    \draw (a2) -- (r1);
    \draw (b1) -- (l2);
\end{tikzpicture}\label{fig:counting_example_center}}
    \subfloat[]{\begin{tikzpicture}
    \begin{scope}[every node/.style={circle, draw, inner sep=0pt, minimum size=12pt, node distance=15pt}]
        \node (c) {\normalsize 5};
        \node[left=of c] (l) {};
        \node[left=of l] (l2) {};
        \node[above=of l2] (l1) {};
        \node[above=of c] (a) {\normalsize 4};
        \node[left=of a] (a1) {};
        \node[right=of a] (a2) {\normalsize 3};
        \node[right=of c] (r) {\normalsize 1};
        \node[right=of r] (r1) {\normalsize 2};
        \node[below=of r1] (r2) {};
        \node[below=of c] (b) {};
        \node[left=of b] (b1) {};
        \node[right=of b] (b2) {};
    \end{scope}

    \begin{pgfonlayer}{background}
        \begin{scope}[every node/.style={draw, rectangle, minimum height=18pt, minimum
            width=72pt, rounded corners, opacity=.5}]
            \node[fill=color1] at (r.center) {};
        \end{scope}
        \begin{scope}[every path/.style={rounded corners, opacity=.5}]
            \draw[fill=color3] ($ (c.center) + (0,9pt) $) -- +(9pt,0) --
            +(9pt,-45pt) -- +(-9pt,-45pt) --
            +(-9pt,-18pt) -- +(-36pt,-18pt) -- +(-36pt,0) --  ($ (c.center) + (0,9pt) $);
        \end{scope}
        \begin{scope}[every path/.style={rounded corners, opacity=.5}]
            \draw[fill=color2, rotate=90] ($ (a.center) + (0,9pt) $) -- +(9pt,0) --
            +(9pt,-45pt) -- +(-9pt,-45pt) --
            +(-9pt,-18pt) -- +(-36pt,-18pt) -- +(-36pt,0) --  ($ (a.center) + (0,9pt) $);
        \end{scope}
    \end{pgfonlayer}

    \draw (c) -- (l);
    \draw (l) -- (l1);
    \draw (l) -- (l2);
    \draw (c) -- (a);
    \draw (a) -- (a1);
    \draw (a) -- (a2);
    \draw (c) -- (r);
    \draw (r) -- (r1);
    \draw (r) -- (r2);
    \draw (c) -- (b);
    \draw (b) -- (b1);
    \draw (b) -- (b2);
    \draw (a2) -- (r1);
    \draw (b1) -- (l2);
\end{tikzpicture}\label{fig:counting_example_dfs}}
    \subfloat[]{\begin{tikzpicture}
    \begin{scope}[every node/.style={circle, draw, inner sep=0pt, minimum size=12pt, node distance=15pt}]
        \node (c) {};
        \node[left=of c] (l) {\normalsize 3};
        \node[left=of l] (l2) {};
        \node[above=of l2] (l1) {};
        \node[above=of c] (a) {\normalsize 2};
        \node[left=of a] (a1) {};
        \node[right=of a] (a2) {};
        \node[right=of c] (r) {\normalsize 1};
        \node[right=of r] (r1) {};
        \node[below=of r1] (r2) {};
        \node[below=of c] (b) {\normalsize 4};
        \node[left=of b] (b1) {};
        \node[right=of b] (b2) {};
    \end{scope}

    \begin{pgfonlayer}{background}
        \begin{scope}[every node/.style={draw, rectangle, minimum height=18pt, minimum width=72pt, rounded corners, opacity=.5}]
            \node[fill=color1] at (r.center) {};
            \node[fill=color3] at (l.center) {};
        \end{scope}
        \begin{scope}[every path/.style={rounded corners, opacity=.5}]
            \draw[fill=color2, rotate=90] ($ (a.center) + (0,9pt) $) -- +(9pt,0) --
            +(9pt,-45pt) -- +(-9pt,-45pt) --
            +(-9pt,-18pt) -- +(-36pt,-18pt) -- +(-36pt,0) --  ($ (a.center) + (0,9pt) $);
            \draw[fill=color4, rotate=270] ($ (b.center) + (0,9pt) $) -- +(9pt,0) --
            +(9pt,-45pt) -- +(-9pt,-45pt) --
            +(-9pt,-18pt) -- +(-36pt,-18pt) -- +(-36pt,0) -- ($ (b.center) + (0,9pt) $);
        \end{scope}
    \end{pgfonlayer}

    \draw (c) -- (l);
    \draw (l) -- (l1);
    \draw (l) -- (l2);
    \draw (c) -- (a);
    \draw (a) -- (a1);
    \draw (a) -- (a2);
    \draw (c) -- (r);
    \draw (r) -- (r1);
    \draw (r) -- (r2);
    \draw (c) -- (b);
    \draw (b) -- (b1);
    \draw (b) -- (b2);
    \draw (a2) -- (r1);
    \draw (b1) -- (l2);
\end{tikzpicture}\label{fig:counting_example_opt}}
    \caption{Three different traversals visiting the nodes in the numbered order to find occurrences
        of digram $(a)$ of Figure~\ref{fig:all_digrams}.
    The occurrences found
    for each traversal are marked.
}
    \label{fig:counting_example}
\end{figure*}
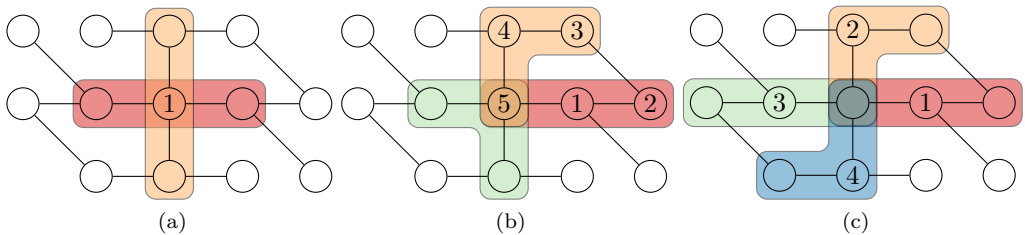
\begin{figure}[!t]
    \centering
    \begin{tikzpicture}[]
    \begin{scope}[every node/.style={circle, draw, inner sep=0pt, minimum size=12pt, node
        distance=25pt}]
        \node (c) {\normalsize 1};
        \node[left=of c] (l) {\normalsize 2};
        \node[left=of l] (l2) {\normalsize 3};
        \node[above=of c] (a) {\normalsize 4};
        \node[right=of a] (a2) {\normalsize 5};
        \node[right=of c] (r) {\normalsize 6};
        \node[right=of r] (r1) {\normalsize 7};
        \node[below=of c] (b) {\normalsize 8};
        \node[left=of b] (b1) {\normalsize 9};
    \end{scope}


    \draw (c) -- (l);
    \path (l) --node[rectangle, draw, fill=white, inner sep=2pt, outer sep=0pt, above=10pt] (1) {\normalsize $A$} (l2);
    \draw[color=color1, very thick] (l) -- (1);
    \draw[color=color2, very thick] (1) -- (l2);
    \draw (c) -- (a);
    \path (a) --node[rectangle, draw, fill=white, inner sep=2pt, outer sep=0pt, below=10pt] (2) {\normalsize $A$} (a2);
    \draw[color=color1, very thick] (a) -- (2);                                             
    \draw[color=color2, very thick] (2) -- (a2);                                            
    \draw (c) -- (r);                                                               
    \path (r) --node[rectangle, draw, fill=white, inner sep=2pt, outer sep=0pt, below=10pt] (3) {\normalsize $A$} (r1);
    \draw[color=color1, very thick] (r) -- (3);                                             
    \draw[color=color2, very thick] (3) -- (r1);                                            
    \draw (c) -- (b);                                                               
    \path (b) --node[rectangle, draw, fill=white, inner sep=2pt, outer sep=0pt, above=10pt] (4) {\normalsize $A$} (b1);
    \draw[color=color1, very thick] (b) -- (4);
    \draw[color=color2, very thick] (4) -- (b1);
    \draw (a2) -- (r1);
    \draw (b1) -- (l2);

    \node[right=of r1] (A) {$A \rightarrow$};
    \begin{scope}[every node/.style={circle, draw, inner sep=0pt, minimum size=12pt, node
        distance=20pt}]
        \node[right=.1 of A, fill=color1] (A1) {\normalsize \color{white} 1};
        \node[right=of A1, fill=color2] (A2) {\normalsize 2};
        \node[above=of A1] (A3) {\normalsize 3};
    \end{scope}
    \draw (A1) -- (A2);
    \draw (A1) -- (A3);
\end{tikzpicture}
    \caption{Example of a hyperedge replacement grammar.}
    \label{fig:grammar_example}
\end{figure}
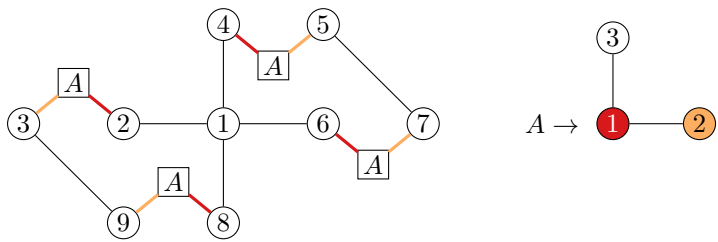
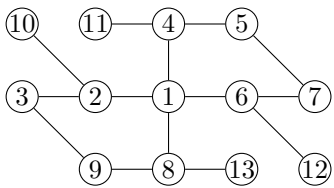
\begin{figure}[!t]
    \centering
    \begin{tikzpicture}
    \begin{scope}[every node/.style={circle, draw, inner sep=0pt, minimum size=12pt, node distance=15pt}]
        \node (c) {\normalsize 1};
        \node[left=of c] (l) {\normalsize 2};
        \node[left=of l] (l2) {\normalsize 3};
        \node[above=of l2] (l1) {\normalsize 10};
        \node[above=of c] (a) {\normalsize 4};
        \node[left=of a] (a1) {\normalsize 11};
        \node[right=of a] (a2) {\normalsize 5};
        \node[right=of c] (r) {\normalsize 6};
        \node[right=of r] (r1) {\normalsize 7};
        \node[below=of r1] (r2) {\normalsize 12};
        \node[below=of c] (b) {\normalsize 8};
        \node[left=of b] (b1) {\normalsize 9};
        \node[right=of b] (b2) {\normalsize 13};
    \end{scope}

    \draw (c) -- (l);
    \draw (l) -- (l1);
    \draw (l) -- (l2);
    \draw (c) -- (a);
    \draw (a) -- (a1);
    \draw (a) -- (a2);
    \draw (c) -- (r);
    \draw (r) -- (r1);
    \draw (r) -- (r2);
    \draw (c) -- (b);
    \draw (b) -- (b1);
    \draw (b) -- (b2);
    \draw (a2) -- (r1);
    \draw (b1) -- (l2);
\end{tikzpicture}
    \caption{Unique result of the derivation of the grammar in Figure~\ref{fig:grammar_example} when
    ordering the nonterminal edges by their $\att$-relation as described in Section~\ref{sss:nodeIDs}.}
    \label{fig:derivation_order_example}
\end{figure}
Our implementation offers a choice of four different orders, as explained in
Section~\ref{ss2:node_order}. Implementation details of this step are given in 
Section~\ref{ss2:compression}.
\subsubsection{Updating Occurrence Lists (Step~\ref{item:update})}\label{ss2:updating}
Let $o$ be an occurrence of $d$ that is being replaced and let $F$ be the set of edges in $g$ that
are incident with the attachment nodes of $o$. Removing $o$ from the graph can only affect the
occurrence lists of digrams that have occurrences using edges in $F$. In particular, for the two
edges in $o$ ($e_1$ and $e_2$) we need to remove every occurrence that either $e_1$ or $e_2$
participate in from the corresponding occurrence list. To do this efficiently, we store a
list of digrams with every edge for which the edge appears in an occurrence. After the replacement
let $e'$ be the new $A$-labeled nonterminal edge in $g$. Then every pair of edges $\{e', e\}, e \in
F$ is an occurrence of a digram, and is thus inserted into the appropriate occurrence list (and the
frequency counts are updated accordingly).
The last step has again complexity issues. Let $k$ be the sum of degrees of all attachment nodes of
$o$. Then there are $O(k)$ pairs of edges to be considered as occurrences with the new nonterminal
edge. This is not a problem in itself, but consider the following situation: let there be an
attachment node $v$ with degree $k$. Further, let every one of the $k$ edges around $v$ be part of a
distinct occurrence of the digram $d$ being replaced. As explained, when replacing one of these
occurrences, the other $k-1$ edges are considered as occurrences with the new nonterminal. Now
however, when replacing the next one, the remaining $k-2$ edges have to be considered again. Thus,
during all the replacement steps, we would again need to consider $O(k^2)$ occurrences. To solve
this issue, we consider first the case of a graph without edge labels and directions. Let node $v$ have
degree $k$, i.e., there are $O(k^2)$ edge pairs that are occurrences of some digram. After
inserting one of them into the occurrence list, we take the two edges involved out of
further consideration, because every other occurrence using one of them would overlap with this one.
Let $E$ be the edges incident with $v$ that have not been added into occurrence lists yet. We
partition $E$ into two sets $E_1 =  \{e_1,\ldots, e_n\}$ and $E_2 = \{f_1,\ldots,f_{m}\}$ where $m-n
\in \{0,1\}$. We then add $\Occ(E_1,E_2) = \{\{e_i,f_i\}\mid 1 \leq i \leq n\}$ as the occurrences
around $v$ to the list. Note that only if all occurrences around $v$ are occurrences of the
\emph{same} digram, this procedure guarantees to produce a maximum non-overlapping set of
occurrences around $v$.

From here, adding labels (or directions, which can be viewed as labels) is straightforward. For two
labels $\sigma_1$ and $\sigma_2$ let $E_{\sigma_1,\sigma_2}(v)$ be the set of edges incident with $v$
labeled $\sigma_1$ and not yet added to an occurrence list for a digram with an edge labeled
$\sigma_2$. Then for distinct symbols $\sigma_1$ and $\sigma_2$ add the occurrences
$\Occ(E_{\sigma_1,\sigma_2}(v), E_{\sigma_2,\sigma_1}(v))$ and for every $\sigma$ add the
occurrences where both edges have the same symbol by splitting $E_{\sigma,\sigma}(v)$ as above. This
takes $O(|\Sigma|^2)$ time (where $|\Sigma|$ is expected to be small).

The above method works during the initial traversal, however every time an occurrence of a digram is
replaced, the occurrence lists of affected digrams need to be updated. After inserting the new
nonterminal edge $e'$ we need to select an edge $e$ from the set of neighboring edges $F$ to make an
occurrence of a new digram. This can again lead to a situation where time quadratic in the node
degree is necessary, unless this selection is done in constant time. Our implementation does this by
storing a list of available edges for every pair of edge labels attached to every node of the graph.
For every edge label the first edge $e$ in the respective list is selected to create the occurrence
$\{e',e\}$. This takes $O(|\Sigma|)$ time.
\subsubsection{Pruning}
%
As mentioned in Section~\ref{sss:StringTreeRePair}, pruning can reduce the size of a grammar. For
string grammars, pruning removes every nonterminal that is referenced only once in the grammar. The
presence of parameters for tree grammars, or external nodes for HR grammars, complicates the pruning
step. While, every nonterminal only referenced once can be removed just like before, it is then
possible to still have nonterminals which are referenced more than once but do not contribute to the
compression (this is \emph{not} the case for strings). To see this, consider the following tree
grammar (from~\cite[Example 8]{Lohrey13_treeRePair}):
\begin{align*}
    S &\rightarrow f(A(a,a), B(A(a,a))) \\
    A(y_1,y_2) &\rightarrow f(B(y_1), y_2) \\
    B(y_1) &\rightarrow f(y_1,a)
\end{align*}
In this grammar $A$ and $B$ are each referenced twice, but do not ``contribute'' to
compression. To see this, note that the grammar above has size 12 and consider the grammars obtained
by removing either nonterminal (left: after removing $A$, right: after removing $B$):
\begin{align*}
    S &\rightarrow f(f(B(a),a), B(f(B(a),a))) & S &\rightarrow f(A(a,a), f(A(a,a),a))\\
    B(y_1) &\rightarrow f(y_1,a) & A(y_1,y_2) &\rightarrow f(f(y_1,a), y_2)
\end{align*}
These grammars now have sizes 11 and 12, respectively. Clearly, both rules where therefore not
contributing in the full grammar above, as neither grammar is now larger than before. However,
removing $A$ first did make the grammar \emph{smaller} than it was before. Further, if we were to
remove one more nonterminal, we would in both cases get the original tree
\[f(f(f(a,a),a),f(f(f(a,a),a),a)),\]
which is still of size $12$. This shows that the $A$ rule was still not contributing after removal
of the $B$-rule, but the $B$-rule turns into a contributing rule if the $A$-rule is removed first.
Thus the order in which the nonterminals are removed may also affect the quality of the pruning.
Finding an optimal order is a complex optimization problem as mentioned
in~\cite[Section~3.2]{Lohrey13_treeRePair}.

Let us consider pruning for SL-HR grammars. For a nonterminal $A$ of rank $n$ we define
$\HG(A) = (\{v_1,\ldots,v_n\}, \{e\}, \lbl(e) = A, \att(e) = v_1\cdots v_n, \ext=v_1\cdots v_n)$.
The \emph{contribution of $A$} is defined as
\[\sav(A) = |\refs(A)| \cdot (|\rhs(A)| - |\HG(A)|) - |\rhs(A)|,\]
where $\refs(A) = |\{e \in E_S \mid \lbl(e) = A\}| + \sum_{B \in N}|\{e \in E_{\rhs(B)} \mid \lbl(e)
= A\}|$ is the number of edges labeled $A$ in the grammar. The contribution of $A$ counts by how
much the size of the grammar changes when every instance of the nonterminal is derived, i.e., it
measures how much $A$ contributes towards compression. If $\sav(A)>0$ then we say that \emph{$A$
contributes towards compression}. The grammar in Figure~\ref{fig:grammar_example} represents the
graph of Figure~\ref{fig:counting_example}. Here, the $A$-rule has $\sav(A)= 4 \cdot (5 - 3) - 5 =
3$ and thus contributes to the compression. The reader may verify that the sizes of this grammar and
the graph (given in Figure~\ref{fig:derivation_order_example}, with the IDs assigned as explained in
Section~\ref{sss:nodeIDs}) differ by exactly three. Note that, as we remove rules, the contribution
of other nonterminals might change as edges are added or deleted. Therefore, the effectiveness of
pruning depends on the order in which the nonterminals are considered. For TreeRePair, a bottom-up
hierarchical order works well in practice. We use a similar approach. First every nonterminal $A$
with $\refs(A) = 1$ is removed, because, by definition, they do not contribute towards compression.
The order does not matter for this step. To remove $A$ we apply its rule to each $A$-edge in the
grammar and remove the $A$-rule. Then we traverse the nonterminals in bottom-up $\SLOrder$-order
(see Preliminaries), removing each nonterminal with $\sav(A) \leq 0$.
\subsection{Relation to Compression of Strings and Trees}\label{ss2:StringTreeRelation}
Our algorithm is a generalization of previous RePair-variants for
strings~\cite{Larsson00_rePair} and trees~\cite{Lohrey13_treeRePair}. Let $w$ be a string, and
$d_1,\ldots,d_n$ be the digrams that string-RePair would replace during its run on $w$. If we run
\graphrepair on $\sgraph(w)$, and use a left-to-right node order, then the digrams
$d_1',\ldots,d_n'$ that \graphrepair replaces are exactly $\sgraph(d_1)$ to $\sgraph(d_n)$.
Therefore, we obtain a graph-encoding of the same string-grammar that the original RePair
generates. The comparison with TreeRePair is less clear. TreeRePair compresses
ordered trees where the nodes, not the edges, are labeled from a finite alphabet. For \graphrepair
to compress trees in the same way, we need to represent the tree as a tree-graph as given in
Section~\ref{sss:stringTreeDef} and use a similar postOrder-traversal
as TreeRePair does. We can experimentally confirm that \graphrepair achieves
comparable (within the same order of magnitude) compression ratios as TreeRePair.

An SL tree grammar achieves compression by representing only once repeating \emph{connected
subgraphs} of the given tree. An SL-HR grammar is more general, because it can share repeating
\emph{disconnected} subgraphs. Does this allow, for an SL-HR grammar, to compress a tree more
effectively than any SL tree grammar? We show that this is \emph{not} the case by proving that every
SL-HR grammar $G$ that represents a tree (string), can be converted to an SL-HR grammar, that is
only slightly larger than $G$ and for which every right-hand side is a tree (string). This is not
trivial, as the expressive power of graph grammars is higher than the one of context-free string
grammars. For example, it is possible to generate the language $\{\sgraph(w) \mid w = a^nb^nc^n,
n\in \N\}$ using HR grammars, but the string-language represented here is not context-free. We call
an SL-HR grammar $G$ \emph{tree generating}, if $\val(G)$ is a tree-graph as defined in
Section~\ref{sss:stringTreeDef}. Similarly, $G$ is called \emph{string-generating}, if $\val(G)$ is
a string-graph. Recall that a path from a node $u$ to a node $v$ in a hypergraph is a tuple
$(e_1,\ldots,e_n)$ of edges, which connect $u$ to $v$, such that an edge of rank $k$ is considered
to have one source (the first attached node) and $k-1$ target nodes, and that we call a path
internal, if it uses only internal nodes (with the possible exception of $u$ and $v$). For example,
in Figure~\ref{fig:lineGraphExample} the right-hand side of $A$ has a path $(C,B)$ from the green to
the red node, but it is \emph{not} an internal path, because it crosses the orange node.
\begin{definition}
    Let $g$ be a graph with the set $X$ of external nodes. Then the \emph{line-structure} $\ls(g) =
    (X,E)$ is the directed, unlabeled graph such that for all $u,v \in X, (u,v) \in E$ if and only
    if there is an internal path from $u$ to $v$ in $\val(g)$.
    \label{def:lineGraph}
\end{definition}
For a nonterminal $A$ of an SL-HR grammar we denote $\ls(\rhs(A))$ by $\ls(A)$. For an example of a
line-structure, see Figure~\ref{fig:lineGraphExample}. Note that if a grammar $G$ is
tree-generating, $\ls(A)$ is always a rooted forest for every nonterminal $A$ in $G$ (otherwise
there would be a cycle in $\val(A)$, because $G$ is tree-generating and $A$ is useful). We refer to
this property by (LG). However, the right-hand side of $A$ may not be a tree-graph itself, as in the
$A$-rule in Figure~\ref{fig:lineGraphExample}. There are two ways in which $\rhs(A)$ for a rule in a
tree-generating grammar can not be a tree-graph: it can contain cycles, or nodes with multiple
parents. Both of these cases can only happen with nonterminal edges, however. The line-structure is
a tool to split these nonterminal edges, such that the resulting graphs are tree-graphs. 
\begin{figure}[!t]
    \centering
    \begin{tikzpicture}
    \node (Arule) {$A \rightarrow$};
    \node[below=2 of Arule] (Brule) {$B \rightarrow$};
    \node[below=of Brule] (Crule) {$C \rightarrow$};
    \begin{scope}[every node/.style={circle, draw, inner sep=0pt, outer sep=0pt, minimum size=7pt,
        node distance=20pt}]
        \node[right=1 of Arule] (A2) {};
        \node[above=of A2, fill=color1] (A1) {};
        \node[below=of A2, fill=color2] (A3) {};
        \node[right=of A3, fill=color3] (A4) {};
    \end{scope}
    \begin{scope}[every node/.style={rectangle, draw, inner sep=2pt, outer sep=0pt, minimum size=12pt,
        node distance=20pt}]
        \node[left=.5 of A2] (AB) {$B$};
        \node[right=.5 of A2] (AC) {$C$};
    \end{scope}
    \begin{scope}[every path/.style={}]
        \draw (AB) --node[pos=.3,above left=-4pt] {\scriptsize $3$} (A1);
        \draw (AB) --node[pos=.3,above=-2pt] {\scriptsize $2$} (A2);
        \draw (AB) --node[pos=.3,below left=-4pt] {\scriptsize $1$} (A3);
        \draw (AC) --node[pos=.3,above=-2pt] {\scriptsize $3$} (A2);
        \draw (AC) --node[pos=.3,above left=-4pt] {\scriptsize $2$} (A3);
        \draw (AC) --node[pos=.3,left=-2pt] {\scriptsize $1$} (A4);
    \end{scope}
    \node[above=0pt of A1] {\scriptsize $3$};
    \node[below=0pt of A3] {\scriptsize $2$};
    \node[below=0pt of A4] {\scriptsize $1$};

    \begin{scope}[every node/.style={circle, draw, inner sep=0pt, outer sep=0pt, minimum size=7pt,
        node distance=20pt}]
        \node[right=0pt of Brule, fill=color1] (B1) {};
        \node[above right=.5 and 1.5 of B1, fill=color2] (B2) {};
        \node[below right=.5 and 1.5 of B1] (B3) {};
        \node[right=1 of B3, fill=color3] (B4) {};
    \end{scope}
    \begin{scope}[every node/.style={rectangle, draw, inner sep=2pt, outer sep=0pt, minimum size=12pt,
        node distance=20pt}]
        \node[right=of B1] (Bf) {$f$};
    \end{scope}
    \draw (Bf) --node[pos=.2,above=-1pt] {\scriptsize 1} (B1);
    \draw (Bf) --node[pos=.2,above left=-4pt] {\scriptsize 3} (B2);
    \draw (Bf) --node[pos=.2,below left=-4pt] {\scriptsize 2} (B3);
    \draw[-stealth] (B3) --node[above] {$a$} (B4);
    \node[below=0pt of B1] {\scriptsize $3$};
    \node[below=0pt of B2] {\scriptsize $2$};
    \node[below=0pt of B4] {\scriptsize $1$};

    \begin{scope}[every node/.style={circle, draw, inner sep=0pt, outer sep=0pt, minimum size=7pt,
        node distance=20pt}]
        \node[right=0pt of Crule, fill=color2] (C1) {};
        \node[right=of C1, fill=color1] (C2) {};
        \node[below=of C2, fill=color3] (C3) {};
    \end{scope}
    \begin{scope}[every node/.style={rectangle, draw, inner sep=2pt, outer sep=0pt, minimum size=12pt,
        node distance=20pt}]
        \node[below=.25 of C1] (Cb1) {$b$};
        \node[below=.25 of C3] (Cb2) {$b$};
    \end{scope}
    \draw (Cb1) -- (C1);
    \draw (Cb2) -- (C3);
    \draw[-stealth] (C2) --node[right=-1pt] {$a$} (C3);
    \node[right=0pt of C1] {\scriptsize $2$};
    \node[right=0pt of C2] {\scriptsize $3$};
    \node[right=0pt of C3] {\scriptsize $1$};

    \node[right=2.5 of A1] (valA) {$\val(A):$};
    \begin{scope}[every node/.style={circle, draw, inner sep=0pt, outer sep=0pt, minimum size=7pt,
        node distance=20pt}]
        \node[below=0pt of valA.south east,fill=color1] (v1) {};
        \node[below left=1.5 and 0.5 of v1] (v2) {};
        \node[below right=1.5 and 0.5 of v1] (v3) {};
        \node[below=of v2, fill=color2] (v4) {};
        \node[below=of v3, fill=color3] (v5) {};
    \end{scope}
    \begin{scope}[every node/.style={rectangle, draw, inner sep=2pt, outer sep=0pt, minimum size=12pt,
        node distance=20pt}]
        \node[below=of v1] (vf) {$f$};
        \node[below=.25 of v4] (vb1) {$b$};
        \node[below=.25 of v5] (vb2) {$b$};
    \end{scope}
    \draw (vf) --node[pos=.3,right=-1pt] {\scriptsize 1} (v1);
    \draw (vf) --node[pos=.3,above left=-4pt] {\scriptsize 2} (v2);
    \draw (vf) --node[pos=.3,above right=-4pt] {\scriptsize 3} (v3);
    \draw[-stealth] (v2) --node[right] {$a$} (v4);
    \draw[-stealth] (v3) --node[right] {$a$} (v5);
    \draw (vb1) -- (v4);
    \draw (vb2) -- (v5);
    \node[right=0pt of v1] {\scriptsize $3$};
    \node[right=0pt of v4] {\scriptsize $2$};
    \node[right=0pt of v5] {\scriptsize $1$};

    \node[right=2.5 of valA] (lsA) {$\ls(A):$};
    \begin{scope}[every node/.style={circle, draw, inner sep=0pt, outer sep=0pt, minimum size=7pt,
        node distance=20pt}]
        \node[below=0pt of lsA.south east,fill=color1] (l1) {};
        \node[below left=of l1,fill=color2] (l2) {};
        \node[below right=of l1,fill=color3] (l3) {};
    \end{scope}
    \draw[-stealth] (l1) -- (l2);
    \draw[-stealth] (l1) -- (l3);
    \node[below=0pt of l1] {\scriptsize $3$};
    \node[below=0pt of l2] {\scriptsize $2$};
    \node[below=0pt of l3] {\scriptsize $1$};

    \node[below=2 of lsA] (lsB) {$\ls(B):$};
    \begin{scope}[every node/.style={circle, draw, inner sep=0pt, outer sep=0pt, minimum size=7pt,
        node distance=20pt}]
        \node[below=0pt of lsB.south east,fill=color1] (l1B) {};
        \node[below left=of l1B,fill=color2] (l2B) {};
        \node[below right=of l1B,fill=color3] (l3B) {};
    \end{scope}
    \draw[-stealth] (l1B) -- (l2B);
    \draw[-stealth] (l1B) -- (l3B);
    \node[below=0pt of l1B] {\scriptsize $3$};
    \node[below=0pt of l2B] {\scriptsize $2$};
    \node[below=0pt of l3B] {\scriptsize $1$};

    \node[below=2 of lsB] (lsC) {$\ls(C):$};
    \begin{scope}[every node/.style={circle, draw, inner sep=0pt, outer sep=0pt, minimum size=7pt,
        node distance=20pt}]
        \node[below=0pt of lsC.south east,fill=color1] (l1C) {};
        \node[below left=of l1C,fill=color2] (l2C) {};
        \node[below right=of l1C,fill=color3] (l3C) {};
    \end{scope}
    \draw[-stealth] (l1C) -- (l3C);
    \node[below=0pt of l1C] {\scriptsize $3$};
    \node[below=0pt of l2C] {\scriptsize $2$};
    \node[below=0pt of l3C] {\scriptsize $1$};
\end{tikzpicture}
    \caption{A set of tree-generating rules, the tree they represent, and their line-structures.}
    \label{fig:lineGraphExample}
\end{figure}
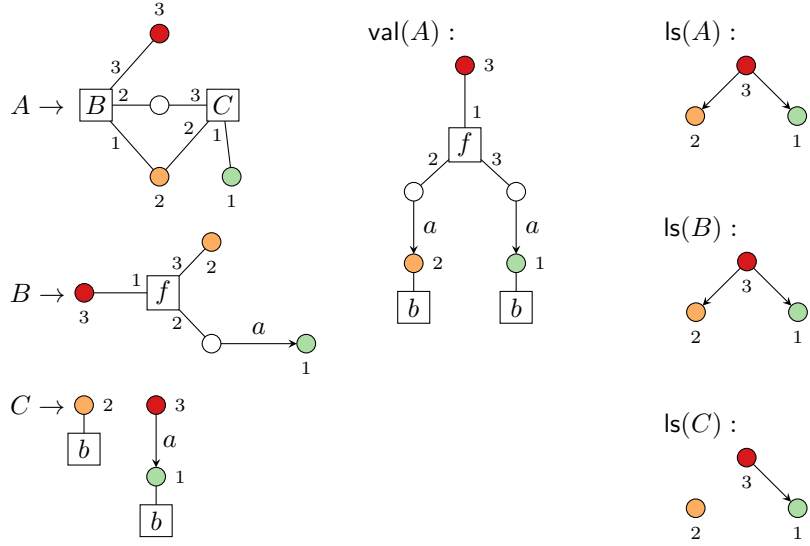
\begin{theorem}
    Let $G = (N, P, S)$ be a tree-generating SL-HR grammar. We can construct an SL-HR grammar $G' =
    (N', P', S')$
    such that
    \begin{enumerate}
        \item $\val(G') = \val(G)$,
        \item $|G'| \leq 2|G|$, and
        \item $\rhs_{G'}(A)$ is a tree-graph for every $A \in N' \cup \{S'\}$.
    \end{enumerate}
    \label{thm:treeGeneratingToTree}
\end{theorem}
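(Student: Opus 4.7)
The plan is to process the nonterminals of $G$ in bottom-up $\SLOrder$, and at each step transform the current rule so that its right-hand side becomes a tree-graph. The central tool is the line-structure $\ls(A)$, which by the tree-generating hypothesis is always a rooted forest; the same hypothesis also forbids cycles and multi-parent internal nodes inside $\val(A)$.

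Inductively, I will assume that every nonterminal $B$ occurring in $\rhs(A)$ has already been processed, so that $\rhs(B)$ is a tree-graph and---after a cost-free renumbering of $B$'s external nodes and a matching permutation of the attachment tuple of every $B$-edge at its call sites---the first external node of $B$ is the root of $\ls(B)$. Then I compute $\ls(A)$ and let $T_1,\ldots,T_k$ be its rooted trees. In the case $k=1$ I argue that $\rhs(A)$ is already a tree-graph: the terminal edges must be in parent-first orientation (otherwise $\val(G)$ could not be a tree-graph), the nonterminal edges are in parent-first orientation by induction, and the tree-generating property supplies acyclicity together with the unique-parent condition on non-root nodes.

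When $k \ge 2$ the step is to split $A$ into $k$ fresh nonterminals $A_1,\ldots,A_k$, with $\rank(A_i)$ equal to the number of external nodes of $A$ lying in $T_i$. I assign each internal node of $\rhs(A)$ to the $T_i$ containing its unique external-node ancestor (in $\val(A)$), each external node to its own $T_i$, and each edge to the $T_i$ containing all of its attachments---this is well defined because, by the inductive invariant, a nonterminal edge's attachments all sit in the single tree of its own $\ls$, while a terminal edge spanning two trees of $\ls(A)$ would yield an immediate internal path between two different roots, contradicting the forest structure. Then $\rhs(A_i)$ is the induced sub-hypergraph, and every occurrence of an $A$-edge in the remainder of the grammar is replaced by $k$ new edges $A_1,\ldots,A_k$ attached to the corresponding partition of the original attachment tuple. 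Each $\rhs(A_i)$ now has $\ls(A_i)=T_i$, a single tree, and so falls under the $k=1$ case above.

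For the size bound, the partition gives $\sum_i |\rhs(A_i)| \le |\rhs(A)|$ because every node and edge of $\rhs(A)$ is placed in exactly one piece. The only remaining source of growth is at the call sites: one rank-$r$ edge is replaced by $k$ edges whose ranks sum to $r$. Since a rank-$r$ edge has size $1$ when $r \le 2$ and size $r$ otherwise, the worst case is a rank-$2$ edge (size $1$) splitting into two rank-$1$ edges (total size $2$), a factor of two; every other split either preserves or decreases the size. Summed across all rules and call sites this yields $|G'| \le 2|G|$. The main obstacle will be verifying that the partition of $\rhs(A)$ by the trees of $\ls(A)$ is well defined and that each induced piece really is a tree-graph, both of which rest entirely on the tree-generating hypothesis forbidding exactly the configurations---cross-tree terminal edges, cycles, internal nodes with two parents, and nonterminal edges whose attachments straddle two trees of $\ls(A)$---that could otherwise break the argument.
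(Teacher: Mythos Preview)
Your construction coincides with the paper's first two transformations (reordering external nodes, then splitting by the connected components of $\ls(A)$), but you are missing the paper's third transformation, and this is a genuine gap.

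The problem is your $k=1$ case. You claim that when $\ls(A)$ is a single rooted tree, $\rhs(A)$ is already a tree-graph. But ``tree-graph'' in the sense the theorem needs (so that $G'$ corresponds rule-by-rule to an SL tree grammar) requires that every external node of $\rhs(A)$ is either the root or a \emph{leaf} of that tree. Your checks --- parent-first orientation, acyclicity, unique parent --- establish only that $\rhs(A)$ has the shape of a tree; they say nothing about where the externals sit. If $\ls(A)$ is a single tree of depth $\ge 2$, there is an external node $v$ with both an external ancestor $u$ and an external descendant $w$ in $\val(A)$. Then $v$ is neither root nor leaf, and $\rhs(A)$ is \emph{not} a tree-graph in the required sense. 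A concrete instance: take a string-generating grammar with a rank-$3$ nonterminal $A$ whose externals $u,v,w$ lie on a path $u\to v\to w$; here $\ls(A)$ is a single tree, your construction does nothing, yet $\rhs(A)$ cannot encode any SL tree (or string) grammar rule, so the corollary on string-graphs would fail.

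The paper's third step fixes exactly this: it further splits $A$ at every external node that is inner in $\ls(A)$, so that each resulting piece has its externals only at root and leaves. This is also where the factor $2$ in the size bound actually comes from (each such inner external is duplicated once, giving $\sum_i \rank(A_i)\le 2\,\rank(A)$ and $\sum_i |g_i|_V \le 2|g|_V$). Relatedly, your inductive invariant ``the first external of $B$ is the root of $\ls(B)$'' is too weak for the $k=1$ argument as you use it: to treat a $B$-edge as a single tree-node (parent at position $1$, children at the remaining positions) you also need every other external of $B$ to be a leaf of $\ls(B)$ --- which again only holds after the third step has been applied to $B$.
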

\begin{proof}
We describe three transformations on $G$, which yield the desired grammar $G'$. In the first
transformation, the set of nonterminals remains unchanged. We use the
line-structure to order the external nodes in a top-down way, i.e., if $u$ is an ancestor of $v$
within the line-structure, then $u$ should be ordered before $v$. Let
$\ls(A) = (V,E)$ for some nonterminal $A$, and let $u,v \in V$. We define the partial order
$\prec_A$ such that $u \prec_A v$ if and only if there is a directed path from $u$ to $v$ in
$\ls(A)$. Due to the property (LG) mentioned above, $\prec_A$ is well-defined. Let $(A,g)$ be a
rule, and $u,v$ two of its external nodes, such that $u \prec_A v$, but $v$ appears before $u$ in
$\ext_g$. We reorder $\ext_g$ (and by extension, the attachment relation of every nonterminal edge
labeled $A$ in the entire grammar) such that they are sorted according to some (arbitrary, but
fixed) total order consistent with $\prec_A$. This step does not change the structure of the grammar
or affect its size. As an example, consider $\ls(X)$ for $X \in \{A,B,C\}$ in
Figure~\ref{fig:lineGraphExample}. In all three cases, the red external node, needs to come first to
be consistent with $\prec_X$, but it is currently third. The orange and green nodes can be ordered
arbitrarily after that. Permissible orders for $\ext$ thus are red $<$ orange $<$ green or red $<$
green $<$ orange (see Figure~\ref{fig:conversionExample} for the final grammar).

In the next two steps, additional nonterminals are (in general) added to the set of nonterminals.
We iterate over $N$ in a bottom-up fashion, i.e., in reverse $\SLOrder$-order. If we
encounter an $A \in N$ such that $\ls(A)$ has at least two connected components $g_1,\ldots,g_k$
then we add rules $(A_i, h_i)$ for $i \in [k]$, where $A_i$ is a new nonterminal of rank
$|V_{g_i}|$, and $h_i$ is a subgraph of $\rhs(A)$. Note that $\sum_{i\in [k]}A_i = \rank(A)$. The
set of nodes of $h_i$ is defined as 
\[U_i = \{u \in \rhs(A) \mid \exists v \in V_{g_i}: \text{ there is a path from } v\text{ to
}u\text{ in } \rhs(A)\},\]
i.e., $U_i$ contains the nodes of $\rhs(A)$ that can be reached from the external nodes in $g_i$.
The set of edges of $h_i$ contains every edge of $\rhs(A)$, which is attached only to nodes in $U_i$.
Let $w = a_1\cdots a_n$ be a string and $X = \{x_1,\ldots,x_m\} \subseteq [n]$ a set of indices. By
$w|X$ we denote the string $a_{x_1}\cdots a_{x_m}$ such that $x_1 < x_2 < \cdots < x_m$. We define
the set of $i$-indices as
\[\idx[i] = \{j \mid \exists v \in V_{g_i} \text{ such that } v \text{ is at position } j \text{
within } \ext_{\rhs(A)}\}.\]
Using this, we define $\ext[i] = \ext_{\rhs(A)}|\idx[i]$, and for some edge $e$ labeled $A$ we
define $\att[i](e) = \att(e)|\idx[i]$. Now, we replace every occurrence of an edge $e$ labeled $A$
within the grammar by $k$ edges, such that for $i \in [k]$ the edge $e_i$ is labeled $A_i$ and
attached to the nodes $\att[i](e)$. We remove the nonterminal $A$ and its rule from the grammar. Not
that now all right-hand sides are tree graphs. Note further that $\sum_{i\in [k]} \rank(e_i) =
\rank(e)$, and $\sum_{i\in [k]} |h_i| \leq |\rhs(A)|$ and therefore the size of the new grammar is
at most $|G|$.

In the third step
we eliminate rules that have external nodes, which are neither root nor leafs. To do so, we again
split the rules up. Let $(A,g)$ be a rule which has at least three external nodes $u,v,w$ such that
there is a path from $u$ to $v$ and from $v$ to $w$, i.e., $v$ is an inner node and not root or
leaf. Let $g_1,\ldots,g_k$ be the maximally sized subgraphs of $g$, such that their union is $g$,
and for every $i \in [k]$, $g_i$ is a tree-graph which \emph{does not} have a triple of external
nodes $u,v,w$ such that paths exist from $u$ to $v$ and $v$ to $w$, i.e., every external node of
$g_i$ is either a root or a leaf. Let $e$ be an $A$-labeled edge of the grammar and let $\ext[i]$
and $\att[i](e)$ be defined as above. Now we again replace $(A,g)$ by the rules
$(A_1,g_1),\ldots,(A_k,g_k)$ and every nonterminal edge $e$ with $\lbl(e) = A$ by the edges
$e_1,\ldots,e_k$, where $\lbl(e_i) = A_i$ and $\att(e_i) = \att[i](e)$. In this case, an external
node that is neither root nor leaf will appear in more than one of the graphs $g_1,\ldots,g_k$, thus
this step increases the size of the grammar. This can be bounded by 
\[\sum_{i\in [k]} \rank(A_i) \leq 2 \rank(A) \text{ and } \sum_{i\in [k]} |g_i|_V \leq 2|g|_V - 1.\]
The factor $2$ comes from the fact that, in the worst case, we may split at every
external node. Thus the node- and edge-sizes are at most doubled. Therefore, for the grammar $G'$ we
have $|G'| \leq 2|G|$.
\end{proof}
\begin{figure}[!t]
    \centering
    \begin{tikzpicture}
    \node (Arule) {$A \rightarrow$};
    \node[below=3 of Arule] (Brule) {$B \rightarrow$};
    \node[right=4 of Arule] (C1rule) {$C_1 \rightarrow$};
    \node[below=2 of C1rule] (C2rule) {$C_2 \rightarrow$};
    \begin{scope}[every node/.style={circle, draw, inner sep=0pt, outer sep=0pt, minimum size=7pt,
        node distance=20pt}]
        \node[right=1 of Arule] (A2) {};
        \node[above=of A2, fill=color1] (A1) {};
        \node[below=of A2, fill=color2] (A3) {};
        \node[right=of A3, fill=color3] (A4) {};
    \end{scope}
    \begin{scope}[every node/.style={rectangle, draw, inner sep=2pt, outer sep=0pt, minimum size=12pt,
        node distance=20pt}]
        \node[left=.5 of A2] (AB) {$B$};
        \node[below=.5 of A3] (AC1) {$C_1$};
    \end{scope}
    \begin{scope}[every path/.style={}]
        \draw (AB) --node[pos=.3,above left=-4pt] {\scriptsize $1$} (A1);
        \draw (AB) --node[pos=.3,above=-2pt] {\scriptsize $2$} (A2);
        \draw (AB) --node[pos=.3,below left=-4pt] {\scriptsize $3$} (A3);
        \draw (AC1) --node[pos=.3,left=-2pt] {\scriptsize $1$} (A3);
        \draw[-stealth'] (A2) --node[above right=-4pt] {$C_2$} (A4);
    \end{scope}
    \node[above=0pt of A1] {\scriptsize $1$};
    \node[right=0pt of A3] {\scriptsize $2$};
    \node[right=0pt of A4] {\scriptsize $3$};

    \begin{scope}[every node/.style={circle, draw, inner sep=0pt, outer sep=0pt, minimum size=7pt,
        node distance=20pt}]
        \node[right=0pt of Brule, fill=color1] (B1) {};
        \node[above right=.5 and 1.5 of B1, fill=color2] (B2) {};
        \node[below right=.5 and 1.5 of B1] (B3) {};
        \node[right=1 of B3, fill=color3] (B4) {};
    \end{scope}
    \begin{scope}[every node/.style={rectangle, draw, inner sep=2pt, outer sep=0pt, minimum size=12pt,
        node distance=20pt}]
        \node[right=of B1] (Bf) {$f$};
    \end{scope}
    \draw (Bf) --node[pos=.2,above=-1pt] {\scriptsize 1} (B1);
    \draw (Bf) --node[pos=.2,above=-1pt] {\scriptsize 3} (B2);
    \draw (Bf) --node[pos=.2,below=-1pt] {\scriptsize 2} (B3);
    \draw[-stealth] (B3) --node[above] {$a$} (B4);
    \node[above=0pt of B1] {\scriptsize $1$};
    \node[above=0pt of B2] {\scriptsize $2$};
    \node[above=0pt of B4] {\scriptsize $3$};

    \begin{scope}[every node/.style={circle, draw, inner sep=0pt, outer sep=0pt, minimum size=7pt,
        node distance=20pt}]
        \node[right=0pt of C1rule, fill=color2] (C11) {};
    \end{scope}
    \begin{scope}[every node/.style={rectangle, draw, inner sep=2pt, outer sep=0pt, minimum size=12pt,
        node distance=20pt}]
        \node[below=.25 of C11] (Cb1) {$b$};
    \end{scope}
    \draw (Cb1) -- (C11);
    \node[right=0pt of C11] {\scriptsize $1$};

    \begin{scope}[every node/.style={circle, draw, inner sep=0pt, outer sep=0pt, minimum size=7pt,
        node distance=20pt}]
        \node[right=0pt of C2rule, fill=color1] (C21) {};
        \node[below=of C21, fill=color3] (C22) {};
    \end{scope}
    \begin{scope}[every node/.style={rectangle, draw, inner sep=2pt, outer sep=0pt, minimum size=12pt,
        node distance=20pt}]
        \node[below=.25 of C22] (Cb2) {$b$};
    \end{scope}
    \draw (Cb2) -- (C22);
    \draw[-stealth] (C21) --node[right=-1pt] {$a$} (C22);
    \node[right=0pt of C21] {\scriptsize $1$};
    \node[right=0pt of C22] {\scriptsize $2$};
\end{tikzpicture}
    \caption{The rules of Figure~\ref{fig:lineGraphExample} after converting them to tree-graphs.}
    \label{fig:conversionExample}
\end{figure}
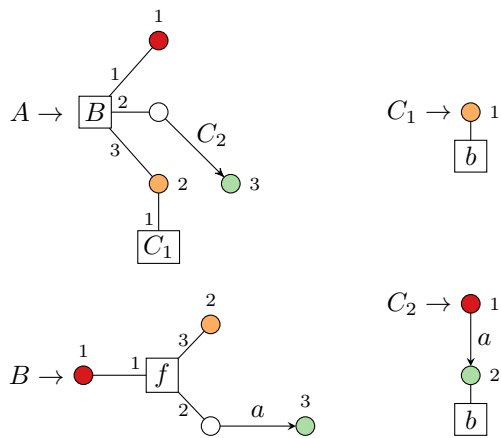
As an example, we provide in Figure~\ref{fig:conversionExample} the conversion of the rules
presented in Figure~\ref{fig:lineGraphExample}. We first adjust the order of the external nodes to
one consistent with $\prec_X$ for $X \in \{A,B,C\}$ by using red $<$ orange $<$ green as the order
of the external nodes in all three cases. Thus, the first external node becomes the third one, and
the third external node becomes the first one. In the same way, the appropriate nonterminal edges
($B$ and $C$ in $\rhs(A)$) have their attachment relation reordered. We then apply
the second step, splitting the $C$-rule into two rules $C_1$ and $C_2$. In this case the conversion
was already finished after the second step, thus the new rules are not larger than the original ones
(in fact, they are smaller, because the $C_2$-edge now is of rank $2$, which counts as size 1).

Consider again Theorem~\ref{thm:treeGeneratingToTree}. It clearly follows from the construction of
$G'$ that $V_{S'} = V_S$. This allows us to easily transfer the result also to string-graphs. Let
$G$ be a string-generating SL-HR grammar. Recall that $G$'s start graph $S$ has two external nodes
(because it generates a string-graph), and note that every nonterminal of $G$ has rank $\geq 2$. We
convert $G$ into a tree-generating SL-HR grammar by making the second external node $v$ in $S$
internal. The resulting grammar generates a monadic tree, and we
can use the result of Theorem~\ref{thm:treeGeneratingToTree} to convert it into a grammar, where
every right-hand side is a monadic tree-graph (of rank 2). Finally, this grammar can be converted
back into a string-generating grammar, by again making $v$ external.
\begin{corollary}
    Let $G = (N, P, S)$ be a string-generating SL-HR grammar. We can construct an SL-HR grammar $G' =
    (N', P', S')$
    such that
    \begin{enumerate}
        \item $\val(G') = \val(G)$,
        \item $|G'| \leq 2|G|$, and
        \item $\rhs_{G'}(A)$ is a string-graph for every $A \in N' \cup \{S'\}$.
    \end{enumerate}
    \label{cor:stringGeneratingToString}
\end{corollary}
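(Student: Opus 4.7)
The plan is to reduce the string-generating case directly to the tree-generating case handled by Theorem~\ref{thm:treeGeneratingToTree}. Given a string-generating SL-HR grammar $G = (N, P, S)$, $\val(G)$ is a string-graph $\sgraph(w)$, so $\ext_S$ is a length-$2$ tuple $u \cdot v$ of distinct nodes, and every nonterminal has rank at least two. Form an auxiliary grammar $G_T = (N, P, S_T)$ that differs from $G$ only in its start graph: $S_T$ is obtained from $S$ by setting $\ext_{S_T} = u$, i.e.\ by demoting $v$ to an internal node. This operation changes neither the node and edge sets of $S$ nor any rule in $P$, so $|G_T| = |G|$.

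Next I would verify that $G_T$ is tree-generating. Because $\val(G) = \sgraph(w)$ is a directed simple path from $u$ to $v$, the same graph with only $u$ external is a monadic tree-graph rooted at $u$ (viewing the single direction of edges as the parent-to-child relation). Hence $\val(G_T)$ is a tree-graph, as required. Applying Theorem~\ref{thm:treeGeneratingToTree} to $G_T$ produces a grammar $G_T' = (N', P', S_T')$ with $\val(G_T') = \val(G_T)$, $|G_T'| \leq 2|G_T| = 2|G|$, and $\rhs_{G_T'}(A)$ a tree-graph for every $A \in N' \cup \{S_T'\}$.

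The key observation is that each of these tree-graph right-hand sides is in fact a string-graph. Since $\val(G_T)$ is monadic, for every reachable nonterminal $A$ the value $\val(A)$ must also be monadic (otherwise expanding $A$ inside $S_T'$ would introduce branching and break monadicity of $\val(G_T)$). Together with condition~(LG) from the proof of Theorem~\ref{thm:treeGeneratingToTree}, this forces $\ls(A)$ to be a single directed path, so after the three transformations of that proof each $\rhs_{G_T'}(A)$ is a simple monadic tree-graph, i.e.\ a string-graph with two external nodes (root and leaf). The start graph $S_T'$ is likewise a monadic tree-graph and can be turned back into a string-graph $S'$ by promoting its unique leaf descendant of $u$ back to an external node, restoring the analogue of $v$. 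Setting $G' = (N', P', S')$ yields the desired grammar, with $\val(G') = \sgraph(w) = \val(G)$ and the size bound inherited unchanged.

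The main obstacle is the middle step: verifying that the construction of Theorem~\ref{thm:treeGeneratingToTree}, applied to a grammar whose value is monadic, always yields right-hand sides that are themselves monadic (so that externals naturally come in pairs of root and unique leaf). This amounts to showing that the connected-component split and the non-root/non-leaf split preserve monadicity of each piece, which follows because no step duplicates or merges internal terminal edges; it only regroups nodes according to line-structure components and trims at external nodes that sit strictly inside a directed path. Handling the terminology gap between ``monadic tree-graph'' and ``string-graph'' (the rank-$1$ leaf hyperedge versus the terminal external node) is then a routine bookkeeping exercise carried out in the final promotion step.
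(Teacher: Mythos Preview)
Your approach is essentially the same as the paper's: demote the second external node $v$ of $S$ to obtain a tree-generating grammar, apply Theorem~\ref{thm:treeGeneratingToTree}, and then restore the second external node in the start graph. The paper is terser---it relies on the observation that $V_{S'} = V_S$, so the very same node $v$ survives and can simply be re-marked external---while you are more explicit about why the resulting right-hand sides are monadic (hence rank-$2$ string-graphs), a point the paper asserts without argument.
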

\subsection{Important Parameters}
In this section we describe some parameters of our algorithm that influence the compression
ratio. Their effect is evaluated experimentally in Section~\ref{sse:experiments}.
\subsubsection{Node Order}\label{ss2:node_order}
As discussed before the node order heavily influences the digram counting, which in turn influences
the compression behavior. A node order $<_V$ is given by a bijective function $f:V \rightarrow [|V|]$
such that $u <_V v$ if $f(u) < f(v)$. Some of the orders below are modeled by functions $g: V
\rightarrow [k]$ for some $k < |V|$, i.e., there exist nodes $u,v$ such that $g(u) = g(v)$, but a
strict order is needed to run the algorithm. For this reason, we consider sets of functions $F =
\{f: V \rightarrow [|V|] \mid \forall u,v \in V: f(u) < f(v) \Rightarrow g(u) \leq g(v)\}$ to be an
admissible set of orders. This means, that whenever we state that a node order was used, we actually
use an arbitrary order out of the set of admissible orders.
We evaluate these orders:
\begin{enumerate}
    \item \emph{natural order (nat)} uses the node IDs as given in the source graph,
    \item \emph{BFS} order follows a breadth-first traversal,
    \item \emph{\FP} computes a fixpoint on the node neighborhoods starting from the degrees, and
    \item \emph{\FP{0}}, which is a degree order.
\end{enumerate}
Formally, the natural order is just the identity. In the case of BFS there exists an additional
element of nondeterminism. We choose as the first node $v$ (i.e., the one with $f(v) = 1$) any node
of lowest degree. For any other node $u$ the function $g(u)$ evaluates to the length of a shortest
path from $v$ to $u$. This forms the basis for an admissible set of orders. Note that, for graphs
with more than one connected component, one node $v$ has to be picked for every connected component.
All of these initial nodes evaluate to $1$ by $g$. We now define the \FP and \FP{0} orders. For a
graph $g$ let $c_i: V_g \rightarrow \N$ be a family of functions that color every node with an
integer. We first define $c_0(v) = d(v)$, where $d(v)$ is the degree of $v$. This is the order
\FP{0}. Now we map every node $v$ to the tuple $f_0(v) = (c_0(v), c_0(v_1),\ldots,c_0(v_n))$, where
$v_1,\ldots,v_n$ are the neighbors of $v$ ordered by their values in $c_0$. We sort these tuples
lexicographically and let $c_1(v)$ be the position of $f_0(v)$ in this lexicographical order. This
process is iterated until $c_{i+1} = c_i$. Now $c_i$ can be used as a basis for an admissible set of
orders.
This computation of the order works for undirected, unlabeled graphs, but
can be straightforwardly extended to directed labeled graphs. We call this order $\FP$.
Figure~\ref{fig:node_order_example} shows an example of the \FP-order. The graph on the left is
annotated by $c_0$, the graph in the middle shows $f_0$, which is then ordered lexicographically to
get $c_1$ on the right. This is the fixpoint for this graph.
\begin{figure}[!t]
    \centering
    \begin{tikzpicture}[node distance=25pt]
    \begin{scope}[every node/.style={inner sep=0pt, outer sep=0pt, minimum size=7pt, draw, circle}]
        \node (1c) {};        
        \node[left=of 1c] (1ol) {};        
        \node[right=of 1c] (1or) {};        
        \node[below=of 1c] (1b1) {};        
        \node[left=of 1b1] (1b2) {};        
    \end{scope}
    \draw (1c) -- (1ol);
    \draw (1c) -- (1or);
    \draw (1c) --node[coordinate] (x1) {} (1b1);
    \draw (1b1) -- (1b2);
    {\footnotesize
    \node[above=1pt of 1ol] {$1$};
    \node[above=1pt of 1or] {$1$};
    \node[above=1pt of 1c] {$3$};
    \node[below=1pt of 1b1] {$2$};
    \node[below=1pt of 1b2] {$1$};
    }

    \begin{scope}[every node/.style={inner sep=0pt, outer sep=0pt, minimum size=7pt, draw, circle}]
        \node[right=4.5 of 1c] (2c) {};        
        \node[left=of 2c] (2ol) {};        
        \node[right=of 2c] (2or) {};        
        \node[below=of 2c] (2b1) {};        
        \node[left=of 2b1] (2b2) {};        
    \end{scope}
    \draw (2c) -- (2ol);
    \draw (2c) -- (2or);
    \draw (2c) --node[coordinate] (x2) {} (2b1);
    \draw (2b1) -- (2b2);
    {\footnotesize
    \node[above=1pt of 2ol] {$(1,3)$};
    \node[above=1pt of 2or] {$(1,3)$};
    \node[above=1pt of 2c] {$(3,1,1,2)$};
    \node[below=1pt of 2b1] {$(2,1,3)$};
    \node[below=1pt of 2b2] {$(1,2)$};
    }

    \begin{scope}[every node/.style={inner sep=0pt, outer sep=0pt, minimum size=7pt, draw, circle}]
        \node[right=4.5 of 2c] (3c) {};        
        \node[left=of 3c] (3ol) {};        
        \node[right=of 3c] (3or) {};        
        \node[below=of 3c] (3b1) {};        
        \node[left=of 3b1] (3b2) {};        
    \end{scope}
    \draw (3c) -- (3ol);
    \draw (3c) -- (3or);
    \draw (3c) -- (3b1);
    \draw (3b1) -- (3b2);
    {\footnotesize
    \node[above=1pt of 3ol] {$2$};
    \node[above=1pt of 3or] {$2$};
    \node[above=1pt of 3c] {$4$};
    \node[below=1pt of 3b1] {$3$};
    \node[below=1pt of 3b2] {$1$};
    }

    \node[right=1.5 of x1] {$c_0 \Rightarrow f_0$};
    \node[right=1.5 of x2] {$f_0 \Rightarrow c_1$};
\end{tikzpicture}
    \caption{Computing the \FP-order of a small graph.}
    \label{fig:node_order_example}
\end{figure}
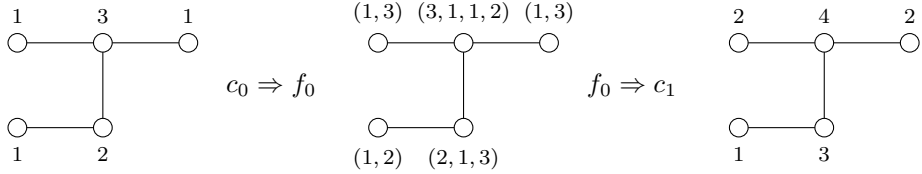
Note that it is not necessarily a strict order and thus also implies an equivalence relation on the
nodes ($v \FPequiv u$ if and only if $c_i(v) = c_i(u)$). The number of equivalence classes of
$\FPequiv$ has an interesting correlation with the compression ratio of \graphrepair, as discussed
in Section~\ref{ss2:nodeOrder}.
\subsubsection{Maximal Rank}\label{ss2:maxRank}
The maximal rank is a user defined parameter of \graphrepair that specifies the maximal rank of a
digram (and thus the maximal rank of a nonterminal edge) that the compressor considers. Digrams with
a higher rank are ignored and not counted. It was shown already for TreeRePair~\cite[Theorems 9 and
10]{Lohrey13_treeRePair}  that choosing this parameter too high or too small can have strong effects
on compression (in both directions). The two families of trees given there, can be converted into
families of graphs showing the same relation to the maximal rank for \graphrepair. We briefly recap
the arguments, but as the proof details are almost identical to the case for TreeRePair, we do not
repeat them here.
\paragraph{High Rank} Let $g_{n,k}$ be the graph $\tgraph(s_{n,k})$, where $s_{n,k}$ is a tree
consisting of a right comb of $2^n$ nodes labeled $f_k$, a symbol of rank $k$. The first $k-1$
children of these nodes are leaves labeled $a$, the last child is the next $f_k$-labeled node
(except for the last of these, which only has $a$-children). This is the same tree as given
in~\cite[Theorem 9]{Lohrey13_treeRePair}. The size of $g_{n,k}$ depends on $k$: for $k = 1$ the
$f$-edges are of rank 2 (i.e., size 1) and thus we get $|g_{n,1}| = 2^n + 2^n + 1$. For $k > 1$ the
size of each $f$-edge is $k+1$ and thus $|g_{n,k}| = 2^n\cdot (k+1) + 2^n + 1$.
\begin{lemma}
    Given $g_{n,k}$ \graphrepair produces a grammar of size $O(k^2+n)$ if the maximal rank allowed is
    at least $k$, and does not compress at all if the maximal rank is less than $k$.
    \label{lem:unbounded_maxrank}
\end{lemma}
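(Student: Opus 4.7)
My plan is to handle the two halves of the claim separately. For the positive direction (maxRank $\geq k$) I would exhibit a sequence of digram replacements that \graphrepair can actually follow (perhaps after choosing tiebreaking favorably) and that ends in a grammar of the desired size; for the negative direction (maxRank $< k$) I would enumerate all possible digrams and show each has rank at least $k$.

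For the upper bound, the key observation is that each $f_k$-hyperedge has rank $k+1$ and carries $k-1$ (or $k$ for the last node) $a$-leaves as rank-$1$ neighbors. Fix any position $i\in\{2,\dots,k\}$: the digram consisting of an $f_k$-edge and the $a$-edge sharing its $i$-th attachment occurs $2^n$ times (since both the internal $f_k$-nodes and the last one have $a$-children in these positions), and because the $a$-leaf has no other incident edges it becomes internal upon replacement, making the resulting nonterminal of rank $k$. I would argue that \graphrepair can iterate this absorption: at step $j$ the currently most frequent digram is obtained by attaching the next $a$-leaf to the rank-$(k-j+2)$ nonterminal produced in the previous step. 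After $k-1$ such steps, we have one rank-$2$ nonterminal $A_{k-1}$ with a rule of size $O(k)$, and the $k-1$ intermediate rules each have size $O(k)$, contributing $O(k^2)$ in total. The last $f_k$-node, which carries one extra $a$-leaf, is dealt with by a single separate rule of size $O(k)$. Once this phase ends, the graph has become a string-graph of $2^n$ consecutive $A_{k-1}$-edges; by the relation to string-RePair established in Section~\ref{ss2:StringTreeRelation}, \graphrepair behaves here like classical RePair on a repeated-symbol string of length $2^n$, which is compressed by $n$ doubling rules of constant size each, contributing $O(n)$. Summing gives a grammar of size $O(k^2+n)$.

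For the lower bound I would observe that $g_{n,k}$ has only two kinds of edges: $f_k$-edges of rank $k+1$ and $a$-edges of rank $1$. Two $a$-edges never share a node because each is attached to a tree-leaf whose only incident edge in $g_{n,k}$ is the $a$-edge itself, so every digram must contain at least one $f_k$-edge. A digram of shape $(f_k,a)$ has rank exactly $k$ (one attachment of $f_k$ becomes internal), and a digram of shape $(f_k,f_k)$ has rank $2k$ (the shared node is the only one that can become internal). Both numbers exceed maxRank, so \graphrepair cannot introduce a single nonterminal and the output grammar equals $g_{n,k}$ up to the start-graph wrapping, i.e., no compression occurs.

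The main obstacle I anticipate is justifying that \graphrepair's greedy choices really do follow the trajectory described in the upper bound: in particular, showing that after each absorption step the newly created rank-$(k-j)$ nonterminal plus an adjacent $a$-edge remains a globally most-frequent admissible digram, and that the counting machinery of Section~\ref{ss2:updating} preserves the $2^n$ simultaneous non-overlapping occurrences across replacements. This is exactly the step that the analogous TreeRePair proof in~\cite[Theorem~9]{Lohrey13_treeRePair} already handles, so I would appeal to that analysis after verifying the straightforward translation from the tree to the tree-graph setting.
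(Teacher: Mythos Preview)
Your proposal is correct and follows exactly the approach the paper takes: the paper's own argument is in fact shorter than yours, merely sketching the same absorption-of-$a$-leaves followed by string-style halving and then explicitly deferring the details to \cite[Theorem~9]{Lohrey13_treeRePair}, just as you do in your final paragraph. Your case analysis for the lower bound is likewise what the paper has in mind when it says ``$g_{n,k}$ does not contain occurrences of any digram with a rank smaller than $k$''; the only nuance you might add is that the root $f_k$-edge can participate in a digram of rank $k-1$ (since the root has no further incident edge), but this digram has only a single non-overlapping occurrence and is therefore not active.
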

It is easy to see that there can be no compression with a maximal rank less than $k$: $g_{n,k}$ does
not contain occurrences of any digram with a rank smaller than $k$. Once the allowed rank is at
least $k$, \graphrepair reduces the width of the tree by $1$ with every iteration by combining one
of the $a$-leaves with its $f_k$-/nonterminal parent. Finally a line of $2^n$ nonterminal edges
remains, which all have the same label and can be compressed exponentially. This argument is identical
to the one in the proof of~\cite[Theorem 9]{Lohrey13_treeRePair}, only the notation changes to the
one used for graphs.
\paragraph{Small Rank}
We give an example of graphs which \graphrepair compresses best, if the maximum rank is limited to
$2$. Our example is similar to the one TreeRePair~\cite[Theorem 10]{Lohrey13_treeRePair}. The main
difference is that we use a graph with $2^n$ edges labeled $f$, instead of nodes. Let $T_n$ be a
graph over $\Sigma = \{f,a_0,a_1,a_2,a_3,a_4\}$ with $2^{n+1}+2$ nodes $\{0,\ldots, 2^{n+1}+1\}$ and
$2^{n+1}$ edges $\{e_1,\ldots,e_{2^{n+1}}\}$. The attachment relation is defined as $\att(e_i) =
i-1\cdot i$ for $i \in \{1,\ldots,2^n\}$, and $\att(e_i) = i-2^n \cdot i$ for $i \in
\{2^n+1,\ldots,2^{n+1}+1\}$. They are labeled by $\lbl(e_i) = f$ for $i \in \{1,\ldots,2^n\}$ and
\[\lbl(e_i) = 
    \begin{cases}
        a_0 & \text{if } i \equiv 0 \pmod 5 \\
        a_1 & \text{if } i \equiv 1 \pmod 5 \\
        a_2 & \text{if } i \equiv 2 \pmod 5 \\
        a_3 & \text{if } i \equiv 3 \pmod 5 \\
        a_4 & \text{if } i \equiv 4 \pmod 5
    \end{cases},
\]
for $i \in \{2^{n}+1,\ldots,2^{n+1}\}$. This graph is a tree consisting of a path of $2^n$ edges
labeled $f$, and every node attached to one of these $f$-edges, is also attached to an edge labeled
$a_0$, $a_1$, $a_2$, $a_3$, or $a_4$. These edges appear in this order, i.e., if a node is attached
to an edge labeled $a_i$, then the previous node (one $f$-edge up in the tree) is labeled
$a_{i-1}$ ($\bmod 5$), and the next node (one $g$-edge down in the tree) is labeled $a_{i+1}$
($\bmod 5$). The size of $T_n$ is $2^{n+2}+2$.
\begin{lemma}
    Given $T_{n}$, \graphrepair produces a grammar of size $O(n)$ if the maximal rank is restricted
    to $2$ and compresses at best to $75\%$ of the original size if the maximal rank is unbounded.
    \label{lem:bounded_maxrank}
\end{lemma}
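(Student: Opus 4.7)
I plan to prove both bounds by simulating a run of \graphrepair on $T_n$ and tracking the size of the resulting grammar, following the same structure as the tree-case proof in~\cite[Theorem~10]{Lohrey13_treeRePair}.

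For the $O(n)$ bound with maximal rank $2$, I would first observe that any $(f,f)$-digram in $T_n$ spans three spine nodes, each of which is attached to an $a_i$-edge outside the digram; by Definition~\ref{def:occurrence}(3) all three must therefore be external, so the digram has rank $3$ and is invisible to \graphrepair once the maximum rank is capped at $2$. The only remaining contributing digrams are the five rank-$2$ digrams $(f,a_i)$, each with $\Theta(2^n/5)$ non-overlapping occurrences. Replacing all of them consumes every $a$-leaf and turns the spine into the periodic string-graph $(A_1A_2A_3A_4A_0)^{2^n/5}$ over the five new rank-$2$ nonterminals. By the correspondence in Section~\ref{ss2:StringTreeRelation}, \graphrepair on this string-graph emulates string-RePair, and the resulting word $w^k$ with $k=\Theta(2^n)$ is compressible to an SLP of size $O(|w|+\log k)=O(n)$.

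For the $75\%$ bound with unbounded rank, the rank-$3$ digram $(f,f)$ becomes eligible and has $\Theta(2^{n-1})$ non-overlapping occurrences, strictly more than any individual $(f,a_i)$-digram. Greedy \graphrepair therefore first replaces $(f,f)\to F_1$, a step that \emph{enlarges} the start graph: the new rank-$3$ nonterminal has size $3$ while the two replaced edges had total size $2$, and no internal node is removed because the shared spine node keeps its $a$-edge. Inductively this cascades into rank-$(2^k+1)$ nonterminals $F_1,F_2,\dots$, none of which absorbs any $a$-edge. I would then show that after pruning, the $2^n$ $a$-edges, the $2^n$ leaves and the $2^n+1$ spine nodes all survive, contributing at least $3\cdot 2^n+1 \approx (3/4)|T_n|$ to the output size.

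The main obstacle will be checking that at every step of the cascade the most frequent digram is actually an $(F_k,F_k)$-digram rather than, say, an $(F_k,a_i)$-digram, because the approximate counting of Section~\ref{ss2:updating} interacts subtly with the $5$-periodic labelling of the $a$-edges; an unlucky node ordering could make \graphrepair pick $(f,a_i)$-type digrams early and collapse everything to $O(n)$, so the bound has to be phrased as a worst-case guarantee over runs. One must also track precisely which of the intermediate $F_k$-rules are removed by pruning and which stay in, so that the surviving $a$-structure together with the un-absorbed spine nodes adds up to the claimed $3/4$-fraction. These are exactly the technical points for which the authors defer to~\cite[Theorem~10]{Lohrey13_treeRePair}.
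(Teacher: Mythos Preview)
Your proposal is correct and follows essentially the same line as the paper's argument: for $\text{maxRank}=2$ you rule out the $(f,f)$-digram on rank grounds, reduce to a periodic string-graph, and appeal to the string-RePair correspondence; for unbounded rank you argue that the $(f,f)$-digram dominates in frequency, triggering a cascade of high-rank nonterminal replacements that never absorb the $a$-edges or any nodes, leaving at least $3\cdot 2^n$ of the original $\approx 4\cdot 2^n$ size units in the start graph. The paper makes the same size accounting (arriving at $x=2^{n+1}+2^n+4$ for the surviving nodes and terminal edges) and likewise defers the step-by-step verification to~\cite[Theorem~10]{Lohrey13_treeRePair}; your additional remark about the approximate counting possibly favouring an $(f,a_i)$-digram is a concern the paper does not raise, but it is easily dispatched since the $(f,f)$-count on a simple path is at least $2^{n-2}$ under any greedy traversal, which still beats $2^n/5$.
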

The argument for this is mostly identical to~\cite[Theorem 10]{Lohrey13_treeRePair}. With a maximal
rank of $2$ it compresses well, because \graphrepair then only has limited choice in digrams. It
will first replace the pairs of $f$- and $a_l$-labeled ($0 \leq l \leq 4$) edges, because every
other possible digram has a rank greater than $2$. After this is done, a string-graph remains that can
now be compressed exponentially. If the maximal rank is unrestricted, the digrams of higher rank
(pairs of $2$ $f$-edges in the first iteration) occur more frequently and are thus replaced by
\graphrepair. This will replace all the $f$-edges, but all of the nodes remain in the start graph.
Furthermore, in the next step, pairs of nonterminal edges will be replaced as the most frequent
digram, again without sharing any nodes. This continues until, after the last iteration, the start
graph still has all $2^{n+1}+2$ nodes, and all the edges labeled $a_l$ for $0 \leq l \leq 4$.
Furthermore, it has 2 nonterminal edges of rank $2^{n-1}$. Thus, 
\[|S| = 2^{n+1} + 2^{n} + 2\cdot 2^{n-1} + 4 = 2^{n+2}+4.\]
Regardless of the rules and the pruning step, $|S|$ is already larger than $|T_n|$. In particular
note that, not counting the nonterminal edges, the size of $S$ is still $x = 2^{n+1}+2^n +4$.
Therefore 
\[\frac{x}{|T_n|} = \frac{2^{n+1} + 2^n + 4}{2^{n+1} + 2^{n+1} + 2} \overset{n\rightarrow
\infty}{\longrightarrow} \frac{3}{4}. \]
The
pruning step reduces the grammar's size, but the nodes and terminal edges in the
start graph remain. Thus, the compression ratio cannot be better than $3/4$.
Note that for TreeRePair the compression with unbounded rank is at best $50\%$ instead of $75\%$.
The reason for this is, that the size definition for trees only counts edges, whereas we consider
nodes and edges in graphs.

\subsection{Implementation Details}
\begin{figure*}[!t]
    \centering
    \input{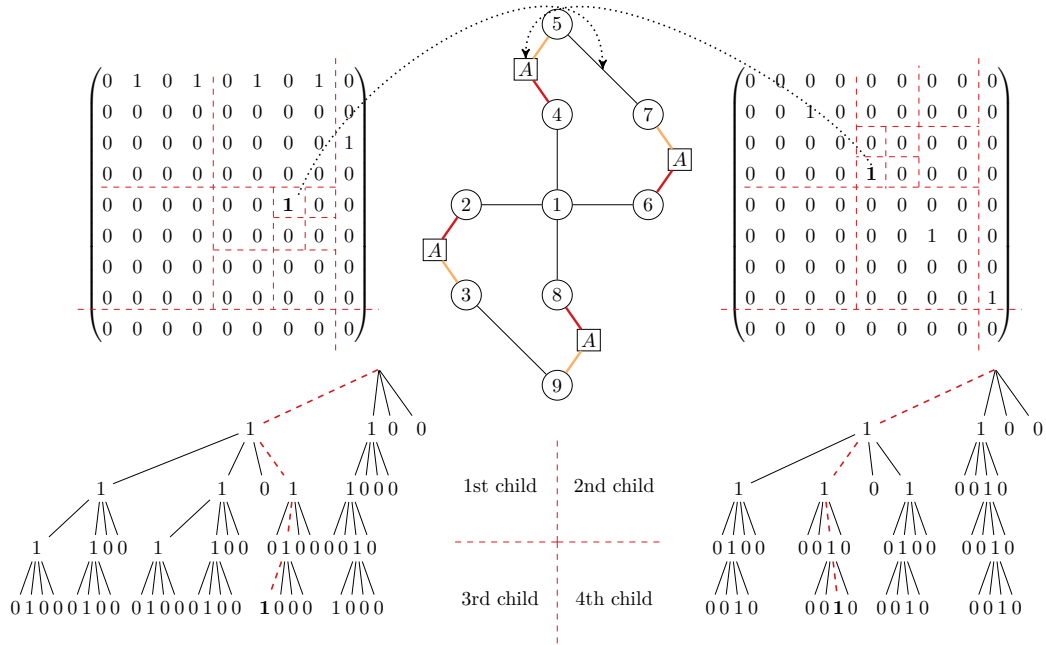}
    \caption{Start graph (middle): terminal edges (left) and nonterminal edges
      (right) and their \ktree representations (below) with $k=2$.}
    \label{fig:output_example_ktree}
\end{figure*}
In this section we describe some of the technical details of our implementation. 
We outline the involved data structures, and describe our
output format.
\subsubsection{Data Structures}\label{ss2:compression}
Our data structures are a direct generalization to graphs of the
data structures used for strings~\cite{Larsson00_rePair} and
trees~\cite[Figure~11]{Lohrey13_treeRePair}. The occurrences are managed using doubly linked lists
for every active digram. Of importance is a priority queue, which uses the frequency of a digram as
the priority. Following Larsson and Moffat~\cite{Larsson00_rePair} the length of this queue is
chosen as $\sqrt{n}$, where $n$ is the number of edges of the original input graph to \graphrepair.
\subsubsection{Grammar Representation}
We encode the start graph and the productions in different ways. As an example, consider again the
grammar in Figure~\ref{fig:grammar_example}. The start graph is encoded using
\ktree{s}~\cite{DBLP:journals/is/BrisaboaLN14}, using $k=2$ as this provides the best compression.
This data structure partitions the adjacency matrix into $k^2$
squares and represents it in a $k^2$-ary tree. Consider the left adjacency matrix in
Figure~\ref{fig:output_example_ktree}. The $9\times 9$-matrix is first expanded with 0-values to the
next power of two; i.e., $16 \times 16$. If one partition has only $0$-entries, a leaf labeled $0$ is
added to the tree. This happens for the 3rd and 4th partition in this case (the partitions are
numbered left to right, top to bottom as indicated in the bottom center of the figure). Thus the 3rd
and 4th child of the root are $0$-leafs. The other two have at least one $1$-entry, therefore inner
nodes labeled $1$ are added and the square is again partitioned into $k^2$ squares. This is
continued at most until every square covers exactly one value. At this point the values are added to
the tree as leafs. As we need to consider edges with different labels, we use a method similar
to the representation of RDF graphs proposed in~\cite{DBLP:journals/kais/Alvarez-GarciaB15}. Let
$E_{\sigma} \subseteq E$ be the set of all edges labeled $\sigma$. For every label $\sigma$
appearing in $S$ we encode the subgraph $(V, E_\sigma)$. If $\rank(\sigma) = 2$, then this is
encoded as an adjacency matrix. Otherwise we use an incidence matrix, i.e., a matrix that has one
row for every edge and a column for every node. Thus, a 1 in row $i$, column $j$ of the incidence
matrix means, that edge $i$ is attached to node $j$. All of these matrices are encoded as
\ktree{s}. Figure~\ref{fig:output_example_ktree} is an example with two edge labels. Note that this
example only uses edges of rank $2$. For a hyperedge $e$, the incidence matrix only provides
information on the set of nodes attached to $e$, but not the specific order of $\att(e)$. For this
reason we also store a permutation for every edge to recover $\att(e)$. We count the number of
distinct such permutations appearing in the grammar and assign a number to each. Then we store the
list encoded in a $\lceil\log n\rceil$-fixed length encoding, where $n$ is the number of distinct
permutations.

For the productions we use a different format, as we expect the right-hand-sides to be very small
graphs (due to pruning, they may be larger than just a digram). We store an edge list for every
production, encoding the nodes using a variable-length
$\delta$-code~\cite{DBLP:journals/tit/Elias75}. One more bit per node is used to mark external
nodes. As the order of the external nodes is also important, we make sure that the order induced by
the IDs of the external nodes is the same as the order of the external nodes. Every production
begins with the edge count (again, using $\delta$-codes). For every edge we first use one bit to
mark terminal/nonterminal edges, then store the number of attached nodes, followed by the
$\delta$-codes of the list of IDs. Finally, we also use a $\delta$-code for the edge label. For the
production in Figure~\ref{fig:grammar_example} this leads to the following encoding:
\[
    \begin{array}{ll}
        \delta(2) & \text{two edges} \\
        0\delta(2) & \text{edge is terminal ($0$), has two nodes} \\
        1\delta(1)1\delta(2)\delta(1) & \text{nodes 1 (external) and 2 (external), label 1} \\
        0\delta(2) & \text{next edge is terminal, has two nodes} \\
        1\delta(1)0\delta(3)\delta(1) & \text{nodes 1 (external) and 3 (internal), label 1}
    \end{array}
\]
This is a bit sequence of length 28.
%
\subsection{On the Choice of Grammar Formalism}\label{sss:formalism_choice}
We discuss briefly the reasoning for some of our choices regarding the grammar formalism used. This
includes the choice of hyperedge replacement over a different replacement method, and the
restrictions we further enforce on hypergraphs.
\subsubsection{Hyperedge vs. Node Replacement}
There are two well-known types of context-free graph grammars:
\begin{itemize}
    \item context-free hyperedge replacement grammars (HR grammars for short,
        see~\cite{Engelfriet:1997:CGG:267871.267874}), and
    \item context-free node replacement grammars (NR grammars for short,
        see~\cite{DBLP:conf/gg/EngelfrietR97}).
\end{itemize}
In terms of graph language generating power, NR grammars are strictly more expressive than HR
grammars. For instance, NR grammars can describe the set of all complete graphs, whereas this is not
possible with HR grammars. If we use NR grammars to produce bipartite graphs that encode
hypergraphs, then the resulting hypergraph language generating power is exactly the same as that of
HR grammars (see~\cite[Theorem 4.28]{Engelfriet:1997:CGG:267871.267874}). 
Note that for any given SL-HR grammar one can construct an equivalent SL-NR grammar of similar size.

The difference in expressive power comes due to node replacement grammars using a different
formalism for derivation. Instead of merging external nodes with the nodes in the nonterminals
neighborhood, every rule in an NR-grammar includes a \emph{connection relation}, which specifies,
which nodes in the rule are to be connected to which nodes in the nonterminal's neighborhood.
Consider as an example the grammar given in Figure~\ref{fig:NRcomplete}: the initial graph $S$ has
two nonterminal nodes, both labeled $A$. The rule for $A$ has two nonterminal nodes labeled $B$, and
the connection relation $\{(A,B), (C,B), (D,B), (a,B)\}$. When deriving one of the $A$-nodes in $S$
we note that the neighborhood of this node consists of only one node with label $A$. The tuple
$(A,B)$ in the connection relation means that every $A$ node from the nonterminals neighborhood is
connected with every $B$-node in the rule. Figure~\ref{fig:twoStepsNR} shows this, and one more
derivation step as an example.

The natural choice to define a digram for use of RePair with NR-grammars would be two neighboring
nodes and the edge between them. However, due to the connection relation, we would also need to
include the neighborhood in this definition, as different occurrences of the same two nodes may need
different connection relations. These may be incompatible with each other in some cases, or could be
merged into one rule in others. This could make digrams too specific, and occurrences somewhat rare.
It is therefore unclear, whether the use of node replacement yields a successful RePair variant, but
we note that it is interesting future work, particularly as we believe, that complete graphs can be
compressed much better using NR grammars.
\begin{conjecture}
    Let $C_n$ be a complete graph with $2^n$ nodes. Then there exists an SL-NR grammar $G_n$ such
    that $\val(G_n) = C_n$, and $|G_n| \in O(n)$, but there is no SL-HR grammar $G_n'$ such that
    $\val(G_n') = C_n$ and $|G_n| \in O(n)$.
\end{conjecture}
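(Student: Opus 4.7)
The plan is to prove the two assertions separately: the upper bound by an explicit construction, and the lower bound by a treewidth argument applied to the derivation tree (in fact yielding an exponential, not merely super-linear, lower bound).

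For the SL-NR side, I would introduce nonterminal node labels $A_0, A_1, \dots, A_n$ and a single terminal label, with start graph equal to one node labeled $A_n$. For $i \geq 1$ the rule for $A_i$ replaces the node by two nodes labeled $A_{i-1}$ joined by an edge, together with an embedding relation that re-routes every edge previously incident to the replaced $A_i$-node to both new $A_{i-1}$-nodes. The rule for $A_0$ replaces the node by a single terminal, preserving incident edges. By induction on $i$, expanding a fresh $A_i$-node in isolation yields $K_{2^i}$: when an inner $A_{i-1}$-node sitting inside a partial clique is expanded, the embedding duplicates every existing cross edge from its position to the rest of the clique, and the internal edge of the RHS supplies the missing link between the two new nodes. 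There are $n+1$ rules of constant size plus a constant-size start graph, so $|G_n| \in O(n)$.

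For the SL-HR lower bound, let $G$ be any SL-HR grammar with $\val(G) = K_{2^n}$ and consider $\dt(G)$. For each node $x$ of $\dt(G)$ with right-hand side $h_x$, set $B_x = V_{h_x}$ (after external nodes are identified with the attachment nodes of the replaced nonterminal edge). Edge coverage holds because every terminal edge of $\val(G)$ is created inside one specific RHS, with both endpoints in that RHS. For the running-intersection condition, note that a node $v$ introduced at some $x^*$ appears in a bag $B_z$ for a descendant $z$ of $x^*$ exactly when $v$ is an attachment node of the nonterminal expanded at $z$; this in turn forces $v$ to already be an attachment of the nonterminal whose expansion is the parent of $z$ (otherwise $v$ would not even be visible in the parent's RHS). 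Hence the set of bags containing $v$ forms a subtree of $\dt(G)$, and $\{B_x\}_x$ is a tree decomposition of $K_{2^n}$ of width at most $\max_x |V_{h_x}| - 1 \leq |G| - 1$. Since $K_{2^n}$ has treewidth $2^n - 1$, we obtain $|G| \geq 2^n$, which is incompatible with $|G| \in O(n)$.

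The main obstacle is to nail down the semantics of node replacement carefully enough to verify the NR construction: one has to specify how edges between two nonterminal nodes of a RHS are re-routed under successive expansions (so that the cross edges between the two halves of the growing clique are preserved), and how parallel edges introduced by repeated routings are identified, so that the output is the simple graph $K_{2^n}$ rather than a multigraph. A smaller point on the lower bound is that a node $v$ introduced at $x^*$ may simultaneously lie in the bags of several children of $x^*$ when $v$ is attached to more than one nonterminal edge of $h_{x^*}$; this is still compatible with the subtree condition but deserves explicit verification.
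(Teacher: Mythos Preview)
The paper does not prove this statement: it is explicitly labeled a conjecture, and the text says ``We currently have no proof for the second part of the conjecture.'' For the first part (the NR upper bound) the paper sketches essentially the same construction you give, via the grammar in the figure with nonterminals $A,B,C,D$ and the wildcard connection relation $\{(*,\cdot)\}$; your version with $A_0,\dots,A_n$ is the same idea and is fine once the wildcard embedding is permitted.

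Your argument for the HR lower bound is correct and in fact settles the conjecture. The construction of a tree decomposition of $\val(G)$ from the derivation tree, with bag $B_x = V_{h_x}$ at each derivation node, is the classical observation that HR-generated graphs have treewidth bounded by the maximum number of nodes in a right-hand side minus one (this is implicit in Habel's monograph and in Courcelle's work on HR grammars). Your verification of edge coverage and of the running-intersection property is accurate; the point you raise about a node being attached to several nonterminal edges in the same right-hand side is harmless, since the set of bags containing $v$ is then a union of subtrees all containing $x^*$, hence still a subtree. From $\max_A |V_{\rhs(A)}| \le |G|$ and $\mathrm{tw}(K_{2^n}) = 2^n-1$ you get $|G|\ge 2^n$, far stronger than the mere failure of $O(n)$. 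The authors appear simply to have overlooked this standard argument.

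On your stated obstacles for the NR side: in the edNCE-style formalism the embedding produces a \emph{set} of edges, so parallel edges are collapsed automatically and the output is simple; and the wildcard connection relation, which the paper explicitly allows, handles the re-routing uniformly regardless of the current labels of the neighbors. If wildcards were disallowed one would need connection relations of size $O(n)$ per rule, giving only an $O(n^2)$ bound; but that is not the setting here.
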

The previously mentioned grammar in Figure~\ref{fig:NRcomplete} is an example for an SL-NR grammar
that compresses a complete graph as in the conjecture and can be extended to generate $C_n$ for any
$n$. The connection relation for this grammar could also use a shorthand (i.e., $\{(*,a)\}$ could be
used for the connection relation of the $D$-rule instead of $\{(A,a), (B,a), (C,a), (D,a)\}$). Doing
so yields a grammar with $n$ rules, each having two nodes, one edge, and one tuple in the connection
relation. We currently have no proof for the second part of the conjecture.
\begin{figure}[!t]
    \centering
    \begin{tikzpicture}
    \begin{scope}[node distance=10pt]
        \node (S) {$S:$};
        \node[below=of S] (A) {$A\rightarrow$};
        \node[below=of A] (B) {$B\rightarrow$};
        \node[below=of B] (C) {$C\rightarrow$};
        \node[below=of C] (D) {$D\rightarrow$};
    \end{scope}

    \begin{scope}[every node/.style={circle, draw, inner sep=0pt, outer sep=0pt, minimum size=12pt,
        node distance=25pt}]
        \node[right=2pt of S] (s1) {$A$};
        \node[right=of s1] (s2) {$A$};
        \node[right=0pt of A] (a1) {$B$};
        \node[right=of a1] (a2) {$B$};
        \node[right=0pt of B] (b1) {$C$};
        \node[right=of b1] (b2) {$C$};
        \node[right=0pt of C] (c1) {$D$};
        \node[right=of c1] (c2) {$D$};
        \node[right=0pt of D] (d1) {$a$};
        \node[right=of d1] (d2) {$a$};
    \end{scope}
    \node[right=of a2] (crA) {$\{(A,B), (C,B), (D,B), (a,B)\}$};
    \node[right=of b2] (crB) {$\{(A,C), (B,C), (D,C), (a,C)\}$};
    \node[right=of c2] (crC) {$\{(A,D), (B,D), (C,D), (a,D)\}$};
    \node[right=of d2] (crD) {$\{(A,a), (B,a), (C,a), (D,a)\}$};

    \draw (s1) -- (s2);
    \draw (a1) -- (a2);
    \draw (b1) -- (b2);
    \draw (c1) -- (c2);
    \draw (d1) -- (d2);
\end{tikzpicture}
    \caption{NR grammar that generates a complete graph with 32 nodes.}
    \label{fig:NRcomplete}
\end{figure}
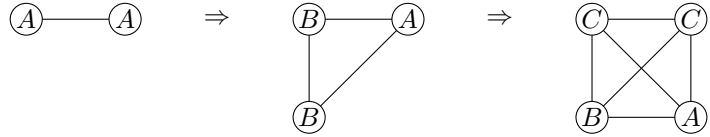
\begin{figure}[!t]
    \centering
    \begin{tikzpicture}
    \begin{scope}[every node/.style={circle, draw, inner sep=0pt, outer sep=0pt, minimum size=12pt,
        node distance=25pt}]
        \node (A11) {$A$};
        \node[right=of A11] (A21) {$A$};

        \node[right=2 of A21] (B12) {$B$};
        \node[below=of B12] (B22) {$B$};
        \node[right=of B12] (A12) {$A$};

        \node[right=2 of A12] (C13) {$C$};
        \node[right=of C13] (C23) {$C$};
        \node[below=of C13] (B13) {$B$};
        \node[right=of B13] (A13) {$A$};
    \end{scope}

    \draw (A11) -- (A21);

    \draw (B12) -- (B22);
    \draw (B12) -- (A12);
    \draw (B22) -- (A12);

    \draw (C13) -- (C23);
    \draw (C13) -- (B13);
    \draw (C13) -- (A13);
    \draw (C23) -- (B13);
    \draw (C23) -- (A13);
    \draw (B13) -- (A13);

    \node[right=.7 of A21] {$\Rightarrow$};
    \node[right=.7 of A12] {$\Rightarrow$};
\end{tikzpicture}
    \caption{Two derivation steps of the grammar from Figure~\ref{fig:NRcomplete}.}
    \label{fig:twoStepsNR}
\end{figure}
\subsubsection{On the Conditions \ref{condition1} and \ref{condition2}}
Recall, we also enforce two conditions on hypergraphs, which are not always found in the literature:
\begin{itemize}
    \item[\ref{condition1}] for every edge $e$: $\att(e)$ contains no node twice, and
    \item[\ref{condition2}] the string $\ext$ of external nodes contains no node twice.
\end{itemize}
We refer to these as att-, and ext-distinctness, respectively. We also call a rule att-distinct
(ext-distinct), if its right-hand side is att-distinct (ext-distinct). A grammar is att-distinct
(ext-distinct), if every rule and the start graph are att-distinct (ext-distinct). As already
mentioned in Section~\ref{sse:preliminaries} these conditions have no effect on the graph language
generating power of HR grammars (provided the graphs do not require edges/external nodes that are
not att/ext-distinct). For efficiency reasons \graphrepair does not consider or
produce edges that are not att-distinct. This potentially weakens the compression, as nonterminal
edges that are not att-distinct could be used to represent occurrences of different digrams using
only one rule. For example, consider the digrams \textit{g}) in Figure~\ref{fig:all_digrams}. Occurrences
of them could also be replaced by a rule using the digrams \textit{a}) from that same figure, by merging
the center and right-most external nodes. This is not necessarily always an
improvement: in this case we are using a rank-3 nonterminal edge, where a rank-2 edge would have
sufficed, thus storing larger nonterminals. It can be shown however, that there are grammars, which
are not att-distinct, but smaller than the smallest att-distinct grammar for the same graph.
Figure~\ref{fig:attDistinct_example} is an example for such a graph. We leave it as an exercise to
calculate the maximal impact of att-distinctness with respect to compression. For
ext-distinctness, on the other hand, we can show that the condition not only has no effect on
compression, but enforcing it makes the grammar \emph{smaller}.
\begin{figure}[!t]
    \centering
    \begin{tikzpicture}
    \begin{scope}[every node/.style={circle, draw, minimum size=7pt, outer sep=0pt, inner sep=0pt,
        node distance=30pt}]
        \node (S1) {};
        \node[right=of S1] (S2) {};
    \end{scope}
    \begin{scope}[every node/.style={circle, draw, minimum size=7pt, outer sep=0pt, inner sep=0pt,
        node distance=20pt}]
        \node[below=1 of S1, fill=black] (A1) {};
        \node[right=of A1] (A2) {};
        \node[right=of A2, fill=black] (A3) {};
    \end{scope}
    \node[left=1em of S1] (S) {$S$:};
    \node[left=0pt of A1] (A) {$A \rightarrow$};
    \begin{scope}[every path/.style={-stealth'}]
        \draw (S1) -- node[above] {$A$} (S2);
        \path (S1) edge[bend left=80] node[above] {$A$} (S2);
        \path (S1) edge[bend right=80] node[above] {$A$} (S2);
        \path (S1) edge[loop above,min distance=15pt, in=170, out=90, looseness=10] node[above] {$A$} (S1);
        \path (S2) edge[loop above,min distance=15pt, in=10, out=90, looseness=10] node[above] {$A$} (S2);
        \path (S1) edge[loop below,min distance=15pt, in=190, out=270, looseness=10] node[below] {$A$} (S1);
        \path (S2) edge[loop below,min distance=15pt, in=350, out=270, looseness=10] node[below] {$A$} (S2);
        \draw (A1) -- (A2);
        \draw (A2) -- (A3);
    \end{scope}
    \node[below=10pt of A2] (size1) {size = $14$};
    \node[below=-2pt of A1] {\scriptsize $1$};
    \node[below=-2pt of A3] {\scriptsize $2$};

    \begin{scope}[every node/.style={circle, draw, minimum size=7pt, outer sep=0pt, inner sep=0pt,
        node distance=20pt}]
        \node[above right=0.5 and 2 of A3] (val1) {};
        \node[above left=0.5 and 0.5 of val1] (val2) {};
        \node[below left=0.5 and 0.5 of val1] (val3) {};
        \node[right=of val1] (val6) {};
        \node[above=0.5 of val6] (val4) {};
        \node[below=0.5 of val6] (val5) {};
        \node[right=of val6] (val7) {};
        \node[above right=0.5 and 0.5 of val7] (val8) {};
        \node[below right=0.5 and 0.5 of val7] (val9) {};
    \end{scope}
    \begin{scope}[every path/.style={-stealth'}]
        \path (val1) edge[bend left] (val2);
        \path (val2) edge[bend left] (val1);
        \path (val1) edge[bend left] (val3);
        \path (val3) edge[bend left] (val1);
        \draw (val1) -- (val4);
        \draw (val1) -- (val5);
        \draw (val1) -- (val6);
        \draw (val4) -- (val7);
        \draw (val5) -- (val7);
        \draw (val6) -- (val7);
        \path (val7) edge[bend left] (val8);
        \path (val8) edge[bend left] (val7);
        \path (val7) edge[bend left] (val9);
        \path (val9) edge[bend left] (val7);
    \end{scope}
    \node[above right=5pt and -10pt of val2] (val) {$\val(S)$:};
    \node[below=10pt of val5] (size2) {size = $23$};

    \begin{scope}[every node/.style={circle, draw, minimum size=7pt, outer sep=0pt, inner sep=0pt,
        node distance=20pt}]
        \node[right=2 of val9, fill=black] (B1) {};
        \node[right=of B1, fill=black] (B2) {};
    \end{scope}
    \begin{scope}[every node/.style={circle, draw, minimum size=7pt, outer sep=0pt, inner sep=0pt,
        node distance=30pt}]
        \node[above=of B2] (BS1) {};
        \node[right=of BS1] (BS2) {};
    \end{scope}
    \begin{scope}[every node/.style={rectangle, draw, minimum size=12pt, outer sep=0pt, inner sep=0pt,
        node distance=20pt}]
        \node[above left=0.25 and 0.25 of BS1] (BE1) {$B$};
        \node[below left=0.25 and 0.25 of BS1] (BE2) {$B$};
        \node[above right=0.25 and 0.25 of BS2] (BE3) {$B$};
        \node[below right=0.25 and 0.25 of BS2] (BE4) {$B$};
    \end{scope}
    \node[left=3.8em of BS1] (BS) {$S$:};
    \node[left=0pt of B1] (B) {$B \rightarrow$};
    \begin{scope}[every path/.style={-stealth'}]
        \path (BS1) edge[bend left=80] node[above] {$A$} (BS2);
        \path (BS1) edge[bend right=80] node[above] {$A$} (BS2);
        \path (BS1) edge node[above] {$A$} (BS2);
        \path (B1) edge[bend left] (B2);
        \path (B2) edge[bend left] (B1);
    \end{scope}
    \draw (BS1) -- (BE1);
    \draw (BS1) -- (BE2);
    \draw (BS2) -- (BE3);
    \draw (BS2) -- (BE4);
    \node[below=10pt of B2] (size3) {size = $18$};
    \node[below=-2pt of B1] {\scriptsize $1$};
    \node[below=-2pt of B2] {\scriptsize $2$};
\end{tikzpicture}
    \caption{Example for a grammar that is not att-distinct (left), but smaller than the
        att-distinct grammar produced by \graphrepair (right) representing the same graph
    (middle).}
    \label{fig:attDistinct_example}
\end{figure}
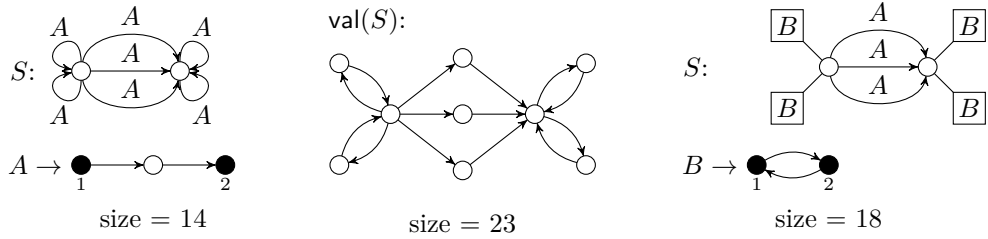
\begin{lemma}
    Given a non-ext-distinct SL-HR grammar $G = (N, P, S)$, an ext-distinct SL-HR grammar $G'$ can
    be constructed (in linear time), such that $L(G') = L(G)$ and $|G'| < |G|$.
    \label{lem:distinct_ext}
\end{lemma}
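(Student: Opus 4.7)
My plan is to eliminate duplicates from the $\ext$-string of each right-hand side of $G$, one at a time. Fix any rule $A \to g$ in $G$ (by the paper's convention, $S$ is included) whose external-node string $\ext_g = v_1 \cdots v_k$ contains a repetition, say $v_i = v_j$ for some $i < j$. The crucial semantic observation is that this duplicate is \emph{forced}: in any derivation step $g[-e] \cup h[\nr]$ that substitutes the right-hand side $h = \rhs(A)$ for an $A$-edge $e$, the renaming $\nr$ must satisfy $\nr(v_i) = \nr(v_j)$, which in turn forces $\att(e)[i] = \att(e)[j]$ for every $A$-labelled edge $e$ appearing anywhere in $G$. Thus the $j$-th entry of each such $\att(e)$ is always redundant information.

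The construction of $G'$ is then immediate: produce $g'$ by keeping the same nodes, edges, attachment map and labels as $g$ but setting $\ext_{g'} = v_1 \cdots v_{j-1}\, v_{j+1} \cdots v_k$, dropping $\rank(A)$ from $k$ to $k-1$; simultaneously delete the $j$-th entry of $\att(e)$ for every edge $e$ with $\lbl(e) = A$ in every right-hand side of $G$ (including $S$). The derivation-tree structure of $G$ is unaffected, and by the observation above the derivation itself is unchanged up to the collapse of two always-equal positions, so $\val(G') = \val(G)$ and hence $L(G') = L(G)$. Iterating the step over all remaining duplicates in all right-hand sides turns $G$ into an ext-distinct grammar; each iteration is a single scan over the grammar, so the total cost is $O(|G|)$.

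For the size, note first that $|g'| = |g|$, since neither $|g|_V$ nor $|g|_E$ depends on $\ext_g$. The savings come entirely from $A$-edges elsewhere in the grammar: under the edge-size convention $|g|_E$ an edge of rank $k$ contributes $\max(1,k)$, so each $A$-edge drops from $\max(1,k)$ to $\max(1,k-1)$, a strict decrease whenever $k \geq 3$. Since condition~(3) of Definition~\ref{def:SLHR} guarantees that $A$ is reachable, at least one such $A$-edge exists in $G$, and thus one application of the procedure at any non-ext-distinct rule of rank $\geq 3$ already yields $|G'| < |G|$.

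The main point to pin down is the degenerate case where a non-ext-distinct right-hand side has rank exactly $2$: here the two coinciding external nodes force every $A$-edge to be a self-loop, and the above collapse leaves the edge-size formula unchanged. I would dispose of this case by observing that such rank-$2$ loops can be inlined once (replacing each $A$-edge by the body of $g$ with both attached nodes identified), which removes one rule altogether while leaving the rest of the grammar no larger, and hence is strictly size-reducing. After combining this special treatment with the generic step above, the overall construction produces an ext-distinct grammar $G'$ with $L(G') = L(G)$ and $|G'| < |G|$ in linear time, as claimed.
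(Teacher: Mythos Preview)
Your main construction is the same as the paper's: the paper routes through an auxiliary nonterminal $A'$ (right-hand side $g$ with $\ext$ replaced by its duplicate-free prefix $\dist(\ext_g)$) and then eliminates $A$, whereas you shrink $\ext_g$ and every $A$-edge directly; the outcome and the size bookkeeping (each $A$-edge of rank $k\geq 3$ loses at least one unit of edge-size, while $|g|$ is unchanged) coincide.

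The genuine gap is your rank-$2$ case. You assert that inlining the body of $g$ at every self-loop $A$-edge ``removes one rule altogether while leaving the rest of the grammar no larger'', but that is false: each inlining replaces an edge of size~$1$ by a fresh copy of $g$ with its unique external node merged into the loop node, adding $|g|_V-1$ nodes and $|g|_E$ units of edge-size, i.e.\ $|g|-2$, to that right-hand side. With $m$ occurrences of $A$ the overall change is $m(|g|-2)-|g|=(m-1)|g|-2m$, which is positive already for $m=2$, $|g|=5$. Concretely, let $S$ consist of two nodes, each carrying one $A$-self-loop, and let $\rhs(A)$ be a single node $v$ with $\ext=vv$ and ten terminal self-loops of pairwise distinct labels: then $|G|=15$, while your inlining produces a grammar of size $22$. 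Note also that your generic step applied here (rank $2\to 1$) keeps every $A$-edge at size~$1$ and hence gives only $|G'|=|G|$, so the strict inequality $|G'|<|G|$ is not established in this corner case.
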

\begin{proof}
    For a string $w$ let $\symb(w)$ be the set of symbols appearing in $w$, and $\dist(w)$ the
    string $v$ derived from $w$ by only keeping the first occurrence of every distinct symbol in
    $w$. For example, for $w = acbac$, $\symb(w) = \{a,b,c\}$, and $\dist(w) = acb$. Let $(A,g)$ be
    a rule that is not ext-distinct. We first add a new rule $(A', g')$ with $\rank(A') =
    |\dist(\ext_g)|$, and $g'$ is the same graph as $g$ except with $\ext_{g'} = \dist(\ext_g)$.
    Then we replace the $A$-rule by $(A, h)$ where $h = (\symb(\ext_g), \{e\}, \lbl(e) = A', \att(e)
    = \dist(\ext_g), \ext = \ext_g)$. We can now derive every occurrence of $A$ in the grammar and
    remove the $A$-rule altogether. Doing this merges some of the nodes in the grammar, as they were
    referenced by the same external node. Note that this process reduces the size of the grammar.
    Let $u$ be a node that occurs $k \geq 2$ times in $\ext_g$. Then, applying the $A$-rules
    decreases the node size by $k-1$. The edge-size also decreases by $k-1$, as $A'$ has a
    smaller rank.
%
%
\end{proof}

\section{Experimental results}\label{sse:experiments}
We implemented a prototype in Scala (version 2.11.7) using the Graph for
Scala library\footnote{http://www.scala-graph.org/} (version 1.9.4). The experiments are conducted
on a machine running Scientific Linux 6.6 (kernel version 2.6.32), with 2 Intel Xeon E5-2690 v2
processors at 3.00~GHz and 378~GB memory. As we are only evaluating a prototype, we do not mention
runtime or peak memory performance, as these can be largely improved by a more careful
implementation. We compare to the following compressors:
\begin{itemize}
    \item \ktree, for which we use our own Scala-implementation following the description
        in~\cite{DBLP:journals/is/BrisaboaLN14}, using the same binary format, however without the
        further optimization on compressing the leaf-level described there.
    \item The list merge (LM) algorithm by Grabowski and
        Bieniecki~\cite{DBLP:journals/dam/GrabowskiB14}. We use $64$ for their chunk size parameter,
        as in their paper.
    \item The combination of dense substructure removal \cite{Buehrer08_scalableWebGraphCompression}
        and \ktree by Hern\'andez and Navarro~\cite{DBLP:journals/kais/HernandezN14} (HN). For the
        parameters to the algorithm we use $T=10$, $P= 2$, and $ES=10$, which are the parameters
        their experiments show to provide the best compression. Note that this compressor uses a
        \ktree implementation with all the optimizations.
\end{itemize}
The latter two implementations were provided by the authors. We also experimented with RePair on
adjacency lists by Claude and Navarro~\cite{Claude10_compactWebGraphRep}, but omit the results here,
because on all graphs we tested, stronger compression was achieved by another compared compressor.

As common in graph compression, we present the compression ratios in \emph{bpe} (bits per
edge). Note that our compressor reorders the nodes. We omit the space required to retain the
original node IDs, because we assume that they represent arbitrary data values and it is possible to
update this mapping. This is particularly true for RDF graphs, as explained in
Section~\ref{ss2:rdf_results}.

Note that it is known that using different node orders will change the results for all of these
compressors, including \graphrepair. We discussed the importance of the traversal order, but another
area where the node order will have an effect is the final encoding of the start graph using the
\ktree-structure. To get comparable results, we use the natural order in all cases, when it comes to
the encoding. We ran experiments, where the start graph of the grammar produced by \graphrepair was
reordered according to the initially computed \FP-order. This improves some of the results, but we
consider it future work to test the impact of different orders on the encoding of the start graph
(cf the conclusions).
\subsection{Datasets}
We use three different types of graphs:
        network graphs (Table~\ref{tbl:networks_desc}),
        RDF graphs (Table~\ref{tbl:rdf_desc}), and
        version graphs (Table~\ref{tbl:versions_desc}).
Each table lists the numbers of nodes and edges and the number of equivalence classes of $\FPequiv$
(see Section~\ref{ss2:node_order}) of each graph. For RDF graphs we also list the number of 
edge labels (i.e., predicates) of each graph. Two of the version graphs also have labeled edges.
\begin{table}[!t]
\centering
\tblFontSize
\begin{tabularx}{\textwidth}{XRRR}\toprule
    Graph & $|V|$ & $|E|$ & $|[\FPequiv]|$ \\\midrule
    CA-AstroPh & \numprint{18772} & \numprint{396160} & \numprint{14742} \\
    CA-CondMat & \numprint{23133} & \numprint{186936} & \numprint{17135} \\
    CA-GrQc & \numprint{5242} & \numprint{28980} & \numprint{3394} \\
    Email-Enron & \numprint{36692} & \numprint{367662} & \numprint{5805} \\
    Email-EuAll & \numprint{265214} & \numprint{420045} & \numprint{28895} \\
    NotreDame & \numprint{325729} & \numprint{1497134} & \numprint{118264} \\
    Wiki-Talk & \numprint{2394385} & \numprint{5021410} & \numprint{566846} \\
    Wiki-Vote & \numprint{7115} & \numprint{103689} & \numprint{5806}\\\bottomrule
\end{tabularx}
\caption{Network graphs}
\label{tbl:networks_desc}
\end{table}
We give a short description of each graph: the network graphs are from the Stanford Large
Network Dataset Collection\footnote{\url{http://snap.stanford.edu/data/index.html}} and are
unlabeled. They are communication networks (Email-EuAll, Wiki-Vote, Wiki-Talk),
a web graph (NotreDame) and Co-Authorship networks (CA-AstroPh, CA-CondMat, CA-GrQc). Even if
they were advertised as undirected, we considered all of them to be lists of directed edges, to
improve the comparability with the other compressors, as these would also assume the input graph to
be directed. 
\begin{table}[!t]
\centering
\tblFontSize
\begin{tabularx}{\textwidth}{p{10em}RRRR}\toprule
    Graph & $|V|$ & $|E|$ & $|\Sigma|$ & $|[\FPequiv]|$\\\midrule
    1 Specific properties en & \numprint{609014} & \numprint{819764} & \numprint{71} &
    \numprint{236235} \\
    2 Types ru & \numprint{642340} & \numprint{642364} & \numprint{1} & \numprint{79} \\
    3 Types es & \numprint{818657} & \numprint{819780} & \numprint{1} & \numprint{336} \\
    4 Types de with en & \numprint{618708} & \numprint{1810909} & \numprint{1} & \numprint{335} \\
    5 Identica & \numprint{16355} & \numprint{29683} & \numprint{12} & \numprint{14588} \\
    6 Jamendo & \numprint{438975} & \numprint{1047898} & \numprint{25} &
    \numprint{396725}\\\bottomrule
\end{tabularx}
\caption{RDF graphs}
\label{tbl:rdf_desc}
\end{table}

The RDF graphs mostly come from the DBPedia
project\footnote{\url{http://wiki.dbpedia.org/Downloads2015-04}}, which is an effort of representing
ontology information from Wikipedia. We evaluate on \emph{specific
mapping-based properties} (English), which contains infobox data from the English Wikipedia and
\emph{mapping-based types}, which contains the rdf:types for the instances extracted from the
infobox data. We use three different versions of the latter: types for instances extracted from the
Spanish and Russian Wikipedia pages that do not have an equivalent English page, and types for
instances extracted from the German Wikipedia pages that do have an equivalent English page. The
Identica-dataset\footnote{\url{http://www.it.uc3m.es/berto/RDSZ/}} represents messages from the
public stream of the microblogging site identi.ca. Its triples map a notice or user with predicates
such as creator (pointing to a user), date, content, or name. 
The Jamendo-dataset\footnote{\url{http://dbtune.org/jamendo/}} is a linked-data representation of
the Jamendo-repository for Creative Commons licensed music. Subjects are artists, records, tags,
tracks, signals, or albums. The triples connect them with metadata such as names, birthdate,
biography, or date.
\begin{table}[!t]
\centering
\tblFontSize
\begin{tabularx}{\textwidth}{XRRRR}\toprule
    Graph & $|V|$ & $|E|$ & $|\Sigma|$ & $|[\FPequiv]|$ \\\midrule
    Tic-Tac-Toe & \numprint{5634} & \numprint{10016} & \numprint{3} & \numprint{9} \\
    Chess & \numprint{76272} & \numprint{113039} & \numprint{12} & \numprint{74592} \\
    DBLP60-70 & \numprint{24246} & \numprint{23677} & \numprint{1} & \numprint{2739} \\
    DBLP60-90 & \numprint{658197} & \numprint{954521} & \numprint{1} &
    \numprint{207305}\\\bottomrule
\end{tabularx}
\caption{Version graphs}
\label{tbl:versions_desc}
\end{table}

Version graphs are disjoint unions of multiple versions of the same graph. Here, Tic-Tac-Toe
represents winning positions, and Chess represents legal moves\footnote{Both from
\url{http://ailab.wsu.edu/subdue/download.htm}}. The files contain node labels from a
finite alphabet, which we ignore here. DBLP60-70 and DBLP60-90 are co-authorship networks from DBLP,
created from the XML\footnote{\url{http://dblp.uni-trier.de/xml/} (release from August 1st, 2015)}
file by using author IDs as nodes and creating an edge between two authors who appear as co-authors
of some entry in the file. To make version graphs, we created graphs containing the disjoint union
of yearly snapshots of the co-authorship network.
\subsection{Influence of Parameters}
We evaluate how the different parameters for our compressor affect compression. For
these experiments every parameter except the one being evaluated is fixed for the runs. Note that
this sometimes leads to situations where none of the results in a particular experiment represents
the best compression our compressor is able to achieve for the given graph. The parameters evaluated
are
        the maximum rank of a nonterminal and
        the node order.
\subsubsection{Maximum Rank} We test maxRank values from 2 up to 8. The average results for the
three types of graphs tested are given in Table~\ref{tbl:maxRank_results}, as compression in bpe.
We did some tests for higher values (up to 16) but only got worse results. The best results are
marked in bold. For each class of graphs, a value of $4$ achieved the best results on average. We
therefore conclude that a value of $4$ is a good compromise for our data sets.
\begin{table}[!t]
    \centering
    \nprounddigits{2}
    \tblFontSize
    \begin{tabularx}{\textwidth}{lRRRRRRR}\toprule
        &2&3&4&5&6&7&8\\\midrule
        Network Graphs & \numprint{11.74} & \numprint{11.59} & \textbf{\numprint{11.44}} &
        \numprint{11.94} & \numprint{12.51} & \numprint{13.67} & \numprint{12.92} \\
        RDF	Graphs & \numprint{5.17} & \numprint{5.40} & \textbf{\numprint{4.85}} & \numprint{4.92} &
        \numprint{5.60} & \numprint{6.12} & \numprint{6.25} \\
        Version	Graphs & \numprint{7.99} & \numprint{8.21} & \textbf{\numprint{5.57}} & \numprint{5.93} &
        \numprint{6.07} & \numprint{6.12} & \numprint{6.11}\\\bottomrule
    \end{tabularx}
    \caption{Average results for different values of maxRank (compression in bpe).}
    \label{tbl:maxRank_results}
\end{table}
\subsubsection{Node Order}\label{ss2:nodeOrder} Recall from Section~\ref{ss2:node_order} that the
\FP-order is a fixed point computation starting from the node degrees. As this is an iterative
process, it can be terminated at any point. We were interested how much difference a
fixpoint makes compared to using just the node degrees ($\FP0$).
Figure~\ref{fig:networks_orderComparison} shows the compression ratio of a selection of graphs under
the different node orders. The selection aims to be representative for the graphs of the types we
evaluated: CA-graphs behave similar to CA-AstroPh, version graphs similar to DBLP60-70, and the
RDF graphs similar to Specific properties en. The other graphs in the figure are chosen because they
are outliers in their respective category. Our \FP-order achieves the best result on almost all of the
graphs. On RDF graphs the order generally had only marginal impact: the best and worst results
usually are within 0.5~bpe of each other. The Jamendo graph presents an exception here, with the
natural order being about 1~bpe better than the closest other result and featuring the largest
difference between $\FP0$ and \FP of all our graphs. Version graphs however benefit
hugely from the \FP-order, as further discussed in Section~\ref{ss2:version_results}. This shows
that two or more versions of the same graph are similarly ordered in the \FP-order, increasing the
likelihood of the compressor of recognizing repeating structures.

There is another interesting observation about the \FP-order, or in particular the equivalence
relation $\FPequiv$. It is likely that nodes with a high similarity, i.e., the same neighborhood up
to a certain distance, are equivalent in this relation. This implies that graphs with a low number
of equivalence classes should compress well, as they would have many repeating substructures.
Figure~\ref{fig:bpevsequiv} shows this correlation. There is no graph in the lower right corner,
i.e., there is no graph with a low number of equivalence classes and low compression.
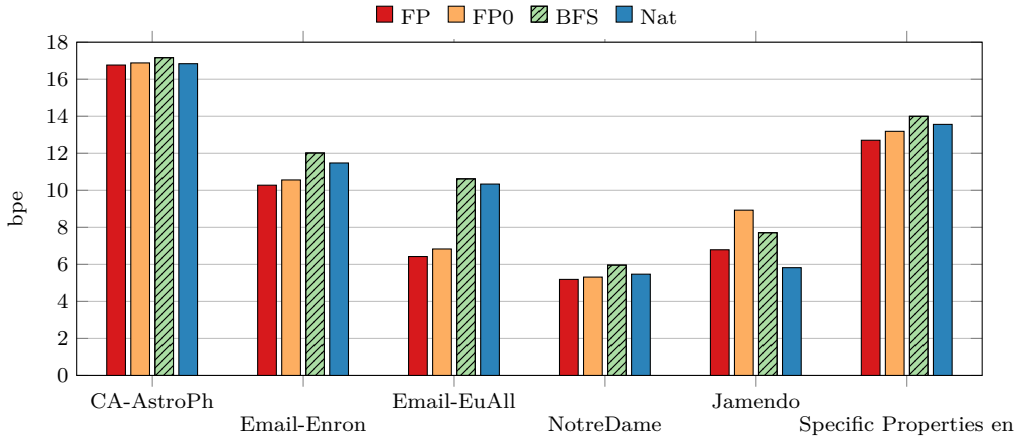
\begin{figure}[!t]
    \centering
    \pgfplotstableread[col sep=comma]{figures/data/results_nodeorder.csv}\nodeOrderResults
\begin{tikzpicture}

\graphFontSize
\begin{axis}[
    ybar,
    ymin=0,
    ymax=18,
    ylabel=bpe,
    ylabel style={yshift={-1.5em}},
    xticklabels from table={figures/data/results_nodeorder.csv}{Graph},
    legend entries={FP,FP0,BFS,Nat},
    legend style={at={(0.5,1.01)}, draw=none, anchor=south,/tikz/every even column/.append style={column sep=0.5em},
    /tikz/every odd column/.append style={column sep=0.1em}},
    ytick={0,2,4,6,8,10,12,14,16,18},
    ymajorgrids=true,
    xticklabel style={text height=1.5ex,yshift={-mod(\ticknum,2)*1em}},
    xtick=data,
    x tick label style={rotate=0,anchor=north},
    width=1.0\textwidth,
    height=60mm,
    bar width=7pt,
    legend cell align=left,
    legend columns=4
]
\addplot [fill=color1] table[x expr=\coordindex,y=FP]{\nodeOrderResults};
\addplot [fill=color2] table[x expr=\coordindex,y=FP0]{\nodeOrderResults};
\addplot [fill=color3,postaction={pattern=north east lines}] table[x
expr=\coordindex,y=BFS]{\nodeOrderResults};
\addplot [fill=color4] table[x expr=\coordindex,y=Nat]{\nodeOrderResults};
\end{axis}
\end{tikzpicture}
    \caption{Performance of \graphrepair under different node orders.}
    \label{fig:networks_orderComparison}
\end{figure}
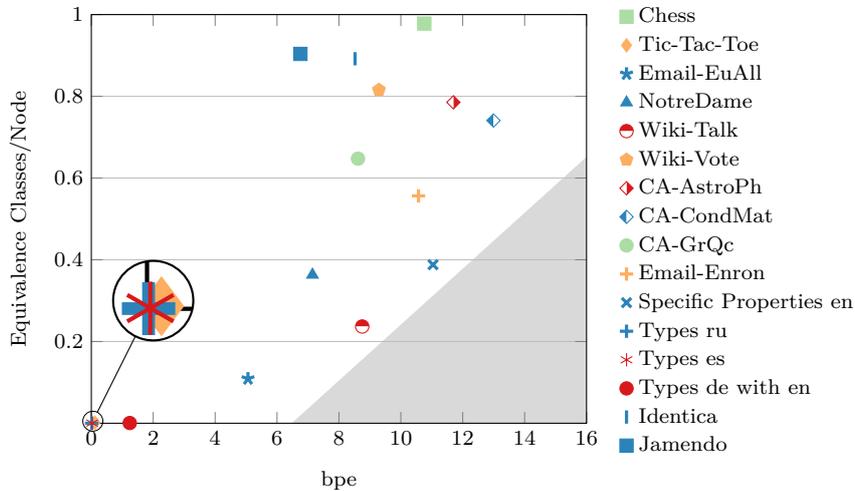
\begin{figure}[!t]
    \centering
    \pgfplotstableread[col sep=comma]{figures/data/results_bpeEquiv.csv}\resultsBpeEquiv
\begin{tikzpicture}[spy using outlines={circle, magnification=4, connect spies}]
\graphFontSize
\begin{axis}[
    ymin=0,
    ymax=1.0,
    xmin=0,
    xmax=16,
    xtick={0,2,4,6,8,10,12,14,16},
    ytick={0.2,0.4,0.6,0.8,1},
    ylabel=Equivalence Classes/Node,
    ylabel style={yshift={-1em}},
    xlabel=bpe,
    legend entries={
        Chess,%
        Tic-Tac-Toe,%
        Email-EuAll,%
        NotreDame,%
        Wiki-Talk,%
        Wiki-Vote,%
        CA-AstroPh,%
        CA-CondMat,%
        CA-GrQc,%
        Email-Enron,%
        Specific Properties en,%
        Types ru,%
        Types es,%
        Types de with en,%
        Identica,%
        Jamendo},
    legend style={at={(1.05,0.47)}, draw=none, anchor=west},
%
    ymajorgrids=true,
    width=0.6\textwidth,
    height=70mm,
    legend cell align=left,
    legend columns=1,
    enlargelimits=false,
    mark options={mark size=2.5pt},
    axis background/.style={postaction={path picture={\draw[color=black!15,fill=black!15] (axis
    cs:6.5,0) -- (axis cs:16,0.65) -- (axis cs:16,0) -- (axis cs:6.5,0);}}}
]

\addplot[
    scatter/classes={
        Chess={mark=square*,color=color3},%
        TTT={mark=diamond*,color=color2},%
        EuAll={mark=star,color=color4,very thick},%
        NotreDame={mark=triangle*,color=color4},%
        WikiTalk={mark=halfcircle*,color=color1},%
        WikiVote={mark=pentagon*,color=color2},%
        Astro={mark=halfsquare right*,color=color1},%
        Cond={mark=halfsquare left*,color=color4},%
        GrQc={mark=*,color=color3},%
        Enron={mark=+,color=color2,very thick},%
        SpecEn={mark=x,color=color4,very thick},%
        TypesRu={mark=+,color=color4,very thick},%
        TypesEs={mark=asterisk,color=color1},%
        TypesDeEn={mark=*,color=color1},%
        Identica={mark=|,color=color4,very thick},%
        Jamendo={mark=square*,color=color4}
    },
    scatter,
    scatter src=explicit symbolic,
    only marks]
    table[x=y, y=x, meta=Graph]{\resultsBpeEquiv};

    \coordinate (spypoint) at (axis cs:0.05,0.005);
    \coordinate (magnify) at (axis cs:2,0.3);

    \spy[black,size=30pt] on (spypoint) in node at (magnify);
%
\end{axis}
\end{tikzpicture}
    \caption{Correlation between equivalence classes of $\FPequiv$ and compression.}
    \label{fig:bpevsequiv}
\end{figure}
%
\subsection{Comparison with other Compressors}\label{sss:comparison}
We compare \graphrepair with the compressors \ktree, LM, and HN listed at the beginning of
Section~\ref{sse:experiments}. Note that we compare RDF graph compression
only against the \ktree-method, as LM and HN have not been extended to RDF graphs. While these
algorithms all work as in-memory data structures, they produce outputs with file sizes comparable to
the in-memory representations. We measure the compression performance based on these file
sizes. Where applicable, we furthermore use the dense substructure removal done as a
first step in the HN-method (see Section~\ref{sse:related}) in combination with our compressor,
marked as ``\graphrepair{}+DSR''. To do so, we added specially labeled rank-1 edges to the virtual
nodes created by the dense substructure removal, to ensure that the original graph could be
restored. Usually, the virtual nodes are identified by having an ID greater than the largest ID
occurring in the original graph. As our compressor reorders the nodes, this is no longer feasible,
and marking them with additional edges was the most space-efficient method we found.

Let us give an idea of the compression ratio for the graphs according to our size definition. Using
\graphrepair (without DSR), we achieve, on average, a compression ratio ($\frac{|G|}{|g|}$) of
\begin{itemize}
    \item $68\%$ for network graphs, 
    \item $35\%$ for RDF, and 
    \item $24\%$ for version graphs.
\end{itemize}
The parameters we choose for \graphrepair are $\text{maxRank} = 4$ and the \FP-order, both being
generally the best choice for our dataset. We note that in most results the majority of the file
sizes of \graphrepair{'s} output ($>90\%$) is for the \ktree representation of the start graph.
\subsubsection{Network Graphs}
Our results on network graphs compared to \ktree, LM, and HN are summarized in
Figure~\ref{fig:compression_comparison_networks}. We improve on the plain \ktree-representation on
all graphs but NotreDame. However, our results are often slightly worse than LM and HN, with 
Email-EuAll and CA-GrQc being exceptions. That being said, dense substructure removal can
be combined with our compressor, using their dense substructure removal as a preprocessing step.
This generally improves on our results and achieves the smallest bpe-values for two of the three
CA-graphs.
%
%
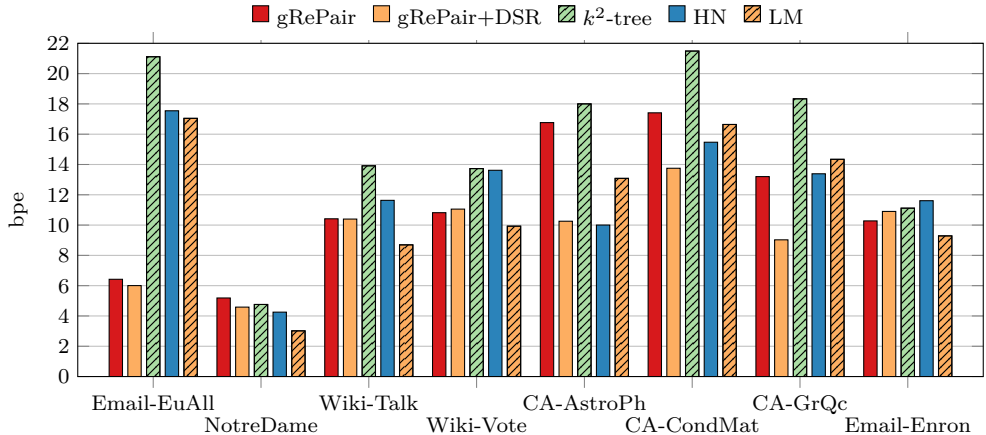
\begin{figure}[!t]
    \centering
    \pgfplotstableread[col sep=comma]{figures/data/results_networks.csv}\networkResults
\begin{tikzpicture}

\graphFontSize
\begin{axis}[
    ybar,
    ymin=0,
    ymax=22,
    ylabel=bpe,
    ylabel style={yshift={-1.5em}},
    xticklabels from table={figures/data/results_networks.csv}{Graph},
    legend entries={\graphrepair,\graphrepair{}+DSR,\ktree,HN,LM},
    legend style={at={(0.5,1.01)}, draw=none, anchor=south,/tikz/every even column/.append
    style={column sep=0.5em}, /tikz/every odd column/.append style={column sep=0.1em}},
    ytick={0,2,4,6,8,10,12,14,16,18,20,22},
    ymajorgrids=true,
    xticklabel style={text height=1.5ex,yshift={-mod(\ticknum,2)*1em}},
    xtick=data,
    x tick label style={rotate=0,anchor=north},
    width=1.0\textwidth,
    height=60mm,
    bar width=5pt,
    legend cell align=left,
    legend columns=5
]
\addplot [fill=color1] table[x expr=\coordindex,y=gRePair]{\networkResults};
\addplot [fill=color2] table[x expr=\coordindex,y=gRePairDSR]{\networkResults};
\addplot [fill=color3,postaction={pattern=north east lines}] table[x
expr=\coordindex,y=kTree]{\networkResults};
\addplot [fill=color4] table[x expr=\coordindex,y=HN]{\networkResults};
\addplot [fill=color2,postaction={pattern=north east lines}] table[x expr=\coordindex,y=LM]{\networkResults};
\end{axis}
\end{tikzpicture}
    \caption{Network graph comparison of \graphrepair with three other compressors.}
    \label{fig:compression_comparison_networks}
\end{figure}
\subsubsection{RDF Graphs}\label{ss2:rdf_results}
Recall from Section~\ref{sss:rdfGraphCompressionRelated} that the values for subject, predicate, and
object of RDF triples are commonly mapped to integers using a dictionary to represent the original
values. As in this way dictionary and graph are separate entities, we only focus on compressing the
graph. Any method for dictionary compression can be used to additionally compress the dictionary
(e.g.~\cite{DBLP:conf/sac/Martinez-PrietoFC12}) and we omit the space necessary for the dictionary.

Our results in comparison to \ktree are given in Table~\ref{tbl:compression_comparison_rdf}. We
greatly improve against this representation. For the graphs 2 -- 4 (in particular 2 and 3) we are able
to produce a representation that is orders of magnitude smaller than the \ktree-representation (note
that $|[\FPequiv]|$ is very low for these graphs). For these two graphs in particular, this is due
to the majority of their nodes being laid out in a star pattern: a few hub nodes of very high degree
are connected to nodes, most of which are only connected to the hub node. Furthermore, while not
acyclic, the graphs are also very tree-like. Structures like these are compressed well by
\graphrepair, because every iteration of the replacement-round of \graphrepair roughly halves the
number of edges around the hub node. 
\begin{table}[!t]
    \centering
    \nprounddigits{0}
    \tblFontSize
    \begin{tabularx}{\textwidth}{lRRRRRR}\toprule
                    RDF-Graph& 1 & 2 & 3 & 4 & 5 & 6 \\\midrule
                    \graphrepair & \numprint{1270.5859375} & \numprint{0.9833984375} &
                    \numprint{2.7880859375} & \numprint{266.9580078125} & \numprint{30.4091796875} &
                    \numprint{871.99609375} \\
                    \ktree  & \numprint{2731.23828125} & \numprint{590.345703125} &
                    \numprint{938.3798828125} & \numprint{1118.9970703125} &
                    \numprint{51.8330078125} & \numprint{988.1552734375}\\\bottomrule
    \end{tabularx}
    \caption{Results on RDF graphs (Size in KB)}
    \label{tbl:compression_comparison_rdf}
\end{table}
\subsubsection{Version Graphs}\label{ss2:version_results}
We describe several experiments over version graphs. First we study how the compressor behaves given a high
number of identical copies of the same simple graph. The graph in this case is a directed circle
with four nodes and one of the two possible diagonal edges.
Figure~\ref{fig:smallSquareGraphMultiply} shows the results of this experiment for
identical copies starting from 8 going in powers of 2 up to 4096. Clearly, \graphrepair
is able to compress much better in this case (``exponential compression''), while the file size of other methods rises with
roughly the same gradient as the size of the graph. Note that both axes in this graph use a
logarithmic scale: in this case, \graphrepair produces a representation that is orders of magnitude
smaller than the other compressors.

Except for identical copies of rather simple graphs, however, we cannot expect to achieve
exponential compression on version graphs. Every version has changes and it is not easy
to decide which parts of two versions remain the same and can thus be compressed using the same
nonterminals. Even if we can guarantee that the same (i.e., isomorphic) substructures are
consistently compressed in the same way, the changes between versions might be too big to allow for
exponential compression. Our \FP-order is inspired by the Weisfeiler-Lehman
method~\cite{WeisfeilerLehman68} (see also~\cite{DBLP:journals/combinatorica/CaiFI92}),
which approximates a test for isomorphism. The results on version graphs, when comparing different
orders (see also Section~\ref{ss2:nodeOrder} above), suggest that this is indeed exploited.
Figure~\ref{fig:dblp60s_orderComparison} shows a
comparison on the compression of a version graph from the DBLP co-authorship network. We
started with a co-authorship network including publications from 1960 and older. To this graph we
then add versions with the publications from 1961, 1962,\dots until 1970 and compress the graphs
obtained in this way. The comparison shows that using the \FP-order our method achieves better
compression than using other orders.
Note that the results for BFS or random order are much closer to \ktree{s}. Our full
results for version graphs are given in Table~\ref{tbl:version_results}. Note that we compare
Tic-Tac-Toe and Chess only against \ktree, because these graphs have edge labels. The results show
that \graphrepair compresses version graphs well.
\begin{table}
    \centering
    \nprounddigits{2}
    \tblFontSize
    \begin{tabularx}{\textwidth}{lRRRR}\toprule
        & TTT & Chess & DBLP60-70 & DBLP60-90 \\\midrule
        \graphrepair & \numprint{0.12} & \numprint{9.06} & \numprint{9.54} &
        \numprint{13.39} \\
        \ktree & \numprint{9.61581469} & \numprint{13.1029467} & \numprint{15.782404} & \numprint{20.79781586} \\
        LM & - & - & \numprint{16.4449888077} & \numprint{19.3223784495} \\
        HN & - & - & \numprint{16.6491} & \numprint{18.26} \\\bottomrule
    \end{tabularx}
    \caption{Results on version graphs (compression in bpe)}
    \label{tbl:version_results}
\end{table}
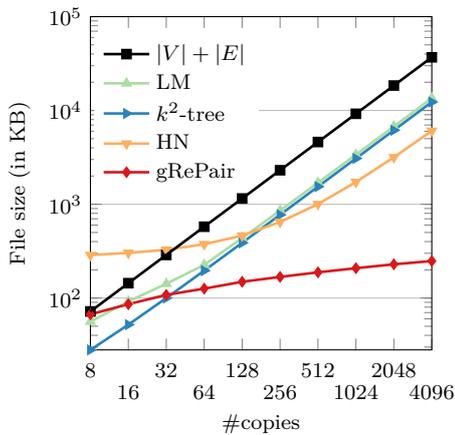
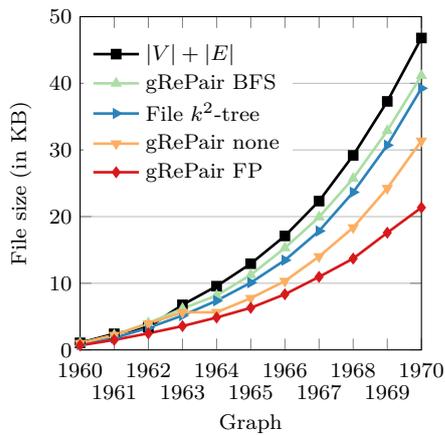
\begin{figure}[!t]
    \centering
    \subfloat[Compression of disjoint unions of the same synthetic graph with 4 nodes and 5
    edges.]{\pgfplotstableread[col sep=comma]{figures/data/results_expComp.csv}\resultsExpComp
\begin{tikzpicture}

\graphFontSize
\begin{semilogyaxis}[
    set layers=standard,
    ymin=0,
    ymax=100000,
    ylabel=File size (in KB),
    ylabel style={yshift={-1em}},
    xlabel=\#copies,
    xlabel style={yshift={-.75em}},
    xticklabels from table={figures/data/results_expComp.csv}{Copies},
    legend entries={$|V|+|E|$, LM, \ktree, HN, \graphrepair},
    legend style={at={(0.01,0.95)}, draw=none, anchor=north west},
    ymajorgrids=true,
    xticklabel style={text height=1.5ex,yshift={-2pt}},
    xtick=data,
    x tick label style={rotate=0,anchor=north,yshift={-mod(\ticknum,2)*1em}},
    width=0.45\textwidth,
    height=60mm,
    legend cell align=left,
    legend columns=1,
    enlargelimits=false,
    mark options={mark size=1.5pt}
]
\addplot [line width=1pt,color=black, mark=square*, on layer=axis foreground] table[x expr=\coordindex,y=VE]{\resultsExpComp};
\addplot [line width=1pt,color=color3, mark=triangle*] table[x
expr=\coordindex,y=LM]{\resultsExpComp};
\addplot [line width=1pt,color=color4, mark=triangle*, every mark/.append style={rotate=270}]
table[x expr=\coordindex,y=kTree]{\resultsExpComp};
\addplot [line width=1pt,color=color2, mark=triangle*, every mark/.append style={rotate=180}]
table[x expr=\coordindex,y=HN]{\resultsExpComp};
\addplot [line width=1pt,color=color1, mark=diamond*] table[x
expr=\coordindex,y=gRePair]{\resultsExpComp};
\end{semilogyaxis}
\end{tikzpicture}\label{fig:smallSquareGraphMultiply}}\qquad
    \subfloat[Using different node orders for the compression of a version graph.
    ]{\pgfplotstableread[col sep=comma]{figures/data/60sprogression.csv}\resultsProgression
\begin{tikzpicture}

\graphFontSize
\begin{axis}[
    ymin=0,
    ymax=50,
    ylabel=File size (in KB),
    ylabel style={yshift={-1.5em}},
    xlabel=Graph,
    xlabel style={yshift={-.75em}},
    xticklabels from table={figures/data/60sprogression.csv}{Graph},
    legend entries={$|V|+|E|$, \graphrepair BFS, File \ktree, \graphrepair none,\graphrepair FP},
    legend style={at={(0.01,0.95)}, draw=none, anchor=north west},
    ytick={0,10,20,30,40,50},
    ymajorgrids=true,
    xticklabel style={text height=1.5ex,yshift={-2pt}},
    xtick=data,
    x tick label style={rotate=0,anchor=north,yshift={-mod(\ticknum,2)*1em}},
    width=0.45\textwidth,
    height=60mm,
    legend cell align=left,
    legend columns=1,
    enlargelimits=false,
    mark options={mark size=1.5pt}
]
\addplot [line width=1pt,color=black, mark=square*] table[x expr=\coordindex,y=VE]{\resultsProgression};
\addplot [line width=1pt,color=color3, mark=triangle*] table[x expr=\coordindex,y=gRePairBFS]{\resultsProgression};
\addplot [line width=1pt,color=color4, mark=triangle*, every mark/.append style={rotate=270}] table[x expr=\coordindex,y=kTree]{\resultsProgression};
\addplot [line width=1pt,color=color2, mark=triangle*, every mark/.append style={rotate=180}] table[x expr=\coordindex,y=gRePairNone]{\resultsProgression};
\addplot [line width=1pt,color=color1, mark=diamond*] table[x expr=\coordindex,y=gRePairFP]{\resultsProgression};
\end{axis}
\end{tikzpicture}\label{fig:dblp60s_orderComparison}}
    \caption{Results on version graphs}
\end{figure}
\subsection{Results on Synthetic Graphs}
We finally describe experiments on synthetic graphs. These show some of the effects described
earlier, namely that
\begin{enumerate}
    \item some graphs can be compressed exponentially,
    \item the maximal rank can have a big impact, and
    \item different node orders can have a big influence on the compression.
\end{enumerate}
We evaluate two different families of graphs: ``grid'' and ``triangle fractal''. Both have a
parameter $n \in \N$ to achieve different sizes. Let $\grid_n$ be an $n\times 2^n$-grid graph, i.e.,
a graph with $n\cdot 2^n$ nodes where node $i$ has edges to $i+1$ (unless $i \equiv 0 \mod{2^n}$),
and to $i+2^n$ (unless $i+2^n > n\cdot 2^n$). Following this construction, $\grid_n$ has
$n\cdot2^{n}$ nodes, and $(n-1)\cdot 2^n + n\cdot 2^{n-1}$ edges, yielding a size of $|\grid_n| =
n(2^n + 2^{n-1})+(n-1)\cdot 2^n$.
Furthermore we define a triangle fractal in the following way: initially $\tf_1$ is a complete graph
with $3$ nodes. To define $\tf_i$ with $i>1$, we start with $\tf_{i-1}$ and let $X$ be the set of
edges in $\tf_{i-1}$ that are incident to a node of degree $2$. For each edge $e$ in $X$ we add a new
node $v$ and add edges from both ends of $e$ to $v$, creating another triangle. Intuitively, we add
another triangle at every outer edge. Regarding the size, $t_n$ has $|t_n|_V = 2^n+2^{n-1}$ nodes,
and $|t_n|_E = 2|t_n|_V - 3$ edges. The natural order (i.e., the order of the node-IDs) is a
top-to-bottom, left-to-right order for $\grid_n$, while the natural order for $\tf_n$ is to start
with the three nodes of the innermost triangle and then go outward from there in the same way as
generated (i.e., in ``layers'').
\begin{figure}[!t]
    \centering
    \scalebox{.8}{
    \begin{tikzpicture}
        \begin{scope}[every node/.style={circle, draw, inner sep=0pt, outer sep=0pt, minimum size=7pt,
            node distance=20pt}]
            \node[fill=l1] (11) {};
            \node[fill=l1,below left=1 and 0.5 of 11] (12) {};
            \node[fill=l1,below right=1 and 0.5 of 11] (13) {};

            \node[fill=l1,above right=0.5 and 3 of 11] (21) {};
            \node[fill=l1,below left=1 and 0.5 of 21] (22) {};
            \node[fill=l1,below right=1 and 0.5 of 21] (23) {};
            \node[fill=l2,above left=1 and 0.5 of 22] (24) {};
            \node[fill=l2,above right=1 and 0.5 of 23] (25) {};
            \node[fill=l2,below left=1 and 0.5 of 23] (26) {};

            \node[fill=l1,right=4 of 21] (31) {};
            \node[fill=l1,below left=1 and 0.5 of 31] (32) {};
            \node[fill=l1,below right=1 and 0.5 of 31] (33) {};
            \node[fill=l2,above left=1 and 0.5 of 32] (34) {};
            \node[fill=l2,above right=1 and 0.5 of 33] (35) {};
            \node[fill=l2,below left=1 and 0.5 of 33] (36) {};
            \node[fill=l3,above right=1 and 0.5 of 34] (37) {};
            \node[fill=l3,above right=1 and 0.5 of 31] (38) {};
            \node[fill=l3,below right=1 and 0.5 of 35] (39) {};
            \node[fill=l3,below right=1 and 0.5 of 33] (310) {};
            \node[fill=l3,below left=1 and 0.5 of 32] (311) {};
            \node[fill=l3,below left=1 and 0.5 of 34] (312) {};
        \end{scope}

        \begin{scope}[every path/.style={-stealth'}]
            \draw (11) -- (12);
            \draw (12) -- (13);
            \draw (13) -- (11);

            \draw (21) -- (22);
            \draw (22) -- (23);
            \draw (23) -- (21);
            \draw (22) -- (24);
            \draw (24) -- (21);
            \draw (21) -- (25);
            \draw (25) -- (23);
            \draw (23) -- (26);
            \draw (26) -- (22);

            \draw (31) -- (32);
            \draw (32) -- (33);
            \draw (33) -- (31);
            \draw (32) -- (34);
            \draw (34) -- (31);
            \draw (31) -- (35);
            \draw (35) -- (33);
            \draw (33) -- (36);
            \draw (36) -- (32);
            \draw (31) -- (37);
            \draw (37) -- (34);
            \draw (35) -- (38);
            \draw (38) -- (31);
            \draw (39) -- (35);
            \draw (33) -- (39);
            \draw (310) -- (33);
            \draw (36) -- (310);
            \draw (311) -- (36);
            \draw (32) -- (311);
            \draw (312) -- (32);
            \draw (34) -- (312);
        \end{scope}
    \end{tikzpicture}
}
    \includegraphics[width=0.35\textwidth]{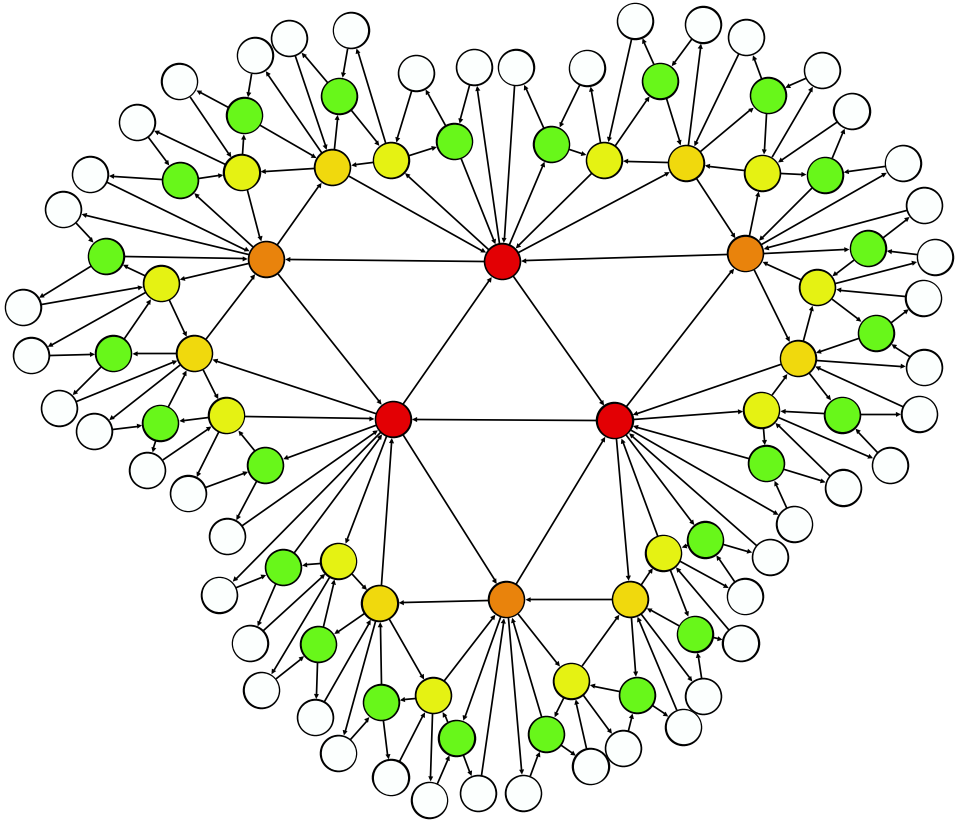}
    \caption{The $\tf_1$, $\tf_2$, $\tf_3$, and $\tf_6$ triangle fractals. The node colors in
    indicate the layer of the node: red nodes are $\tf_1$, dark orange is added in $\tf_2$,
    and every step towards green means the nodes are added in the next iteration, with white being
    the nodes added in $\tf_6$ from $\tf_5$.}
    \label{fig:triangleFractal}
\end{figure}

Table~\ref{tab:synthetic_results} lists the results for these graphs using various orders, graph
sizes, and values of maxRank. Here we state the compression performance in relation to the size
definition, i.e., compression ratio is given by $\frac{|G|}{|g|}$ (in percent), where $G$ is the
compressed representation of $g$.
\begin{table}
    \centering
    \tblFontSize
    \begin{tabularx}{\textwidth}{llRRRcRRR}\toprule
                 & & \multicolumn{3}{c}{Triangle fractal} &\phantom{ab}& \multicolumn{3}{c}{Grid} \\
              \cmidrule{3-5}\cmidrule{7-9}
        Order & MaxRank & $n = 4$    & $n = 8$   & $n = 12$  && $n = 4$    & $n = 8$   & $n = 12$
        \\\midrule
FP    &
                2        &  36.23\% &  4.61\% &  0.44\% &&  98.26\% & 99.95\% & 100.00\% \\
&                4        &  46.38\% &  5.40\% &  0.50\% && 100.00\% & 95.95\% &  97.15\% \\
&                15       &  85.51\% & 23.59\% &  9.43\% && 100.00\% & 72.31\% &  46.38\% \\
&                $\infty$ &  85.51\% & 24.80\% &  6.23\% && 100.00\% & 70.00\% &  33.99\% \\
FP0   &
                2        &  36.23\% &  4.61\% &  0.44\% &&  98.26\% & 99.95\% & 100.00\% \\
&                4        &  46.38\% & 16.54\% &  5.70\% && 100.00\% & 92.23\% &  93.37\% \\
&                15       &  46.38\% & 61.10\% &  8.99\% && 100.00\% & 99.93\% &  96.99\% \\
&                $\infty$ &  46.38\% & 61.10\% &  8.99\% && 100.00\% & 99.93\% &  96.99\% \\
Nat   & 
                2        &  39.13\% &  4.79\% &  0.45\% &&  98.26\% & 99.95\% & 100.00\% \\
&                4        & 100.00\% & 82.77\% & 80.69\% &&  99.42\% & 95.88\% &  97.15\% \\
&                15       &  95.65\% & 28.72\% &  5.19\% &&  94.77\% & 62.96\% &  75.48\% \\
&                $\infty$ &  95.65\% & 28.72\% &  4.51\% &&  94.77\% & 13.13\% &   1.23\% \\
BFS   & 
                2        &  60.87\% & 18.28\% &  7.49\% &&  98.26\% & 99.95\% & 100.00\% \\
&                4        &  81.16\% & 54.22\% & 56.68\% && 100.00\% & 95.88\% &  97.15\% \\
&                15       &  81.16\% & 57.18\% & 72.33\% && 100.00\% & 47.81\% &  37.89\% \\
&                $\infty$ &  81.16\% & 52.13\% & 71.11\% && 100.00\% & 63.74\% &  (19.85\%) \\\bottomrule
    \end{tabularx}
    \caption{Results on triangle fractal- and grid-graphs for various values of $n$, four orders,
    and 4 different values of maxRank.}
    \label{tab:synthetic_results}
\end{table}
Some notable results: triangle fractal compresses best at $\text{maxRank}=2$. This is not
surprising, as in this case the construction of the graph is exactly reversed (with respect to the
recursive definition of $\tf_n$). However, note that at $\text{maxRank}=4$ the results are still
very close using the FP-order, whereas they get noticeably worse for every other order. Indeed, with
the BFS-order even for $\text{maxRank}=2$ the optimal result is not found any more. Using a higher
rank than 4 is generally detrimental for this graph, because digrams of higher rank occur more often
than the ones of rank 2. This is not a problem in itself, as carefully chosen occurrences can still
be reduced well, but there are many more sets of occurrences for rank-3 digrams, than there are for
rank-2 digrams. Which one is found by \graphrepair depends on the node order. This graph family
shows that
\begin{enumerate}
    \item maxRank can have a huge impact (compare 2 vs. 4 for the natural order), and
    \item the node order can have a huge impact (compare FP vs. BFS order).
\end{enumerate}

Grids are the opposite, when it comes to maxRank:
best compression is achieved for unbounded rank, while $\text{maxRank}=2$ gives almost no
compression (independent of the node order). For unbounded rank, $\FP0$ gives close to no
compression, \FP compresses to $34\%$, and natural order compresses to $12\%$ (for $n=12$). The
latter ``exponential compression'' is achieved,
%
%
by treating the grid as $n$ identical lines of length $2^n$. Node orders that traverse the grid in
such a way (like the natural order, which goes row by row, or the BFS-order which follows the
rows ``in parallel''), are likely to find digram occurrences corresponding to this structure.
Accordingly, the BFS-order also achieves good compression, in particular, it has the best result 
for $\text{maxRank}=15$ with $n=8$ or $n=12$. Note however, that for $\grid_{12}$ with BFS-order and
unbounded maxRank the rank was actually bounded to $1500$, because the computation takes too long
otherwise.

\section{Query Evaluation}

In this section we investigate two types of queries
that can be performed over SL-HR grammars: 
neighborhood queries and speed-up queries.
\emph{Neighborhood queries} allow to traverse the edges of a graph (in any direction).
Using them, any arbitrary graph algorithm can be performed on the compressed
representation given by an SL-HR grammar.
However, this comes at a price: a considerable slow-down is to be expected
in comparison to running over an uncompressed graph representation, because a partial decompression
is required in order to obtain the neighboring nodes.
In contrast, \emph{speed-up queries}, as their name suggests, can run
faster on an SL-HR grammar than on an uncompressed graph representation.
Examples of speed-up queries are counting the number of connected components
of the graph, checking regular path properties in the graph, or checking
reachability between two nodes.
These queries can be evaluated in polynomial time over the grammar, 
and allow speed-ups proportional to the compression ratio.
The results in this section have not been implemented.
Over grammar-compressed trees, the performance of simple speed-up
queries is evaluated in~\cite{DBLP:journals/corr/abs-1012-5696}.

\subsection{Neighborhood Queries}\label{sss:neighborhood}
For a node $v \in V_g$ of a hypergraph $g$
we denote by $\neighbor(v) = \{u \in V_g \mid \exists e \in E_g: u, v \in \att_g(e)\}$ the
\emph{neighborhood of $v$}. For simple graphs we also define $\inNeighbor(v) = \{u \in V_g \mid
\exists e \in E_g: \att(e) = uv\}$ and $\outNeighbor(v) = \{u \in V_g \mid \exists e \in E_g:
\att(e) = vu\}$, the \emph{incoming} and \emph{outgoing neighborhoods of $v$}, respectively.
Furthermore we let $\incident(v) = \{e \in E_g \mid v \in \att(e)\}$ be the set of edges incident
with $v$.

Let $G=(N,P,S)$ be an SL-HR grammar.
We assume that every right-hand side in $P$ contains at
most two nonterminal edges (but arbitrarily many terminal edges). This can be achieved by replacing
pairs of nonterminal edges with a single new one: for $n>2$ let $e_1,\ldots,e_n$ be the nonterminal
edges in $\rhs(A)$ for some $A \in N$. Then replace $e_2,\ldots,e_n$ by a new edge $e_B$ with
$\att(e_B) = \att(e_2)\cdots\att(e_n)$, $\lbl(e_B) = B \notin N$ and $\rhs(e_B)$ is a graph
containing $n-1$ nonterminal edges such that replacing $e_B$ generates the original $\rhs(A)$. This
procedure can recursively be applied until every right-hand side has at most two nonterminal edges.
This increases the size of the grammar by at most a factor of 2.

Recall that the nodes in the start graph $S$ are numbered $1,..,m$
and that there is an order on the nonterminal edges $e_1,\dots,e_\ell$ in $S$
so that the nodes in $g_1=\val(e_1)$ are numbered $m+1,\dots,m+v_1$, where
$v_1=|g_1|_V$, and similarly, nodes in $g_i=\val(e_i)$ are numbered 
from $k=m+\sum_{j=1}^{i-1}v_j$ to $k+v_i$.
Given a node ID, i.e., a number in $k\in\{1,\dots,|\val(G)|_V\}$, computing
its outgoing neighbors consists of two steps:
\begin{enumerate}
    \item compute a \emph{grammar representation} (\emph{$G$-representation})
        of $k$, and
    \item from a $G$-representation, compute the outgoing neighbors (as IDs in the decompressed
        graph) of the represented node.
\end{enumerate}
A $G$-representation is a path in the derivation tree
of $G$ that ``derives the node $k$''. Such a path is of the form
$wv$ where $w$ is a (possibly empty) string of the form
$e_0e_1\cdots e_n$. If $w$ is empty, then $v$ must be a node in $S$. 
If not, then $e_0$ is a nonterminal edge of $S$.
If $A_1$ is the label of $e_0$, then $e_1$ is a nonterminal edge
in $\rhs(A_1)$ labeled $A_2$, etc. Finally, $v$ is an internal node
in $\rhs(A_n)$.
Let $\ell$ be the number of nonterminal edges in $S$
and $h = \height(G)$.
The $G$-representation of $k$ can be computed using Algorithm~\ref{alg:gRep}, where
$\#\mathsf{nodes}(A)$ refers to the number of internal nodes in $\val(A)$. The complexity of this
algorithm depends on the specific implementation of Steps~\ref{step:Ssearch} and~\ref{step:firstID},
and the computation of $\#\mathsf{nodes}(A)$. Specifically, it is possible to precompute
$\#\mathsf{nodes}(A)$ for every $A \in N$ and the first ID of every edge in $S$
(Step~\ref{step:firstID}). Doing so needs an additional $O(|N| + \ell)$ space, but now
Step~\ref{step:Ssearch} can be implemented using a binary search and only needs $O(\log\ell)$
time. Ignoring the preprocessing, a computation of Algorithm~\ref{alg:gRep} takes $O(\log(\ell)+h)$
time and $O(|N|+|E_S|)$ space. If we do not want to use the additional space, a computation of
$\#\mathsf{nodes}(A)$ takes $O(h)$ time (it can be done bottom-up). Steps~\ref{step:Ssearch}
and~\ref{step:firstID} can now be implemented using a linear search through the nonterminal edges of
$S$ in their derivation order as defined in Section~\ref{sss:stringTreeDef}. This leads to a total
running time of $O(|E_S|h+h^2)$. In the following we refer to the runtime of
Algorithm~\ref{alg:gRep} by $\gRepRun$ to reflect this ambiguity.
\begin{algorithm}[!t]
    \begin{algorithmic}[1]
        \Function{getGRep}{$v$: node-ID (of $\val(G)$)}
            \If{$v < |V_S|$}
                \State \textbf{return} $v$
            \Else
                \State $e \gets $ nonterminal edge within $S$ producing $v$ \label{step:Ssearch}
                \State $i \gets $ first ID assigned to nodes created from deriving $e$
                \label{step:firstID}
                \State \textbf{return} getGRep$(v,e,i)$
            \EndIf
        \EndFunction
        \Function{GetGRep}{$v$: node-ID, $e$: edge, $i$: Integer}
            \State $\mathit{int} \gets $ number of internal nodes in $\rhs(\lbl(e))$
            \If{$i + \mathit{int} \geq v$}
                \State $n \gets v-i$ 
                \State \textbf{return} $n$-th internal node of $\rhs(\lbl(e))$
            \Else
                \State $e_1,e_2 \gets $ nonterminal edges of $\rhs(\lbl(e))$ (if exists)
                \If{$i + \mathit{int} + \#\mathit{nodes}(\lbl(e_1)) \geq v$}
                    \State \textbf{return} $e_1 \cdot$ getGRep$(v, e_1, id+\mathit{int})$
                \Else
                    \State \textbf{return} $e_2 \cdot$ getGRep$(v, e_2, id+\mathit{int} +
                    \#\mathit{nodes}(\lbl(e_1)))$
                \EndIf
            \EndIf
        \EndFunction
    \end{algorithmic}
    \caption{Routine to compute a $G$-representation for a given node-ID $v$.}
    \label{alg:gRep}
\end{algorithm}

Given the $G$-representation $e_0e_1\cdots e_n v$, the outgoing neighbors are computed using
Algorithm~\ref{alg:getNeigh}. This algorithm uses the function getID, detailed in
Algorithm~\ref{alg:getID}. The complexity of the latter again depends on the implementation of
Step~\ref{step:comFirst} and $\#\mathsf{nodes}(A)$. By the same reasoning as above we get a runtime
of $\gRepRun$ for getID. Finally getOutNeighborhood uses $O(\gRepRun\cdot n)$ time, where $n$ is the
out-degree of the node represented by $e_0e_1\cdots e_n v$ within $\val(G)$. Note that for the
algorithms to be well defined, they have to work with $e_1\cdots e_n = \varepsilon$. In this case,
we implicitly assume $e_n = S$ and $\rhs(\lbl(e_n)) = S$, i.e., we are working on the startgraph.
%
\begin{algorithm}[!t]
    \begin{algorithmic}[1]
        \Function{getID}{$e_1\cdots e_n v$: $G$-representation}
            \If{$v$ is external}
                \State $p \gets \varepsilon$
                \State $u \gets v$
                \For{$i \gets n, \ldots, 2$}
                    \State $j \gets $ position of $u$ in $\ext_{\rhs(\lbl(e_i))}$
                    \If{$\att(e_{i-1}[j])$ is internal}
                        \State $v' \gets \att(e_{i-1})[j]$
                        \State $n' \gets i-1$
                        \State $p \gets e_1\cdots e_{n'} v'$
                        \State \textbf{break} for loop
                    \EndIf
                \EndFor
            \EndIf
            \State $\mathit{id} \gets$ compute first ID created when deriving
            $e_1$\label{step:comFirst}
            \For{$i \gets 1,\ldots, n'-1$}
                \State $\mathit{id} \gets \mathit{id} + $ number of internal nodes in $\rhs(e_i)$
                \State $e' \gets $ nonterminal edge in $\rhs(\lbl(e_i))$ not equal to $e_i$ (if
                exists)
                \If{$e' < e_{i+1}$ (in derivation order)}
                    \State $\mathit{id} \gets \mathit{id}+\#\mathsf{nodes}(\lbl(e'))$
                \EndIf
            \EndFor
            \State $k \gets $ position of $v'$ within the internal nodes of $\rhs(\lbl(e_{n'}))$
            \State \textbf{return} $\mathit{id} + k$
        \EndFunction
    \end{algorithmic}
    \caption{Computes the ID of a given $G$-representation within $\val(G)$.}
    \label{alg:getID}
\end{algorithm}

\begin{algorithm}[!t]
    \begin{algorithmic}[1]
        \Function{getOutNeighborhood}{$e_1\cdots e_n v$: $G$-representation}
            \State $n \gets \emptyset$
            \If{$v$ is external}
                \State $n \gets $getNeighborhood(getGRep(getID($e_1\cdots e_n v$)))
            \Else
                \State $D \gets \{u \mid \exists e\in E_{\rhs(\lbl(e_n))}$ with $\att(e) = vu$ and
                        $e$ is terminal$\}$
                \State $n \gets n \cup \{$getID$(e_1\cdots e_n u) \mid u \in D\}$
                \ForAll{nonterminal edge $e$ attached to $v$ in $\rhs(\lbl(e_n))$}
                    \State $i \gets $ position of $v$ in $\att(e)$
                    \State $n \gets n \cup \text{getNeighbors}(\lbl(e),i)$
                \EndFor
            \EndIf
            \State \textbf{return} $n$
        \EndFunction
        \Function{getOutNeighbors}{$A$: nonterminal, $i$: Integer $(1 \leq i \leq \rank(A))$}
            \State $n \gets \emptyset$
            \State $v \gets i$-th external node of $\rhs(A)$
            \State $D \gets \{u \mid \exists e\in E_{\rhs(\lbl(e_n))}$ with $\att(e) = vu$ and
                    $e$ is terminal$\}$
            \State $n \gets n \cup \{$getID$(e_1\cdots e_n u) \mid u \in D\}$
            \ForAll{nonterminal edge $e$ attached to $v$ in $\rhs(\lbl(e_n))$}
                \State $i \gets $ position of $v$ in $\att(e)$
                \State $n \gets n \cup \text{getNeighbors}(\lbl(e),i)$
            \EndFor
            \State \textbf{return} $n$
        \EndFunction
    \end{algorithmic}
    \caption{Function getOutNeighborhood, which given a $G$-representation $p$ returns the set of IDs
    that are out-neighbors of the node represented by $p$ within $\val(G)$.}
    \label{alg:getNeigh}
\end{algorithm}

\begin{proposition}
Let $G$ be an SL-HR grammar and $k\in\{1,\dots,|\val(G)|_V\}$.
Let $n$ be the number of in (or out) neighbors of $k$ in $\val(G)$.
The node IDs of these $n$ nodes can be computed in time
$O(\gRepRun n)$.
\end{proposition}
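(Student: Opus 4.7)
The plan is to establish correctness and complexity of the three subroutines \textsc{getGRep}, \textsc{getID}, and \textsc{getOutNeighborhood} (Algorithms~\ref{alg:gRep}--\ref{alg:getNeigh}); the proposition then follows by computing \textsc{getOutNeighborhood}(\textsc{getGRep}$(k)$), and the in-neighbor case is symmetric (swap source and target roles in each right-hand side). I would first observe that the node-ID numbering of $\val(G)$ fixed in Section~\ref{sss:nodeIDs} is a pre-order block numbering along $\dt(G)$: IDs $1,\dots,|V_S|$ lie in the start graph, and for each nonterminal edge $e$ of $S$ in sibling order the internal nodes of $\val(e)$ form a contiguous block of size $\#\mathsf{nodes}(\lbl(e))$, repeating recursively. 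Given this, correctness of \textsc{getGRep} follows by induction on $\height(G)$: one identifies the unique edge in the current right-hand side whose block contains $k$ and recurses. The runtime is $O(\height(G))$ iterations, each dominated by a $\#\mathsf{nodes}$ lookup and the search for the first ID of an edge; this matches $\gRepRun$ under either implementation choice discussed in the text.

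Next I would show that \textsc{getID} inverts \textsc{getGRep} along the same numbering. If the terminal node $v$ of the input $G$-representation is internal in $\rhs(\lbl(e_n))$, we sum the internal-node counts for the rules on the path, the contribution of earlier siblings in derivation order, and the position of $v$ among the internal nodes of $\rhs(\lbl(e_n))$. If $v$ is external, the algorithm walks up the path until it finds the nearest ancestor whose corresponding attachment node is internal; this ancestor exists because every node of $\val(G)$ arises as an internal node of exactly one rule (possibly $S$). In either case the computation is a single pass of length at most $\height(G)$, again $\gRepRun$.

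Correctness of \textsc{getOutNeighborhood} then follows by induction along $\SLOrder$. If $v$ is internal in $\rhs(\lbl(e_n))$, its out-neighbors in $\val(G)$ split into: (i) targets of terminal out-edges of $v$ already in $\rhs(\lbl(e_n))$, whose IDs are obtained by \textsc{getID}; and (ii) targets contributed by expansion of nonterminal edges attached to $v$, handled recursively by \textsc{getOutNeighbors}($\lbl(e),i$), which treats the $i$-th external node of $\rhs(\lbl(e))$ as the new tracked node. If $v$ is external, the algorithm computes its ID, re-derives the canonical $G$-representation (in which $v$ is internal to the deepest rule of the path, the normal form returned by \textsc{getGRep}), and recurses once; correctness then reduces to the previous case.

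The main obstacle is the complexity accounting. The recursion tree rooted at the initial call to \textsc{getOutNeighborhood} has at most $n$ leaves that contribute a neighbor, and each discovered terminal edge corresponds to exactly one such leaf. Because every right-hand side contains at most two nonterminal edges (which we may assume at a factor-2 cost, as noted in Section~\ref{sss:neighborhood}), the branching factor is bounded, and the depth is at most $\height(G)$, so the total number of calls is $O(n\cdot\height(G))$. Each internal call performs $O(1)$ work, while each discovered neighbor triggers one \textsc{getID} call costing $\gRepRun$. Since $\gRepRun \geq \height(G)$ under both implementations of Algorithm~\ref{alg:gRep}, the bookkeeping cost $O(n\cdot\height(G))$ is absorbed into $O(n\,\gRepRun)$, yielding the claimed runtime $O(\gRepRun\, n)$.
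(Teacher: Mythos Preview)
Your correctness arguments for \textsc{getGRep}, \textsc{getID}, and \textsc{getOutNeighborhood} are sound and mirror the paper's own (entirely informal) treatment: the paper simply asserts the bound after describing the algorithms and gives no further proof. The gap is in your complexity accounting for \textsc{getOutNeighborhood}. From ``at most $n$ leaves that contribute a neighbor,'' branching factor $\leq 2$, and depth $\leq\height(G)$ you conclude $O(n\cdot\height(G))$ calls, but this does not follow: those three facts alone bound the tree size only by $O(2^{\height(G)})$, and nothing in your argument controls the number of \emph{non-contributing} leaves. Concretely, take nonterminals $A_1,\dots,A_h$ of rank~$1$ where $\rhs(A_i)$ has its unique external node attached to two $A_{i+1}$-edges and nothing else, and $\rhs(A_h)$ has a single terminal edge \emph{into} its external node. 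Then \textsc{getOutNeighbors}$(A_1,1)$ spawns $2^{h}-1$ recursive calls yet returns the empty set; embedding one $A_1$-edge at a start-graph node that also has a single terminal out-edge gives $n=1$ while the algorithm still performs $\Theta(2^{h})$ work.

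To repair the bound you must prune such dead branches. A clean fix is an $O(|G|)$ bottom-up precomputation that records, for each pair $(A,i)$, whether the $i$-th external position of $\val(A)$ carries any terminal out-edge; skip the recursive call into $(\lbl(e),i)$ whenever this bit is~$0$. After pruning, every leaf of the recursion tree contributes at least one out-neighbor, so the (binary) tree has at most $n$ leaves and hence $O(n)$ nodes in total; each node triggers at most a constant number of \textsc{getID} calls at cost $\gRepRun$, giving $O(\gRepRun\, n)$ as claimed. The paper does not spell this out either, so the point is not that your route diverges from the paper's but that the stated bound needs this additional ingredient to be justified.
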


Note that for string grammars, data structures have been presented that guarantee
\emph{constant time} per move from one letter to the next (or previous)~\cite{DBLP:conf/dcc/GasieniecKPS05}.
This result has been extended to grammar-compressed trees~\cite{lmr16}, and we hope that
it can also be generalized to SL-HR grammar-compressed graphs. 

\subsubsection{Speed-Up Queries}
One attractive feature of straight-line context-free grammars
is the ability to execute finite automata over them without prior decompression.
This was first proved for strings (see~\cite{Lohrey12_stringSLPsurvey}) and was later extended to
trees (and various models of tree automata, see~\cite{DBLP:journals/jcss/LohreyMS12}).
The idea is to run the automaton in one pass, bottom-up, through the grammar.
As an example, consider the grammar
$S\to AAA$ and $A\to ab$ from the Introduction, and an automaton $\mathcal{A}$ that
accepts strings (over $\{a,b\}$) with an odd number of $a$'s.
Thus, $\mathcal{A}$ has states $q_0,q_1$ (where $q_0$ is initial and $q_1$ is final) and the transitions
$(q_0,a,q_1)$, $(q_1,a,q_0)$, and $(q,b,q)$ for $q\in\{q_0,q_1\}$.
Since the actual active states are not known during the bottom-up run
through the grammar, we need to run the automaton \emph{in every possible state}
over a rule.
For the nonterminal $A$ we obtain $(q_0,A,q_1)$ and
$(q_1,A,q_0)$, i.e., running in state $q_0$ over the string produced by $A$
brings us to state $q_1$, and starting in $q_1$ brings us to $q_0$.
Since $S$ is the start nonterminal, we are only interested in starting
the automaton in its initial state $q_0$.
We obtain the run $(q_0,A,q_1)(q_1,A,q_0)(q_0,A,q_1)$, i.e., the automaton
arrives in its final state $q_1$ and hence the grammar represents a string
with odd number of $a$'s.
It should be clear that the running time of this process
is $O(|Q||G|)$, where $Q$ is the set of states of the automaton,
and $G$ is the grammar.

Unfortunately, for graphs there does not exists an accepted notion
of finite-state automaton.
Nevertheless, properties that can be checked in one pass through
the derivation tree of a graph grammar have been studied
under various names: ``compatible'', ``finite'', and ``inductive'',
and it was later shown that these notions are essentially equivalent~\cite{DBLP:journals/tcs/HabelKL93}.
Courcelle and Mosbah~\cite{DBLP:journals/tcs/CourcelleM93} show that
all properties definable in ``counting monadic second-order logic'' (CMSO) 
belong to this class, and by their Proposition~3.1,
the complexity of evaluating a CMSO property $\psi$ over a derivation tree $t$ can
be done in $O(|t|\eta)$, where $\eta$ is an upper bound on the complexity of
evaluation on each right-hand side of the rules in $t$. This is done by a bottom-up computation on
the nodes of $t$, where every node $u$ can use the results of its children to achieve a correct
computation for the subtree rooted at $u$.
%
\begin{proposition}\label{prop:mso}
Let $\psi$ be a fixed CMSO property.
For a given SL-HR grammar $G$ it can be decided in
$O(|G|\eta)$ time whether or not $\psi$ holds on
$\val(G)$, where $\eta$ is an upper bound on the time needed to evaluate one right-hand side of the
grammar.
\end{proposition}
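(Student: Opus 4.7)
The plan is to reduce to the well-known inductiveness of CMSO on hyperedge replacement, made precise by Courcelle and Mosbah~\cite{DBLP:journals/tcs/CourcelleM93}. For a fixed CMSO formula $\psi$ there is a finite set $T_\psi$ of \emph{types} and, for each right-hand side $g$, a computable function $\tau_g$ that takes a tuple of types (one for each nonterminal edge of $g$) and returns a type, such that the type of $\val(G)$ can be obtained by evaluating these functions bottom-up along $\dt(G)$, and such that $\psi(\val(G))$ depends only on this final type. Crucially, evaluating $\tau_g$ with the child types already in hand amounts to a single CMSO evaluation over $g$ with the child types acting as finite decorations on the nonterminal edges, so it costs $O(\eta)$ by hypothesis.

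First I would process the nonterminals of $G$ in reverse $\SLOrder$ order (i.e., bottom-up in the rule dependency DAG). For each $A \in N$ in this order, every nonterminal $B$ occurring in $\rhs(A)$ has already been handled and so its type $t_B = \tau(\val(B)) \in T_\psi$ is available; I would then apply $\tau_{\rhs(A)}$ to these types to obtain $t_A$. Once all of $N$ has been processed, one more application of $\tau_S$ yields the type of $\val(G)$, from which membership in $\psi$ is read off. The decisive point is that in an SL-HR grammar every nonterminal has a unique rule (Definition~\ref{def:SLHR}), so every subtree of $\dt(G)$ rooted at an $A$-node has the same type $t_A$; we therefore need not unfold the (possibly exponentially large) derivation tree but can compute each $t_A$ exactly once, collapsing the bottom-up evaluation from $\dt(G)$ to the grammar's DAG. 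Since $|N| \leq |G|$ and each of the $|N|+1$ type computations costs $O(\eta)$, the total running time is $O(|G|\eta)$.

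The main obstacle is justifying that one step $\tau_{\rhs(A)}$ really does fit in $O(\eta)$ time once the child types are known. This is exactly the content of~\cite[Proposition 3.1]{DBLP:journals/tcs/CourcelleM93}: having fixed types on the nonterminal edges, the composite type is computed by a CMSO evaluation over $\rhs(A)$ alone — no further unfolding is required — so the per-rule cost inherits the assumed $\eta$-bound on evaluating a single right-hand side. Combining this with the bottom-up schedule above gives the claimed $O(|G|\eta)$ complexity.
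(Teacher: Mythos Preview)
Your proposal is correct and follows essentially the same approach as the paper: both invoke Courcelle and Mosbah's inductiveness result (their Proposition~3.1) and then observe that, because an SL-HR grammar has a unique rule per nonterminal, the bottom-up evaluation can be carried out once per nonterminal on the derivation DAG rather than on the (potentially exponential) derivation tree. Your write-up is in fact more explicit about the type machinery and the per-rule cost than the paper's own justification, but the underlying argument is the same.
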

Note that we state the time complexity based on $|G|$ instead of $|t|$. This needs an adjustment of
the proof, as the derivation tree $t$ cannot be explicitly constructed, because it may be
exponentially larger than $G$. However, a derivation DAG (directed acyclic graph) still retains all
information necessary for the algorithm stated in~\cite{DBLP:journals/tcs/CourcelleM93} to work. It
has a hierarchy, that makes the bottom-up computation possible, and the while there is ambiguity in
the parent relationship, the children for every node are well defined. A computation for one
right-hand side depending on the results of that computation on the children in the derivation tree,
can therefore be done in the derivation DAG as well.

Note that Proposition~\ref{prop:mso} is often stated under a fixed
tree decomposition $t$ of width $k$ of the graph $g$ and then simply 
becomes $O(|g|)$. 
The CMSO (or compatible or finite) graph properties have been extended to
functions from graphs to natural numbers, see e.g., Section~5 
of~\cite{DBLP:journals/tcs/CourcelleM93}.
They can be evaluated with a similar complexity as in Proposition~\ref{prop:mso}.
For the same explanation as above, this result can be applied to SL-HR grammars.
Without stating this result explicitly, we mention some of
the well-known CMSO functions:
(1)~maximal and minimal degree,
(2)~number of connected components,
(3)~number of simple cycles,
(4)~number of simple paths from a source to a target, and
(5)~maximal and minimal length of a simple cycle.

Beware that $\eta$ in Proposition~\ref{prop:mso} need not be linear in the size of a right-hand
side, but is rather a generic upper bound. Courcelle and
Mosbah~\cite{DBLP:journals/tcs/CourcelleM93} show (for their Proposition 3.1) linearity for
evaluations using a certain Boolean algebra, and cubic complexity for sets of cardinalities. For
universal evaluation they give an exponential upper bound. Lohrey~\cite{DBLP:journals/jcss/Lohrey12}
proves that the problem in Proposition~\ref{prop:mso} becomes \PSPACE-complete for grammars where both
the rank and the number of nonterminals per right-hand side are bounded by a constant. Without these
restrictions the problem is complete for \EXPTIME. Note that \PSPACE-completeness is already true
for explicitly represented graphs.

The specific complexity thus depends on the problem. We show for two problems that they can be
solved in linear time with grammar-compressed graphs as input.
\subsection{Reachability Queries}\label{sect:reachability}

An important class of queries are \emph{reachability queries}.
For a given graph $g$ and nodes $u$ and $v$ such a query asks if 
$v$ is reachable from $u$, i.e.,
if there exists a path from $u$ to $v$ in $g$.
It is well known that this problem can be solved in $O(g)$ time (e.g., by doing a BFS-traversal in
$O(|V|+|E|)$ time).
How can we solve this problem on an SL-HR grammar $G$?
Certainly, $(u,v)$-reachability is CMSO definable and therefore
Proposition~\ref{prop:mso} gives us an upper bound of $O(|G|^2)$. 
The following direct \emph{linear time} algorithm essentially uses the same method already applied
by Lengauer and Wanke~\cite{DBLP:journals/siamcomp/LengauerW88}. Their formalism is slightly
different from ours however, as it uses an encoding of the hypergraphs using bipartite graphs, and
the algorithm as stated only decides reachability in undirected graphs. We therefore restate the
algorithm using the following notion, which will again be used in the next section:
\begin{definition}
    Let $g$ be a graph with set $V$ of external nodes. Then the \emph{skeleton graph} $\sk(g) =
    (V,E)$ is the directed, unlabeled graph such that for all $v, v' \in V$, $(v,v') \in E$ if and
    only if $v'$ is reachable from $v$ in $\val(g)$.
%
    \label{def:skeleton}
\end{definition}
The edges of a skeleton graph are computed as follows.
First, assume that $g$ is a terminal graph.
We determine the strongly connected components of $g$
in linear time (e.g., using Tarjan's algorithm~\cite{DBLP:journals/siamcomp/Tarjan72}).
Let $g'$ be the corresponding graph which has
as nodes the strongly connected components of $g$. 
We remove from $g'$ each strongly connected component $C$ that does
not contain external nodes.
This is done by inserting for every pair of edges $e_1,e_2$
such that $e_1$ is an edge from a component $D\not= C$ into $C$ 
and $e_2$ is an edge from $C$ to a component $E\not =C$ (with $E\not= D$),
an edge from component $D$ to component $E$. 
Finally, we replace each component by a cycle of the external
nodes of that component, and, for an edge from a component $D$ to
a component $E$ we add an edge from an arbitrary external node of $D$
to one of $E$.

\begin{theorem}\label{theo:reachable}
Let $g$ be a graph and $G=(N,S,P)$ an SL-HR grammar with $\val(G)=g$.
Given nodes $u,v\in V_g$, it can be determined in $O(|G|)$ time whether or not
$v$ is reachable from $u$ in $g$. 
\end{theorem}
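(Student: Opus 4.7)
The plan is to reduce reachability on $\val(G)$ to reachability on a skeleton-augmented version of the start graph, using a bottom-up precomputation of $\sk(\rhs(A))$ for every nonterminal $A$. The key observation is that by Definition~\ref{def:skeleton}, once we know $\sk(\rhs(A))$ we can replace any $A$-labeled edge $e$ by $\sk(\rhs(A))$ (identifying its external nodes with $\att(e)$) without altering reachability between the remaining nodes. Since $\SLOrder$ is acyclic, we process nonterminals in reverse $\SLOrder$: when we reach $A$, every nonterminal $B$ that occurs in $\rhs(A)$ already has $\sk(\rhs(B))$ computed, so we can build a \emph{terminal} graph $g_A$ by replacing each nonterminal edge of $\rhs(A)$ by its precomputed skeleton, and then run the SCC-based linear-time procedure described just before the theorem to obtain $\sk(\rhs(A))$.

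For the actual query, I would augment the grammar so that $u$ and $v$ become external in the rules where they originate. Concretely, using getGRep from Algorithm~\ref{alg:gRep}, compute the $G$-representations $e_0^u e_1^u \cdots e_{n_u}^u v_u$ and $e_0^v e_1^v\cdots e_{n_v}^v v_v$ of $u$ and $v$ respectively (where $v_u,v_v$ are internal nodes of the innermost right-hand sides). Then add $v_u$ (resp.\ $v_v$) to $\ext$ of $\rhs(\lbl(e_{n_u}^u))$ (resp.\ $\rhs(\lbl(e_{n_v}^v))$), and propagate this new external node upward through each ancestor rule along the derivation path, extending both the attachment of the corresponding nonterminal edge and the external-node string of its right-hand side. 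After this propagation, $u$ and $v$ appear as external nodes of the start graph $S$. Reachability from $u$ to $v$ in $\val(G)$ then coincides with the existence of an edge $(u,v)$ in $\sk(S)$ computed for the augmented grammar.

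The correctness argument is a straightforward induction on $\SLOrder$: substituting a nonterminal edge by the skeleton of its right-hand side preserves directed reachability between the surviving nodes, because the skeleton faithfully records, for every pair of external nodes, whether a directed path exists between them in the derived graph. Marking $u$ and $v$ as external ensures they survive every substitution and hence are directly connected in $\sk(S)$ iff they are connected in $\val(G)$.

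The main obstacle is the complexity accounting. The augmentation adds at most one extra external node to each of the at most $2\,\height(G)$ rules on the two derivation paths, so it enlarges the grammar only by an additive $O(|G|)$ term. The bottom-up skeleton computation processes each rule exactly once; at rule $A$ the expanded graph $g_A$ has size bounded by $|\rhs(A)|$ plus the sizes of the substituted skeletons, which are in turn bounded in terms of the ranks of the nonterminals used in $\rhs(A)$. Since each nonterminal edge of rank $k$ already contributes $k$ to $|G|$ (by the paper's size definition), and each skeleton substitution introduces work proportional to the degree structure it records, a careful charging argument, analogous to the one of Lengauer and Wanke~\cite{DBLP:journals/siamcomp/LengauerW88}, shows the total work is $O(|G|)$. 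Finally, the SCC-based skeleton construction for the augmented start graph is itself linear in $|S|$, yielding the overall $O(|G|)$ bound.
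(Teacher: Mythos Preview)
Your proposal is correct and rests on the same core idea as the paper: compute $\sk(\rhs(A))$ for every nonterminal $A$ in reverse $\SLOrder$, using previously computed skeletons to eliminate nonterminal edges, and then reduce the query to reachability in a terminal graph of size $O(|G|)$.

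Where you differ is in how the two query nodes are handled. The paper does \emph{not} modify the grammar. After computing all skeletons, it takes the $G$-representations $e_0\cdots e_n u''$ and $f_0\cdots f_m v''$, and in a second pass along each path propagates a \emph{set} of external nodes upward: $E_i\subseteq\ext_{\rhs(\lbl(e_i))}$ consists of those external nodes reachable from (the image of) $u$ inside $\val(\rhs(\lbl(e_i)))$, and dually $F_j$ for $v$. The final check is ordinary reachability from $E_0$ to $F_0$ in the skeleton-expanded start graph $S'$. Your approach instead bakes $u$ and $v$ into the grammar by externalising them along the derivation paths and then reads the answer off $\sk(S)$. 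This is sound, but two points deserve care. First, you must create \emph{fresh} nonterminals along each path rather than edit $\rhs(A)$ in place (your phrasing ``add $v_u$ to $\ext$ of $\rhs(\lbl(e_{n_u}^u))$'' is ambiguous): the same $A$ may label edges off the path, and altering its rule globally would change $\val(G)$. Second, the node you add at level $i{-}1$ is not an existing node of $\rhs(\lbl(e_{i-1}))$ but a brand-new one attached only to the (relabeled) edge $e_i$; this is what makes the propagated node appear in $S$. With these clarifications your size bound (at most $O(\height(G))$ fresh rules, each a copy of a distinct original plus $O(1)$) and hence the $O(|G|)$ total are justified.

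What each buys: the paper's approach cleanly separates the (query-independent) skeleton precomputation from the (query-dependent) path propagation, so the $O(|G|)$ precomputation can be amortised over many $(u,v)$-queries, each then costing only $O(\height(G))$ plus one reachability test in $S'$. Your augmentation folds both phases together, which is conceptually uniform but recomputes skeletons for the fresh rules on every query.
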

\begin{proof}
We first compute $G$-representations $u'$, $v'$ of $u$ and $v$ in $O(\log l + h)$ time, as described in
Section~\ref{sss:neighborhood}.
We traverse $G$ bottom-up with respect to $\SLOrder$ in one pass
and compute for each nonterminal $A$ its skeleton graph $\sk(A)$.
After having computed in $O(|G|)$ time the skeleta for
all nonterminals, we can solve a reachability query as follows.
Let $S'$ be the graph obtained from $S$ by replacing each nonterminal
edge by its skeleton graph; clearly, it can be obtained in $O(|G|)$ time.

Case 1: Assume that $u'$ and $v'$ are of the form $u''$ and $v''$,
i.e., both nodes are in the start graph. It should be clear that
$v$ is reachable from $u$ in $\val(G)$ if and only if
$v''$ is reachable from $u''$ in $S'$. The latter is checked 
in $O(|S'|)$ time. 

Case 2: Let $u'=e_0\cdots e_nu''$ and $v'=f_0\cdots f_mv''$. 
Let $A_i$ be the label of $e_{i-1}$ for $i\in\{1,\dots,n\}$ and
let $B_j$ be the label of $f_{j-1}$ for $j\in\{1,\dots,m\}$.
We determine the set $E_n$ of external nodes of the right-hand side $h_n$ of $A_n$ that
are reachable from $u''$ in $h_n$. 
This is done by replacing the (at most two) nonterminal edges in $h_n$ by
their skeleton graphs, and then running a standard reachability test.
We now move up the derivation tree (viz. to the left in $u'$), at
each step computing a subset $E_i$ of the external nodes of $\sk(A_i)$:
we locate the nodes corresponding $E_{i+1}$ in  
$\sk(A_i)$ and determine the set $E_i$ of external nodes reachable from these.
Finally, we obtain a set $E_0$ of nodes in $S'$ (all incident with the edge $e_0$).
In a similar way we compute a set $F_0$ of nodes in $S$ that are incident
with $f_0$ (and can reach $v'$).
Finally, we check if a node in $F_0$ is reachable from a node in $E_0$.
This is done by adding edges over $F_0$ that form a cycle, and
edges over $E_0$ that form a cycle. 
We now pick arbitrary nodes $f$ in $F_0$ and $e$ in $E_0$
and check if $f$ is reachable from $e$ in $S'$.
\end{proof}
Furthermore, by Lengauer and Wagner~\cite{DBLP:journals/jcss/LengauerW92} the reachability problem on
grammar-compressed graphs is $\P$-complete.
\subsection{Regular Path Queries}\label{sss:rpq_queries}
Regular path queries (RPQ) are a well-known (see, e.g.,~\cite{Wood12_querySurvey}) way to query
graph data: the query is given as a regular expression $\alpha$ over the edge alphabet of the graph
$g$. Given two nodes $u,v$ in $g$ the problem is to decide whether there exists a path from $u$ to
$v$ such that its edge labels, taken as a string, match the regular expression $\alpha$. Such
queries are of relevance in modern applications: version 1.1 of the SPARQL
query-language\footnote{\url{https://www.w3.org/TR/sparql11-query/}}, for RDF data,
introduces \emph{property paths}, which are a variant of regular path queries. It is well-known that
the problem whether a regular path query holds for two given nodes in a graph is decidable in time
$O(|g|\cdot |\A_\alpha|)$ where $\A_\alpha$ is an NFA deciding the language $L(\alpha)$ of $\alpha$.
Such an automaton $\A_\alpha$ can be constructed in linear time and with $|\A_\alpha| \in
O(|\alpha|)$, e.g., using Thompson's construction~\cite{DBLP:journals/cacm/Thompson68}.
The algorithm works as follows:
\begin{itemize}
    \item Consider $g$ to be an NFA $\A_g$ with initial state $u$ and final state $v$.
    \item Construct the product automaton of $\A_g$ and $\A_\alpha$ deciding
        $L(\A_g) \cap L(\alpha)$.
    \item Test the product automaton for emptiness.
\end{itemize}
If the final test is true, then there is no such path. Otherwise it exists and we say $u$ and $v$
\emph{satisfy} $\alpha$.
We generalize this method to grammar-compressed graphs. Given a grammar $G$
and a regular expression $\alpha$, we first compute an SL-HR grammar $G_\alpha$ which generates
the graph-structure (ignoring initial and final states for now) of the product automaton of
$\A_{\val(G)}$ and $\A_\alpha$. Then, using the result from the previous section, we decide
reachability from the initial to the final state. To make this construction easier to read, we will
relax one condition we enforced on hypergraphs before: nodes in the grammar are named not just by
IDs, but by a tuple of ID and state. In fact, for a node-ID $i$ and a set of states $Q$, we generate
nodes $(i,q)$ for every $q \in Q$. We refer to $i$ as the node's ID, and $q$ as the node's state.
Consequently, this affects how nodes are named during a derivation step. The renaming $\nr$ for
nodes $(i,q)$ always only affects the ID portion of the node, i.e., for any $q \in Q$ only renamings
of the form $\nr((i,q)) = (j,q)$ will be allowed. The rules by which the ID is renamed still follow
the method laid out in Section~\ref{sss:nodeIDs}. However, as there may now be distinct nodes
$(i,q)$ and $(i,p)$ for two different states $q,p \in Q$, we also require that $\nr((i,q)) = (j,q)$
and $\nr((i,p)) = (j,p)$, i.e., nodes with the same ID will keep this property. 
%

We recall some necessary definitions. A \emph{nondeterministic finite automaton (NFA)} over an
alphabet $\Sigma$ is a tuple $\A = (Q, \delta, q_i, q_f)$, where $Q$ is a finite set of states, $\delta
\subseteq Q \times (\Sigma \cup \{\varepsilon\}) \times Q$ is the transition relation, $q_i \in Q$
is the initial state, and $q_f \in Q$ is the final state. A tuple $(q,w) \in Q \times \Sigma^*$ is
called a \emph{configuration} of $\A$. A configuration $(p,v)$ can follow a configuration $(q,w)$,
denoted $(q,w) \vdash (p,v)$, if
\begin{enumerate}
    \item $w = \sigma v$ for some $\sigma \in \Sigma$ and $(q,\sigma,p) \in \delta$, or
    \item $w = v$ and $(q,\varepsilon,p) \in \delta$.
\end{enumerate}
We denote the transitive closure of $\vdash$ by $\vdash^*$.
The language decided by $\A$ is denoted as $L(\A) = \{w \mid (q_i, w) \vdash^* (q_f,
\varepsilon)\}$. 
Let $\A = (Q,\delta,q_i,
q_f)$ and $\A' = (Q', \delta', q_i', q_f')$ be two NFA's. By $\A \otimes \A' = (Q_\otimes,
\delta_\otimes)$ we denote the \emph{product construction} defined by $Q_\otimes = Q \times Q'$ and
$\delta_\otimes = \delta_\Sigma \cup \delta_\varepsilon$, where
\begin{equation*}
    \begin{array}{llll}
        \delta_\Sigma &= \{&((q,q'),\sigma,(p,p')) \mid& q,p \in Q, q',p' \in Q', \sigma \in \Sigma,\\
            &&&(q,\sigma,p) \in \delta,\text{ and }(q',\sigma,p') \in \delta'\},\text{ and} \\
        \delta_\varepsilon &= \{&((q,q'), \varepsilon, (p,p')) \mid& q,p \in Q, q', p' \in Q', \\
            &&&(q,\varepsilon,p) \in \delta \text{ and } (q', \varepsilon, p') \in \delta'\text{, or} \\
            &&&(q,\varepsilon,p) \in \delta \text{ and } q' = p'\text{, or} \\
            &&&q = p \text{ and } (q', \varepsilon, p') \in \delta'\}.\\
    \end{array}
\end{equation*}
Note that, while the product construction defines a system of state transitions, it does not set
initial and final states by default. However, the NFA $(Q_\otimes, \delta_\otimes, (q_i,q_i'),
(q_f,q_f'))$ does decide the language $L(\A) \cap L(\A')$. This property is used to decide whether
there exists a path satisfying a regular path query between two given nodes. Such automata without
initial/final states are also just called transition systems. The product construction still works
if either or both the automata are transition systems. A simple graph $g = (V, E, \att, \lbl, \ext)$
with edge labels from $\Sigma$ defines a transition system $\trans(g) = (Q_g, \delta_g)$ with $Q_g = V$
and $\delta_g = \{(v,\sigma,u) \mid \exists e \in E: \lbl(e) = \sigma$ and $\att(e) = vu\}$.

Consider the product construction of $\trans(g)$ and $\A$ for some simple graph $g$ and an NFA $\A$.
It represents every possible run of $\A$ on any path within $g$, only by setting an initial and
final state we choose specific start and end-nodes within $g$, and initial/final states of $\A$.
Thus, for any two nodes $u,v$ of $g$ we can check whether there is a path between them that $\A$
accepts with the same product construction. We next extend this notion to SL-HR grammars
representing simple graphs, by describing a construction which, given an SL-HR grammar $G$ and an
NFA $\A$, generates an SL-HR grammar representing the product construction of $\trans(\val(G))$ and
$\A$. Thus, we achieve a \emph{compressed representation} of all runs of $\A$ on paths of $\val(G)$.
\begin{lemma}
    Given an SL-HR grammar $G = (N,P,S)$ that generates a simple graph and an NFA $\A = (Q, \delta,
    q_i, q_f)$, we can compute in $O(|G||\A|)$ time an SL-HR grammar $G_{\A} = (N_\A, P_\A, S_\A)$,
    such that $\trans(\val(G_{\A})) = \trans(\val(G)) \otimes \A$.
    \label{lem:product_grammar}
\end{lemma}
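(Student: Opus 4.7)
The plan is to mirror the product construction rule-by-rule, replicating each node $v$ of each right-hand side into $|Q|$ copies $(v,q)$. For every rule $(A,g) \in P$ and for the start graph, I would introduce a new nonterminal $A_\A$ of rank $\rank(A)\cdot|Q|$ with right-hand side $g_\A$ defined as follows. The node set of $g_\A$ is $V_g \times Q$. Each terminal edge $e$ of $g$ with $\att_g(e)=uv$ and $\lbl_g(e)=\sigma$ is replaced by one $\sigma$-labeled edge from $(u,q)$ to $(v,p)$ for each transition $(q,\sigma,p) \in \delta$. Each nonterminal edge $e$ with $\lbl_g(e)=B$ and $\att_g(e)=u_1\cdots u_l$ is replaced by a single $B_\A$-edge whose attachment concatenates all copies $(u_i,q_j)$ in a fixed order (for instance, lexicographic on $(i,j)$). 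The external nodes of $g_\A$ are built from $\ext_g$ using the same fixed order, which guarantees that when an $A_\A$-edge is derived inside some enclosing $g'_\A$, the state-copies of its endpoints line up correctly with the state-copies already present in $g'_\A$.

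The slightly subtle issue is the $\varepsilon$-transitions of $\A$: they must appear at every node of $\val(G)$ but must not be multiplied. The plan is to add, for each \emph{internal} node $v$ of every right-hand side (including the start graph) and each $(q,\varepsilon,p) \in \delta$, an $\varepsilon$-labeled edge from $(v,q)$ to $(v,p)$. Since every node of $\val(G)$ is internal to exactly one right-hand side along its derivation chain, the required $\varepsilon$-edges are added exactly once. The two remaining cases in $\delta_\varepsilon$ are vacuous for our purposes: a simple graph carries no $\varepsilon$-labeled edges, so neither ``both sides take an $\varepsilon$-move'' nor ``the graph alone takes an $\varepsilon$-move'' ever contributes a transition.

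Correctness I would prove by induction on $\SLOrder$: for every nonterminal $A$ the claim is that $\val(A_\A)$, with external nodes in the fixed order above, equals the induced subsystem of $\trans(\val(A))\otimes \A$ on $V_{\val(A)}\times Q$. The base case covers rules whose right-hand sides contain only terminal edges, where the claim reduces directly to the definitions of the product and of $\trans$. The inductive step follows because replacing an $A_\A$-edge in some $g_\A$ merges the state-copies of the external nodes exactly as the product construction prescribes, thanks to the consistent ordering of attachment sequences and external-node strings.

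For complexity, each right-hand side $g$ yields $g_\A$ of size $O(|V_g|\cdot|Q| + |E_g^{\text{term}}|\cdot|\delta| + \sum_e \rank(e)\cdot|Q|)$ where the last sum is over the nonterminal edges of $g$; this is $O(|g|\cdot|\A|)$ and can be built within the same asymptotic time. Summing over all rules gives both $|G_\A|$ and the construction time in $O(|G|\cdot|\A|)$. The main obstacle will be nailing down the bookkeeping so that the ordering conventions on external nodes and on nonterminal attachments remain consistent across all rules, so that the inductive statement about $\val(A_\A)$ matches the product construction literally; once that is pinned down, the remaining verification is mechanical.
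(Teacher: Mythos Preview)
Your construction is essentially the paper's: both perform the product rule by rule, taking $V_g\times Q$ as the node set, multiplying each terminal edge by the matching transitions of $\A$, widening each nonterminal edge and the external-node string by a fixed $Q$-ordering, and keeping $N_\A=N$ up to renaming. The only difference is that the paper adds the $\varepsilon$-edges at \emph{every} node of every right-hand side (tolerating harmless duplicates under $\trans$), whereas you add them only at internal nodes; your variant is cleaner, but make sure the external nodes of the start graph $S$, if any, also receive their $\varepsilon$-edges.
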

\begin{proof}
    The construction is straightforward: for every rule $p=(A,g)$ of $G$, the product
    construction of $g$ and $\A$ is computed. Due to this, the rank increases: if $\rank(g) = k$,
    then $\rank(g \otimes \A) = k|Q|$. Accordingly, some care has to be taken with nonterminal
    edges, as these also increase in rank. They need to be attached to the same node-IDs as before,
    but once for every state in $Q$. To make sure that nodes attached to nonterminal edges and
    external nodes match up correctly, we enforce some (arbitrary) order on $Q$.

    Let $Q = \{q_1,\ldots, q_n\}$ and let $<_Q$ be a total order $q_1 <_Q q_2 <_Q \cdots <_Q q_n$ on
    $Q$. For a string $w = w_1\cdots w_m$ let $w \times Q$ be the string $(w_1, q_1)(w_2,
    q_1)\cdots(w_m, q_1)(w_1, q_2)\cdots(w_m, q_n)$. For every $(A, g) \in P$, $P_\A$ contains a
    rule $(A, g_\otimes)$ defined as follows. Let $V_\otimes = V \times Q$ and
    \begin{align*}
        E_\otimes =& \:\{e_{i,q,\sigma,p} \mid e_i \in E, (q,\sigma,p) \in \delta, \lbl(e_i) =
    \sigma\} \cup \\
        &\:\{e_{i,q,\varepsilon,p} \mid i \in V, (q,\varepsilon,p) \in \delta\} \cup \\
        &\:\{e_{i} \mid e_i \in E, \lbl(e_i) \in N\}
    \end{align*}
    such that 
    \begin{align*}
        \att_\otimes(e_{i,q,\sigma,p}) &= (u,q) \cdot (v,p) & (\text{with } \att(e_i) = uv) &&
        \lbl_\otimes(e_{i,q,\sigma,p}) &= \sigma \\
        \att_\otimes(e_{i,q,\varepsilon,p}) &= (i,q) \cdot (i,p) &&&
        \lbl_\otimes(e_{i,q,\varepsilon,p}) &= \varepsilon \\
        \att_\otimes(e_i) &= \att(e_i) \times Q &&&
        \lbl_\otimes(e_i) &= \lbl(e_i).
    \end{align*}
    For the external nodes we set $\ext_\otimes = \ext \times Q$. This defines the rules of $G_\A$.
    Clearly, we do not add any nonterminals, and thus have $N_\A = N$. Finally $S_\A$ is constructed
    from $S$ and $\A$ using the same construction as described above for the right-hand sides of the
    rules (i.e., $S_\A = S_\otimes$).
\end{proof}
It should be clear that $|G_\A| \in O(|G||\A|)$. It can now be used to decide regular path queries:
\begin{theorem}
    Given an SL-HR grammar $G$, where $\val(G)$ is a simple graph, an RPQ $\alpha$, and a pair of
    nodes $u,v$ from $\val(G)$, it can be decided in time $O(|\alpha||G|)$ whether $u$ and $v$ 
    satisfy $\alpha$.
    \label{thm:rpq_query}
\end{theorem}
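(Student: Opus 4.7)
The plan is to reduce the RPQ evaluation to a reachability query on a product grammar, which we can then decide in linear time via Theorem~\ref{theo:reachable}. First, I would convert the regular expression $\alpha$ to an NFA $\A_\alpha$ using Thompson's construction~\cite{DBLP:journals/cacm/Thompson68}, which runs in $O(|\alpha|)$ time and produces an NFA of size $|\A_\alpha| \in O(|\alpha|)$, with a designated initial state $q_i$ and final state $q_f$. Note that $\A_\alpha$ is a transition system for the purpose of the product construction, since initial/final states are applied only at the very end.

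Next, I would apply Lemma~\ref{lem:product_grammar} to compute in $O(|G||\A_\alpha|) = O(|G||\alpha|)$ time an SL-HR grammar $G_{\A_\alpha}$ with $\trans(\val(G_{\A_\alpha})) = \trans(\val(G)) \otimes \A_\alpha$ and $|G_{\A_\alpha}| \in O(|G||\alpha|)$. By construction of the product, a node $w$ of $\val(G)$ corresponds to nodes $(w,q)$ for every $q \in Q$ in $\val(G_{\A_\alpha})$, and a labeled edge from $w_1$ to $w_2$ in $\val(G)$ matching a transition $(q,\sigma,p) \in \delta$ yields an edge (terminal or $\varepsilon$) from $(w_1,q)$ to $(w_2,p)$. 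Consequently, by a standard property of the product construction, there exists a directed path in $\val(G_{\A_\alpha})$ from $(u,q_i)$ to $(v,q_f)$ if and only if there is a path from $u$ to $v$ in $\val(G)$ whose edge-label string lies in $L(\A_\alpha) = L(\alpha)$, i.e., if and only if $u,v$ satisfy $\alpha$.

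Finally, I would invoke Theorem~\ref{theo:reachable} on $G_{\A_\alpha}$ with the source node $(u,q_i)$ and target node $(v,q_f)$. This decides reachability in time $O(|G_{\A_\alpha}|) = O(|G||\alpha|)$, which yields the claimed bound. The total cost is $O(|\alpha|)$ for building $\A_\alpha$, plus $O(|G||\alpha|)$ for constructing the product grammar, plus $O(|G||\alpha|)$ for the reachability check.

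The main obstacle is verifying that Theorem~\ref{theo:reachable} applies cleanly to $G_{\A_\alpha}$. Two points warrant care: (i) the product grammar uses $\varepsilon$-labeled edges, but reachability depends only on the underlying directed graph structure and is indifferent to edge labels, so the skeleton-graph construction used in the proof of Theorem~\ref{theo:reachable} still yields the correct $\sk(A)$ for each nonterminal in bottom-up order; (ii) the $G$-representations for the endpoint nodes $(u,q_i)$ and $(v,q_f)$ have to be obtained within $G_{\A_\alpha}$, but since Lemma~\ref{lem:product_grammar} preserves the hierarchical structure (only rank and attachment are multiplied by $|Q|$), the representations of $u$ and $v$ lift directly to representations of $(u,q_i)$ and $(v,q_f)$ in $O(\log \ell + h)$ time as in Section~\ref{sss:neighborhood}. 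Subject to these checks, the plan gives the desired $O(|\alpha||G|)$ bound.
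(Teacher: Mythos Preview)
Your proposal is correct and follows essentially the same approach as the paper's proof: Thompson's construction for $\A_\alpha$, the product grammar $G_{\A_\alpha}$ via Lemma~\ref{lem:product_grammar}, and then reachability from $(u,q_i)$ to $(v,q_f)$ via Theorem~\ref{theo:reachable}. The paper handles your obstacle~(ii) exactly as you suggest, by lifting a $G$-representation $e_1\cdots e_n\,i$ to the $G_{\A_\alpha}$-representation $e_1\cdots e_n\,(i,q_i)$.
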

\begin{proof}
    First compute an NFA $\A$ from $\alpha$ by using Thompson's well known
    construction~\cite{DBLP:journals/cacm/Thompson68}. Note that $|\A| \in O(|\alpha|)$. Using
    Lemma~\ref{lem:product_grammar} compute the grammar $G_{\A}$ generating the product construction of
    $G$ and $\A$. Now, using Theorem~\ref{theo:reachable}, decide reachability from $(u,q_i)$ to
    $(v,q_f)$ in $\val(G_\A)$. Note that $u,v$ are node-IDs in $\val(G)$, but a $G$-representation
    $e_1\cdots e_n i$ of $v$ can be converted into a $G_{\A}$-representation of $(v,q_i)$ by using
    $e_1\cdots e_n (i,q_i)$. If a path exists, we can conclude that there is a path from $u$ to
    $v$ satisfying the RPQ $\alpha$. Otherwise, it does not exist.
\end{proof}
As an example showing how powerful this construction is, consider the string $a^{40}$, which can be
represented by a grammar as $a^{32}a^8$ and the obvious SL-HR grammar $G$ representing the string-graph
encoding this string. The minimal NFA $\A$ deciding $L = \{a^x \mid x \bmod 5 = 0 \}$ has 5 states. The
product construction of $\sgraph(a^{40})$ and $\A$ would therefore have 205 states, whereas the grammar
$G_\A$ only has 95 states, and just replacing the nonterminals with their skeleta in the small start
graph (20 states), e.g., when testing for reachability between $(0,q_i)$ and $(40,q_f)$, immediately
shows that there is an accepting run from the first to the last node (i.e., nodes $0$ and $40$,
respectively) of the graph.

We can also use this ``product grammar'' to decide a more general problem. In a way, the product
construction encodes every pair of nodes that fulfills the query $\alpha$. Thus, the product grammar
$G_\A$ is a compressed representation of every such pair. We can therefore decide, whether there
exists a pair of nodes at all, that satisfies the query. Intuitively, this is done by computing,
bottom-up, for every rule, which external nodes can be reached from any initial state, and from
which external nodes a final state can be reached. If an external node appears in both these lists
at some point, we know that there is a path between some initial and final state. 
\begin{theorem}
    Given an SL-HR grammar $G$, where $\val(G)$ is a simple graph, and an RPQ $\alpha$, it can be
    decided in time $O(|\alpha||G|)$, if there exist of nodes $u,v$ in $\val(G)$ satisfying
    $\alpha$.
    \label{thm:rpq_exists}
\end{theorem}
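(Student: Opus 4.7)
The plan is to reduce the existence question to a single reachability query on a slightly enlarged grammar, and then invoke Theorem~\ref{theo:reachable}. First, I would use Lemma~\ref{lem:product_grammar} to construct the product grammar $G_\A$ from $G$ and an NFA $\A$ for $\alpha$ (obtained via Thompson's construction), which takes $O(|G||\alpha|)$ time. Recall that in $G_\A$ every node carries a state component, so nodes of the form $(i,q_i)$ are candidate "starting points" and $(j,q_f)$ are candidate "accepting endpoints" for a path matching $\alpha$. The existence question then becomes: is there any pair of such nodes in $\val(G_\A)$ where the second is reachable from the first?

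Next I would transform $G_\A$ into a grammar $G'$ with two distinguished global nodes $s$ and $t$. Concretely: in every rule of $G_\A$ (including the start graph), append two fresh external nodes $s,t$ (increasing the rank of every nonterminal by $2$), extend the attachment of every nonterminal edge so that these two extra positions are attached to the rule's $s$ and $t$, add a dummy-labeled edge from $s$ to every internal node whose state component is $q_i$, and add one from every internal node whose state component is $q_f$ to $t$. Because external nodes of a right-hand side are merged with the attachment nodes of the invoking nonterminal edge during derivation, and because every nonterminal edge is attached to the parent's $s$ and $t$, the nodes $s$ and $t$ propagate as two single globally shared nodes throughout the entire derivation. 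Therefore $s$ has an outgoing edge to \emph{every} node of the form $(i,q_i)$ and every node of the form $(j,q_f)$ has an outgoing edge to $t$ in $\val(G')$. Each right-hand side grows only by a constant factor times its original size, so $|G'|\in O(|G_\A|)=O(|G||\alpha|)$, and conditions \ref{condition1} and \ref{condition2} are preserved since $s,t$ are fresh and distinct in every rule.

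Correctness is then immediate: there is a path from $s$ to $t$ in $\val(G')$ iff some $(u,q_i)$ reaches some $(v,q_f)$ in $\val(G_\A)$ iff there exist $u,v\in V_{\val(G)}$ satisfying $\alpha$. Since $s$ and $t$ are nodes of the start graph of $G'$ with known IDs, we can apply Theorem~\ref{theo:reachable} directly and decide reachability in $O(|G'|)=O(|G||\alpha|)$ time, giving the claimed bound.

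The main obstacle in the write-up is verifying that the global propagation of $s$ and $t$ really works in the presence of condition \ref{condition2} (each $\ext$-string has distinct entries) and the node-renaming during derivation; this is the standard HR-grammar trick of "piping" a node through every right-hand side via the external-node/attachment interface, and it relies on placing $s,t$ consistently at the last two positions of every $\ext$ and of every nonterminal's attachment so that they are always identified with the single pair of global nodes in the start graph, but it needs to be spelled out carefully. Everything else, including the handling of $\varepsilon$-transitions (which are treated as ordinary edges for reachability), is routine given the machinery already developed in Sections~\ref{sect:reachability} and~\ref{sss:rpq_queries}.
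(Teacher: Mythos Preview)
Your reduction is correct and is a genuinely different route from the paper's proof. Both start by building the product grammar $G_\A$ via Lemma~\ref{lem:product_grammar}, but diverge from there. The paper does \emph{not} reduce to a single invocation of Theorem~\ref{theo:reachable}; instead it runs a tailored bottom-up pass over $G_\A$, computing for each nonterminal $A$ two sets $P_A$ (external nodes reachable from some $(x,q_i)$ in $\val(\rhs(A))$) and $Q_A$ (external nodes that can reach some $(y,q_f)$), detects success as soon as $P_A\cap Q_A\neq\emptyset$, and finishes with an $s$--$t$ reachability check on the start graph after replacing nonterminals by their skeleta. Your approach instead performs a one-shot grammar transformation, threading two fresh global nodes $s,t$ through every rule via the external/attachment interface, and then calls Theorem~\ref{theo:reachable} once. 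What your approach buys is modularity: you reuse the reachability theorem as a black box and avoid describing a new bottom-up procedure. What the paper's approach buys is that it never actually modifies the grammar's rank or attachment structure (the $s,t$ nodes appear only locally during the per-rule computation), which keeps the argument closer to the skeleton machinery already set up in Section~\ref{sect:reachability}.

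One small point to tighten in your write-up: you add dummy edges only from $s$ to \emph{internal} nodes with state $q_i$ (and symmetrically for $t$). If the original start graph $S$ has external nodes, their $q_i$-copies in $S_\A$ remain external in your $G'$ and would be missed. The fix is trivial---either make all original nodes of the start graph internal before appending $s,t$ (this does not affect reachability), or simply add dummy edges to all $q_i$-nodes of the start graph regardless of externality---but it should be stated explicitly.
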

\begin{proof}
    As before, construct $\A$ from $\alpha$, and $G_\A$ from $G$ and $\A$ according to
    Lemma~\ref{lem:product_grammar}. We now compute some helpful information in a bottom-up pass
    over the grammar, i.e., iterating over the rules in the reverse $\SLOrder$-order. For a rule
    $(A, g)$ we first compute whether any node $(y,q_f)$ can be reached from any node $(x,q_i)$ (where
    $(x,q_i)$ and $(y,q_f)$ are nodes in $g$) within $\val(g)$, using the methods from
    Theorem~\ref{theo:reachable}. If this is the case, a pair satisfying $\alpha$ exists. Otherwise,
    we compute two sets of nodes $P_A$ and $Q_A$. Intuitively, $P_A$ contains the external 
    nodes of $g$ that can be reached from some $(v,q_i)$ (an initial state) while $Q_A$ contains the
    external nodes of $g$ from which a node $(v,q_f)$ (a final state) is reachable.

    $P_A$ and $Q_A$ for a rule $(A,g)$ are computed in the following way: first we compute two sets
    of nodes, $X$ and $Y$. The set $X$ contains every node $x \in V_g$ with the following
    properties: 
    \begin{itemize}
        \item there exists a nonterminal edge $e \in E_g$ such that $x \in \att(e)$, and
        \item if $x$ is on position $i$ in $\att(e)$, then the $i$th external node of
            $\rhs(\lbl(e))$ is in $P_{\lbl(e)}$.
    \end{itemize}
    The set $Y$ is defined analogously, but using $Q_{\lbl(e)}$ instead of $P_{\lbl(e)}$.
    Now we replace every nonterminal edge of $g$ by their skeleton (see
    Definition~\ref{def:skeleton}) and let this graph be $g'$. To $g'$ we add a new node $s$ and 
    an edge from $s$ to every node in $\{(x,q_i) \in V_g \mid x \in \N\} \cup X$. Every external node
    now reachable from $s$ belongs to the set $P_A$. This set can be computed by a single
    BFS-traversal starting in $s$. To compute $Q_A$ we add a new node $t$ to $g'$ that has an edge
    \emph{from} every node in $\{(x,q_f) \in V_g \mid x \in \N\} \cup Y$. Every external node that
    can reach $t$ is part of $Q_A$. This can be computed by complementing all the edges and again
    starting a BFS traversal from $t$.

    If, for any $A \in N$, $P_A \cap Q_A \neq \emptyset$, then a pair of nodes $u,v$ exists within
    $\val(A)$ such that they satisfy $\alpha$. Otherwise, we compute $X$ and $Y$ for the start graph
    $S_\A$, replace the nonterminals in $S_\A$ by their skeleta, and add both $s$ and $t$ to $S_\A$
    as above. If $t$ is now reachable from $s$, then there exist nodes $u,v$ within $\val(G)$
    satisfying $\alpha$. Otherwise no such nodes exist.
\end{proof}
Note that the constructions in this section all generalize to hypergraphs (instead of simple graphs)
in a straightforward way. There is some ambiguity on how to define transition systems using
hyperedges. We suggest to use a similar approach as for the definition of paths within hypergraphs
(cf. Section~\ref{sse:preliminaries}). A hyperedge in a transition system would thus be considered
as directed with one source node (the first one it is attached to) and multiple target nodes.
Semantically, a rank $k$ ($k > 2$) hyperedge labeled $\sigma$ and attached to nodes $v_1\cdots v_k$
would then be the same as $k-1$ simple edges all labeled $\sigma$, starting in $v_1$, and using
$v_2,\ldots, v_k$ as target nodes.

\section{Conclusions}

We present a generalization to (hyper)graphs 
of the RePair compression scheme as known for strings and trees.
Our generalization produces from a given graph a straight-line
hyperedge replacement grammar (SL-HR grammar).
We prove some theoretical results about SL-HR grammars, for instance,
if the given graph is a string or a tree, then an SL-HR grammar
cannot compress much better than an ordinary string or tree grammar;
thus, in these cases the use of graph grammars does not
offer stronger compression than the existing native grammars.
In terms of the RePair compression, 
we prove that (as for trees) the choice of the maximal
rank of a grammar can heavily influence the compression behavior. 

We then study an implementation of RePair. We observe that for graphs,
finding a digram with a maximal number of non-overlapping occurrences
is computationally hard; using state-of-the-art algorithms it requires
at least cubic time. We therefore introduce an approximation which
counts greedily and heavily depends on the order in which the graph is traversed.
We experiment with several node orders and find that an order that generalizes
the node degree order and is similar to the one used in the Weisfeiler-Lehman approximative
isomorphism test~\cite{WeisfeilerLehman68} gives the best compression results.
We compare our compressor to state-of-the-art graph compressors.
Over network graphs (which have no edge labels),
we do not obtain a conclusive answer: sometimes our
compressor gives strongest compression, sometimes the other compressors do. 
Over RDF graphs (which are edge-labeled) our compressor gives the best
results, sometimes factors of several magnitudes smaller than other
compressors. We also obtain the best results for ``version graphs'' which
are disjoint unions of versions of the same graph (which are very similar).  

We prove that, as in the case of strings and trees, there exist interesting
``speed-up'' algorithms. Such algorithms can run faster on the compressed
grammar than on the original, by a factor proportional to the compression
ration of the grammar. We show that reachability between two given nodes
offers such an algorithm, as well as evaluating regular path queries
over two nodes. The latter asks if there exists a path so that the string
of path labels matches a given regular expression. We show that even
without a candidate pair of nodes given, we can determine within similar
time bounds, whether or not there exists any pair of nodes in the original
graph with a path between them matching the regular expression.

For future work, there are several paths to follow. It would be interesting
to consider a RePair compression scheme for graphs that is based on
node replacement (NR) graph grammars. NR graph grammars can compress some
graphs, e.g. cliques, much stronger than HR grammars. On the other hand,
rules of NR grammars are more complex and expensive to store.
For our compressor, other node orders should be considered which can give
rise to stronger compression. Especially for version graphs we believe that
different orders could give better results; one possibility, for instance,
is to compute the edit distance between two versions of a graph, and to
use it to compute a node order that is beneficial for our RePair compressor.
In terms of speed-up algorithms there is much work to be done. Foremost,
we would like to implement our algorithms and show that they can be faster
than existing state-of-the art graph databases.
Fundamentally of utmost interest and importance is the study of the complexity
of the isomorphism problem for SL-HR grammars. Can it be decided in polynomial
time if two graphs represented by SL-HR grammars are isomorphic? 
For strings and trees similar results hold (see~\cite{Lohrey12_stringSLPsurvey} 
and~\cite{DBLP:journals/is/BusattoLM08}), but note that even extending
the latter result from (ordered ranked) trees to 
arbitrary trees is non-trivial~\cite{DBLP:conf/icalp/LohreyMP15} and that the complexity of deciding
the isomorphism of two explicitly given graphs is currently not known to be polynomial.

\bibliographystyle{plain}
\bibliography{lit}

\begin{thebibliography}{10}

\bibitem{DBLP:journals/kais/Alvarez-GarciaB15}
S.~{\'{A}}lvarez{-}Garc{\'{\i}}a, N.~R. Brisaboa, J.~D. Fern{\'{a}}ndez, M.~A.
  Mart{\'{\i}}nez{-}Prieto, and G.~Navarro.
\newblock Compressed vertical partitioning for efficient {RDF} management.
\newblock {\em Knowl. Inf. Syst.}, 44(2):439--474, 2015.

\bibitem{Apostolico09_graphCompressionBFS}
A.~Apostolico and G.~Drovandi.
\newblock Graph {C}ompression by {BFS}.
\newblock {\em Algorithms}, 2(3):1031--1044, 2009.

\bibitem{Boldi09_permutingWebGraphs}
P.~Boldi, M.~Santini, and S.~Vigna.
\newblock Permuting web graphs.
\newblock In {\em Algorithms and Models for the Web-Graph}, pages 116--126.
  2009.

\bibitem{DBLP:conf/www/BoldiV04}
P.~Boldi and S.~Vigna.
\newblock The webgraph framework {I:} compression techniques.
\newblock In {\em {WWW}}, pages 595--602, 2004.

\bibitem{DBLP:journals/is/BrisaboaLN14}
N.~R. Brisaboa, S.~Ladra, and G.~Navarro.
\newblock Compact representation of web graphs with extended functionality.
\newblock {\em Inf. Syst.}, 39:152--174, 2014.

\bibitem{Buehrer08_scalableWebGraphCompression}
G.~Buehrer and K.~Chellapilla.
\newblock A {S}calable {P}attern {M}ining {A}pproach to {W}eb {G}raph
  {C}ompression with {C}ommunities.
\newblock In {\em WSDM}, pages 95--106, 2008.

\bibitem{DBLP:journals/is/BusattoLM08}
G.~Busatto, M.~Lohrey, and S.~Maneth.
\newblock Efficient memory representation of {XML} document trees.
\newblock {\em Inf. Syst.}, 33(4-5):456--474, 2008.

\bibitem{DBLP:journals/combinatorica/CaiFI92}
J.~Cai, M.~F{\"{u}}rer, and N.~Immerman.
\newblock An optimal lower bound on the number of variables for graph
  identifications.
\newblock {\em Combinatorica}, 12(4):389--410, 1992.

\bibitem{Charikar05_smallest}
M.~Charikar, E.~Lehman, D.~Liu, R.~Panigrahy, M.~Prabhakaran, A.~Sahai, and
  A.~Shelat.
\newblock The smallest grammar problem.
\newblock {\em IEEE Transactions on Information Theory}, 51(7):2554--2576,
  2005.

\bibitem{Claude10_compactWebGraphRep}
F.~Claude and G.~Navarro.
\newblock Fast and {C}ompact {W}eb {G}raph {R}epresentations.
\newblock {\em {TWEB}}, 4(4):16:1--16:31, 2010.

\bibitem{DBLP:journals/tcs/CourcelleM93}
B.~Courcelle and M.~Mosbah.
\newblock Monadic {S}econd-{O}rder {E}valuations on {T}ree-{D}ecomposable
  {G}raphs.
\newblock {\em Theor. Comput. Sci.}, 109(1{\&}2):49--82, 1993.

\bibitem{DBLP:conf/gg/DrewesKH97}
F.~Drewes, H.{-}J. Kreowski, and A.~Habel.
\newblock Hyperedge {R}eplacement {G}raph {G}rammars.
\newblock In {\em Handbook of Graph Grammars and Computing by Graph
  Transformations, Volume 1: Foundations}, pages 95--162, 1997.

\bibitem{blossom}
J.~Edmonds.
\newblock Paths, trees, and flowers.
\newblock {\em Canad. J. Math.}, 17:449--467, 1965.

\bibitem{DBLP:journals/tit/Elias75}
P.~Elias.
\newblock Universal codeword sets and representations of the integers.
\newblock {\em {IEEE} Transactions on Information Theory}, 21(2):194--203,
  1975.

\bibitem{DBLP:journals/acta/EngelfrietH92}
J.~Engelfriet and L.~Heyker.
\newblock {C}ontext-{F}ree {H}ypergraph {G}rammars have the {S}ame
  {T}erm-{G}enerating {P}ower as {A}ttribute {G}rammars.
\newblock {\em Acta Inf.}, 29(2):161--210, 1992.

\bibitem{DBLP:conf/gg/EngelfrietR97}
J.~Engelfriet and G.~Rozenberg.
\newblock Node {R}eplacement {G}raph {G}rammars.
\newblock In {\em Handbook of Graph Grammars and Computing by Graph
  Transformations, Volume 1: Foundations}, pages 1--94, 1997.

\bibitem{Engelfriet:1997:CGG:267871.267874}
Joost Engelfriet.
\newblock Context-free graph grammars.
\newblock In Grzegorz Rozenberg and Arto Salomaa, editors, {\em Handbook of
  Formal Languages, Vol. 3}, pages 125--213. 1997.

\bibitem{DBLP:conf/icdt/Fan12}
W.~Fan.
\newblock Graph pattern matching revised for social network analysis.
\newblock In {\em {ICDT}}, pages 8--21, 2012.

\bibitem{DBLP:conf/www/FernandezGM10}
J.~D. Fern{\'{a}}ndez, C.~Gutierrez, and M.~A. Mart{\'{\i}}nez{-}Prieto.
\newblock {RDF} compression: basic approaches.
\newblock In {\em {WWW}}, pages 1091--1092, 2010.

\bibitem{DBLP:conf/semweb/FernandezMG10}
J.~D. Fern{\'{a}}ndez, M.~A. Mart{\'{\i}}nez{-}Prieto, and C.~Gutierrez.
\newblock Compact representation of large {RDF} data sets for publishing and
  exchange.
\newblock In {\em {ISWC}}, pages 193--208, 2010.

\bibitem{DBLP:journals/iandc/GalperinW83}
H.~Galperin and A.~Wigderson.
\newblock Succinct representations of graphs.
\newblock {\em Information and Control}, 56(3):183--198, 1983.

\bibitem{DBLP:conf/dcc/GasieniecKPS05}
L.~Gasieniec, R.~M. Kolpakov, I.~Potapov, and P.~Sant.
\newblock Real-time traversal in grammar-based compressed files.
\newblock In {\em {DCC}}, page 458, 2005.

\bibitem{DBLP:conf/vldb/GoldmanW97}
R.~Goldman and J.~Widom.
\newblock Data{G}uides: {E}nabling {Q}uery {F}ormulation and {O}ptimization in
  {S}emistructured {D}atabases.
\newblock In {\em VLDB}, pages 436--445, 1997.

\bibitem{DBLP:journals/dam/GrabowskiB14}
S.~Grabowski and W.~Bieniecki.
\newblock Tight and simple web graph compression for forward and reverse
  neighbor queries.
\newblock {\em Discrete Applied Mathematics}, 163:298--306, 2014.

\bibitem{DBLP:books/sp/Habel92}
A.~Habel.
\newblock {\em Hyperedge {R}eplacement: {G}rammars and {L}anguages}.
\newblock Lecture Notes in Computer Science. 1992.

\bibitem{DBLP:journals/tcs/HabelKL93}
A.~Habel, H.~Kreowski, and C.~Lautemann.
\newblock A {C}omparison of {C}ompatible, {F}inite, and {I}nductive {G}raph
  {P}roperties.
\newblock {\em Theor. Comput. Sci.}, 110(1):145--168, 1993.

\bibitem{DBLP:journals/kais/HernandezN14}
C.~Hern{\'{a}}ndez and G.~Navarro.
\newblock Compressed representations for web and social graphs.
\newblock {\em Knowl. Inf. Syst.}, 40(2):279--313, 2014.

\bibitem{Larsson00_rePair}
N.~J. Larsson and A.~Moffat.
\newblock Off-line dictionary-based compression.
\newblock {\em Proceedings of the IEEE}, pages 1722--1732, 2000.

\bibitem{DBLP:journals/jcss/LengauerW92}
T.~Lengauer and K.~W. Wagner.
\newblock The correlation between the complexities of the nonhierarchical and
  hierarchical versions of graph problems.
\newblock {\em JCSS}, 44(1):63--93, 1992.

\bibitem{DBLP:journals/siamcomp/LengauerW88}
T.~Lengauer and E.~Wanke.
\newblock Efficient solution of connectivity problems on hierarchically defined
  graphs.
\newblock {\em {SIAM} J. Comput.}, 17(6):1063--1080, 1988.

\bibitem{DBLP:conf/sigmod/LiefkeS00}
H.~Liefke and D.~Suciu.
\newblock {XMILL:} an efficient compressor for {XML} data.
\newblock In {\em SIGMOD}, pages 153--164, 2000.

\bibitem{Lohrey12_stringSLPsurvey}
M.~Lohrey.
\newblock Algorithmics on {SLP}-compressed strings: {A} survey.
\newblock {\em Groups Complexity Cryptology}, 4(2):241--299, 2012.

\bibitem{DBLP:journals/jcss/Lohrey12}
M.~Lohrey.
\newblock Model-checking hierarchical structures.
\newblock {\em JCSS}, 78(2):461--490, 2012.

\bibitem{DBLP:conf/dlt/Lohrey15}
M.~Lohrey.
\newblock Grammar-{B}ased {T}ree {C}ompression.
\newblock In {\em {DLT}}, pages 46--57, 2015.

\bibitem{Lohrey06_SLPTreeComplexity}
M.~Lohrey and S.~Maneth.
\newblock The complexity of tree automata and {XPath} on grammar-compressed
  trees.
\newblock {\em Theor. Comp. Sci.}, 363(2):196--210, 2006.

\bibitem{Lohrey13_treeRePair}
M.~Lohrey, S.~Maneth, and R.~Mennicke.
\newblock {XML} tree structure compression using {R}e{P}air.
\newblock {\em Information Systems}, 38(8):1150--1167, 2013.

\bibitem{DBLP:journals/jcss/LohreyMS12}
M.~Lohrey, S.~Maneth, and M.~Schmidt{-}Schau{\ss}.
\newblock Parameter reduction and automata evaluation for grammar-compressed
  trees.
\newblock {\em J. Comput. Syst. Sci.}, 78(5):1651--1669, 2012.

\bibitem{DBLP:conf/icalp/LohreyMP15}
Markus Lohrey, Sebastian Maneth, and Fabian Peternek.
\newblock Compressed tree canonization.
\newblock In {\em {ICALP}}, pages 337--349, 2015.

\bibitem{lmr16}
Markus Lohrey, Sebastian Maneth, and Carl~Philip Reh.
\newblock Traversing grammar-compressed trees with constant delay.
\newblock In {\em DCC}, 2016.

\bibitem{DBLP:journals/corr/abs-1012-5696}
S.~Maneth and T.~Sebastian.
\newblock Fast and tiny structural self-indexes for {XML}.
\newblock {\em CoRR}, abs/1012.5696, 2010.

\bibitem{DBLP:conf/icde/ManethP16}
Sebastian Maneth and Fabian Peternek.
\newblock Compressing graphs by grammars.
\newblock In {\em {ICDE}}, pages 109--120, 2016.

\bibitem{DBLP:journals/njc/MaratheHR94}
M.~V. Marathe, H.~B. {Hunt III}, and S.~S. Ravi.
\newblock The {C}omplexity of {A}pproximation {PSPACE}-{C}omplete {P}roblems
  for {H}ierarchical {S}pecifications.
\newblock {\em Nord. J. Comput.}, 1(3):275--316, 1994.

\bibitem{DBLP:journals/siamcomp/MaratheHSR98}
M.~V. Marathe, H.~B. {Hunt III}, R.~E. Stearns, and V.~Radhakrishnan.
\newblock Approximation {A}lgorithms for pspace-{H}ard {H}ierarchically and
  {P}eriodically {S}pecified {P}roblems.
\newblock {\em {SIAM} J. Comput.}, 27(5):1237--1261, 1998.

\bibitem{DBLP:journals/tcs/MaratheRHR97}
M.~V. Marathe, V.~Radhakrishnan, H.~B. {Hunt III}, and S.~S. Ravi.
\newblock Hierarchically {S}pecified {U}nit {D}isk {G}raphs.
\newblock {\em Theor. Comput. Sci.}, 174(1-2):23--65, 1997.

\bibitem{DBLP:conf/sac/Martinez-PrietoFC12}
M.~A. Mart{\'{\i}}nez{-}Prieto, J.~D. Fern{\'{a}}ndez, and R.~C{\'{a}}novas.
\newblock Compression of {RDF} dictionaries.
\newblock In {\em {SAC}}, pages 340--347, 2012.

\bibitem{DBLP:journals/jair/Nevill-ManningW97}
C.~G. Nevill{-}Manning and I.~H. Witten.
\newblock Identifying {H}ierarchical {S}tructure in {S}equences: {A}
  {L}inear-{T}ime {A}lgorithm.
\newblock {\em JAIR}, 7:67--82, 1997.

\bibitem{DBLP:journals/iandc/PapadimitriouY86}
C.~H. Papadimitriou and M.~Yannakakis.
\newblock A {N}ote on {S}uccinct {R}epresentations of {G}raphs.
\newblock {\em Information and Control}, 71(3):181--185, 1986.

\bibitem{Peshkin07_graphSequitour}
L.~Peshkin.
\newblock Structure induction by lossless graph compression.
\newblock In {\em {DCC}}, pages 53--62, 2007.

\bibitem{DBLP:conf/slate/SwachaG15}
J.~Swacha and S.~Grabowski.
\newblock {OFR:} {A}n {E}fficient {R}epresentation of {RDF} {D}atasets.
\newblock In {\em {SLATE}}, pages 224--235, 2015.

\bibitem{DBLP:journals/siamcomp/Tarjan72}
R.~E. Tarjan.
\newblock {D}epth-{F}irst {S}earch and {L}inear {G}raph {A}lgorithms.
\newblock {\em {SIAM} J. Comput.}, 1(2):146--160, 1972.

\bibitem{DBLP:journals/cacm/Thompson68}
K.~Thompson.
\newblock Regular {E}xpression {S}earch {A}lgorithm.
\newblock {\em Commun. {ACM}}, 11(6):419--422, 1968.

\bibitem{DBLP:journals/concurrency/UrbaniMDSB13}
J.~Urbani, J.~Maassen, N.~Drost, F.~J. Seinstra, and H.~E. Bal.
\newblock Scalable {RDF} data compression with {M}ap{R}educe.
\newblock {\em Concurrency and Computation: Practice and Experience},
  25(1):24--39, 2013.

\bibitem{WeisfeilerLehman68}
B.~Weisfeiler and A.~A. Lehman.
\newblock A reduction of a graph to a canonical form and an algebra arising
  during this reduction.
\newblock {\em Nauchno-Technicheskaya Informatsia}, Seriya 2(9):12--16, 1968.
\newblock (in Russian).

\bibitem{Wood12_querySurvey}
P.~T. Wood.
\newblock {Q}uery {L}anguages for {G}raph {D}atabases.
\newblock {\em SIGMOD Record}, pages 50--60, 2012.

\bibitem{DBLP:conf/synasc/Simecek09}
I.~Šimeček.
\newblock Sparse {M}atrix {C}omputations {U}sing the {Q}uadtree {S}torage
  {F}ormat.
\newblock In {\em SYNASC}, pages 168--173, 2009.

\end{thebibliography}

\end{document}